\numberwithin{equation}{section}
\newtheorem{theorem}{Theorem}[section]
\newtheorem{lemma}[theorem]{Lemma}
\newtheorem{proposition}[theorem]{Proposition}
\newtheorem{remark}[theorem]{Remark}
\newtheorem{corollary}[theorem]{Corollary}
\newcommand{\p}{\partial}
\newcommand{\Pf}{\text{Pf}}
\newcommand{\tp}{\tilde{P}}
\newcommand{\ts}{\tilde{S}}
\newcommand{\xitt}{\xi(t'-t,z^{-1})}
\newcommand{\xit}{\xi(t-t',z)}
\newcommand{\di}{\text{diag}}
\newcommand{\htau}{\hat{\tau}}
\newcommand{\hr}{\hat{\rho}}
\newcommand{\hs}{\hat{\sigma}}
\begin{document}

\title[Rank shift conditions and reductions of 2d-Toda theory]{Rank shift conditions and reductions of 2d-Toda theory}
\author{Shi-Hao Li}
\address{ School of Mathematical and Statistics, ARC Centre of Excellence for Mathematical and Statistical Frontiers, The University of Melbourne, Victoria 3010, Australia}
\email{shihao.li@unimelb.edu.au}
\author{Guo-Fu Yu}
\address{School of Mathematical Sciences, Shanghai Jiaotong University, People's Republic of China.}
\email{gfyu@sjtu.edu.cn}

\date{}

\dedicatory{}
\subjclass[2010]{37K10, 15A23}
\keywords{Moment matrix, Reductions on 2d-Toda hierarchy, shift conditions}

\begin{abstract}
This paper focuses on different reductions of 2-dimensional (2d-)Toda hierarchy. Symmetric and skew symmetric moment matrices are firstly considered, resulting in the differential relations between symmetric/skew symmetric tau functions and 2d-Toda's tau functions respectively. Furthermore, motivated by the Cauchy two-matrix model and Bures ensemble from random matrix theory, we study the rank one shift condition in symmetric case and rank two shift condition in skew symmetric case, from which the C-Toda hierarchy and B-Toda hierarchy are found, together with their special Lax matrices and integrable structures.
\end{abstract}

\maketitle
\section{Introduction}
The studies in random matrix theory and classic integrable systems promote the development of both communities, and the orthogonal polynomials play an important role to connect these two totally different subjects. For example, in \cite[\S 2]{deift00}, it is shown that these polynomials can lead to an iso-spectral flow of the Toda lattice, while the normalisation factors of the orthogonal polynomials can be expressed in terms of the partition function of the unitary invariant Hermitian matrix model. Later on, several novel random matrix models have been found in the course of doing analysis of the classical integrable systems. An example is the appearance of the Cauchy two-matrix model. This model was proposed in the studies of the peakon solutions of the Degasperis-Procesi equation and its related Hermite-Pad\'e approximation problem \cite{lundmark2005}. 

To some extent, the relation between random matrix and integrable system can be described in terms of the partition function and tau functions. It has been realised that the partition function of the coupled two-matrix model could be regarded as the tau function of 2d-Toda hierarchy \cite{adler19992}, and the partition function of symplectic ensemble and orthogonal ensemble could be thought as the tau functions of Pfaff lattice and some differential-difference equations \cite{adler00,adler2002,hu06}. It was also found that the partition function of Bures ensemble could be viewed as the tau function of BKP hierarchy \cite{chang182,hu17,orlov16} and that of the Cauchy two-matrix model is related to the CKP hierarchy \cite{li2019}. The average characteristic polynomials (some of them are not orthogonal!) in matrix models are used to find the correspondence \cite{chang182,li2019}. Though effective, only few members of these hierarchies could be explicitly computed. For this reason, another method called the ``moment matrix approach'' is employed to conquer the difficulty in this paper, to reveal more information about these integrable hierarchies behind the Cauchy two-matrix model as well as the Bures ensemble.

The moment matrix approach is based on the group factorization theory and iso-spectral analysis, developed by Adler and van Moerbeke. An overall review could be seen in \cite{adler18}. The basic idea is that from the moment matrix, one can obtain the dressing operators, wave functions and tau functions by the matrix decomposition. To the best of our knowledge, the most general case was considered in \cite{adler19992}, motivated by the coupled two matrix model, 
showing that the tau functions of 2d-Toda hierarchy could be generated by a very general bi-moment matrix $m_\infty$
with evolutions $\p_{t_n}m_\infty=\Lambda^n m_\infty$ and $\p_{s_n}m_\infty=-m_\infty \Lambda^{\top n}$.
Later on, some reduction theory about the 2d-Toda hierarchy have been considered, motivated by different random matrix models. In the literatures, there have been some examples and great success has been achieved. For instance,
\begin{itemize}
\item As observed in \cite{adler97}, the constraint $\Lambda^i m_\infty=m_\infty \Lambda^{\top i}$ leads the 2d-Toda hierarchy to the 1d-Toda hierarchy, in which the time flows $v_n=t_n+s_n$ keep invariant and $u_n=t_n-s_n$ would make contributions. In fact, this constraint on the moment matrix indicates that the matrix elements satisfy the relation $m_{k+i,l}=m_{k,l+i}$, amounting to a Hankel matrix. It is well known that the 1d-Toda's tau functions could be realised by the time-dependent partition function of the Hermitian matrix model with unitary invariance \cite{gerasimov90}, which could also be expressed in terms of the Hankel determinants;

\item The case considered in \cite{adler1999, adler2002} is the skew symmetric reduction on the moment matrix such that $m_\infty^\top=-m_\infty$. Obviously, the tau functions defined by the skew symmetric moment matrix should be written as Pfaffians rather than determinants due to the skew symmetry. Applying the commuting flows $\p_{t_n}m_\infty=\Lambda^n m_\infty+m_\infty \Lambda^{\top n}$ to the moment matrix leads to the Pfaff lattice via the skew-Borel decomposition and the integrability follows from the Adler-Kostant-Symes theorem \cite{adler1999,adler18}. The Dyson's symplectic and orthogonal ensembles play important roles in this case. Moreover, the interrelation between Dyson's $\beta$-ensemble when $\beta=1,2,4$ implies the relationship between the Toda lattice and Pfaff lattice via the dressing operator and spectral problem, see \cite{adler00,adler02};
\item The Toeplitz matrix/determinant appears in the studies of the unitary invariant ensemble on the unit circle and random permutations, which shows the reduction on the moment matrix $\Lambda m_\infty \Lambda^{\top}=m_\infty$ (or equally $m_{i,j}:=m_{i-j}$). In fact, the Toeplitz moment matrix is closely related to the integrable systems. If the evolutions $\p_{t_n}m_i=m_{i+n}$ and $\p_{s_n}m_i=-m_{i-n}$ are permitted, there are some knwon examples including the $2+1$ dimensional generalisation of Ablowitz-Ladik lattice and the $2+1$ dimensional AKNS hierarchy \cite{adler03}, allowing some further interplays with the Calabi-Yau threefolds \cite{brini17}. Moreover, if the time evolutions $\p_{t_n}m_{i}=m_{i+n}+m_{i-n}$ are permitted, one can find the corresponding Schur flow \cite{faybusovich99,forrester06,mukaihira02}.  
\end{itemize}
The considerations of subsequent developments within the random matrix theory suggest further interplays. In this paper, we focus on two different random matrix models and corresponding integrable hierarchies. The first one is the Cauchy two-matrix model defined on the configuration space $\mathbb{R}_+\times\mathbb{R}_+$, as specified by an eigenvalue probability density function \cite{bertola2009,forrester16}
\begin{align*}
\frac{\prod_{1\leq i<j\leq n}(x_j-x_i)^2(y_j-y_i)^2}{\prod_{i,j=1}^n (x_i+y_j)}\prod_{i,j=1}^n\omega_1(x_i)\omega_2(y_j)
\end{align*}
with some non-negative weight functions $\omega_1$ and $\omega_2$. This model could be considered as a two-field model, where the interaction of the fields is described by the Cauchy kernel.
It has been shown that in the symmetric case $\omega_1=\omega_2$, the time-dependent partition function of this model could be regarded as the tau function of the CKP hierarchy if we introduce an infinite series of time parameters $\{t_i\}_{i\in\mathbb{N}}$ into the weight function and assume that $\p_{t_i}\omega(x;t)=x^i\omega(x;t)$ \cite{li2019}. At the same time, there is an integrable lattice called the C-Toda lattice, whose tau function can be expressed in terms of the Cauchy two-matrix model. It has also been found that the quantity of the partition function could be expressed as a Gram determinant with symmetric elements. Therefore, it motivates us to consider the (semi-)discrete hierarchy if the initial moment matrix is assumed to be symmetric. It would be demonstrated that the symmetric reduction is so strong that the bilinear form of tau functions would be broken. Luckily, the moments of the Cauchy two-matrix model give us another constraint, called the rank one shift condition, finally resulting in the C-Toda hierarchy.

The other model closely related to the Cauchy-two matrix model is the Bures ensemble \cite{bures69}. The significances of this model are on the two folds in physics. One is in quantum information theory---this model is a metric induced ensemble, induced by the Bures measure \cite{sommers03}. The other is in random matrix theory referred to as the  $O(1)$ model \cite{bertola2009}, which is a two-matrix model for $n\times n$ positive definite Hermitian matrices $M$ and $A$ with distribution 
\begin{align*}
\exp\left(
-n\text{Tr}(V(M)+MA^2)
\right)dMdA,
\end{align*}
where the potential function $V$ is assumed to be non-negative. The integration over the Gaussian variable $A$ can be performed and thus yield a measure over the eigenvalues $\{x_i,\,i=1,\cdots,n\}$ of $M$ on $\mathbb{R}^n_+$, admitting the form
\begin{align*}
\prod_{1\leq i<j\leq n}\frac{(x_j-x_i)^2}{x_j+x_i}\prod_{i=1}^ne^{-nV(x_i)}.
\end{align*}
The time-dependent partition function of this model could be regarded as the tau function of BKP hierarchy \cite{hu17,orlov16} as well as several integrable lattices \cite{chang182}, including the interesting B-Toda lattice considered in \cite{hirota01}. Moreover, as tau functions could be written into Pfaffian, it inspires us to consider the skew symmetric reduction on the moment matrix and furthermore to find out the whole hierarchy of the B-Toda lattice. Although skew symmetric reduction has already been considered in \cite{adler1999,adler02,adler2002}, the known results are related to the DKP hierarchy (or Pfaff lattice hierarchy) according to the classification by the Kyoto school \cite{jimbo1983}. The connections between the Pfaff lattice hierarchy and the large BKP hierarchy proposed by Kac and van de Leur \cite{kac97} was left as an ongoing problem in \cite{adler2002}. Therefore, in this paper, we'd like to show that by introducing odd-indexed Pfaffian tau functions, the large BKP hierarchy is the same with Pfaff lattice hierarchy in the sense of wave functions.  Furthermore, the moments of the Bures ensemble would indicate a special rank two shift condition and we would give a novel sub-hierarchy of the Pfaff lattice with features of the BKP hierarchy. 

The paper is organized as follows. In Section \ref{sec:review}, we give a brief review of the moment matrix decomposition and group factorisation which is used for the upcoming sections. The symmetric reduction is considered in Section \ref{sec:symmetric}. We firstly consider the symmetric moment matrix and see what would happen if the time evolutions satisfy the commuting time flows $\p_{t_n}m_\infty=\Lambda^nm_\infty+m_\infty\Lambda^{\top n}$. It will be shown that the bilinear equation for the wave functions are quite simple, but it is not the case for tau functions---the symmetric reduction breaks down the bilinear form. Fortunately, the rank one shift condition would lead us to a local spectral problem, and further to a C-Toda hierarchy with auxiliary variables.
In Section \ref{sec:skewsymmetric}, we consider the skew symmetric moment matrix and Pfaffian tau functions. The most general skew symmetric reduction and the corresponding Pfaff lattice are summarised in \ref{4.1}. In \ref{4.2}, we would like to demonstrate an equivalent skew symmetric decomposition, resulting in the partial skew orthogonal polynomials and odd-indexed wave functions. With the introduction of odd-indexed Pfaffian tau functions, it is shown that the Pfaff lattice is exactly the large BKP hierarchy. However, the tau functions of the Pfaff lattice hierarchy lack the features of $B$-type equation---the Gram-type Pfaffian elements. Therefore, in the following part, we take the rank two shift condition into account and demonstrate how to derive the B-Toda hierarchy from the 2d-Toda theory. Apart from being embedded into 2d-Toda theory, the rank two shift condition implies a particular Fay identity, indicating the recurrence relation for the partial skew orthogonal polynomials, and thus give a novel sub-hierarchy of Pfaff lattice hierarchy.

\section{Review of 2d-Toda theory}\label{sec:review}
For self-consistency, it is necessary to give a brief review to 2d-Toda theory since the article is mainly about the reductions of 2d-Toda theory. This section includes the following three parts: moment matrix decomposition, time evolutions and wave functions and tau functions. Please refer to \cite{adler19992,adler2002,takasaki18} for more details.

\subsection{Moment Matrix Decomposition}
Moment matrix is the linking node of different mathematical objects such as integrable systems, orthogonal polynomials and random matrix theory. We call a semi-infinite matrix $m_\infty=(m_{i,j})_{i,j\in\mathbb{N}}$ a moment matrix if all of its principal minors are nonzero. Attributed to this fact, the moment matrix admits the LU decomposition (Gauu-Borel decomposition) $m_\infty=S_1^{-1}S_2$, where\footnote{Please note we mainly consider the semi-infinite case throughout the paper, given by the algebra $\mathcal{A}^{\frac{\infty}{2}}=\{(s_{i,j})_{i,j\in\mathbb{N}},s_{i,j}\in\mathbb{C}\}$. Usually we denote the shift operator $\Lambda=(\delta_{i+1.j})_{i,j\in\mathbb{N}}$ and $\Lambda^{-1}=\Lambda^\top=(\delta_{i,j+1})_{i,j\in\mathbb{N}}$. The shift operator and its inverse are connected by $\Lambda\Lambda^{-1}=I$ and $\Lambda^{-1}\Lambda=I-E$ with $E_{i,j}=\delta_{i,0}\delta_{j,0}$. In this case, $S_1$ and $S_2$ can be expressed in a formal way as
$S_1=\Lambda^0+\Lambda^{-1}a_{-1}+\Lambda^{-2}a_{-2}+\cdots$ and $S_2=\Lambda^0a_0+\Lambda^1 a_1+\Lambda^2 a_2+\cdots$ with $a_i,\, i\in\mathbb{Z}$ vectors.}
\begin{subequations}
\begin{align}
S_1\in& \mathfrak{G}_-=\{\text{lower triangular matrices with diagonals $1$}\},\label{g-}\\
S_2\in \mathfrak{G}_+&=\{\text{upper triangular matrices with nonzero diagonals}\},\label{g+}
\end{align}
\end{subequations}
and the corresponding Lie algebras $\mathfrak{g}_+$ and $\mathfrak{g}_-$. From the basic computation of matrix decomposition \cite{manas19}, one could compute
\begin{align*}
S_1=\left(\begin{array}{cccc}
1&&&\\
-\tau_{10}/\tau_1&1&&\\
\tau_{20}/\tau_2&-\tau_{21}/\tau_2&1&\\
\vdots&\vdots&\vdots&\ddots
\end{array}
\right)
\end{align*}
with $\tau_n=\det(m_{i,j})_{i,j=0}^{n-1}$ and $\tau_{ij}$ the determinant for the $i$-th principal minor add the next row and delete its $j$-th row. For example, 
$$\tau_{10}=m_{10},\quad \tau_{20}=\det\left|\begin{array}{cc}
m_{10}&m_{11}\\
m_{20}&m_{21}
\end{array}
\right|,\quad \tau_{21}=\det\left|\begin{array}{cc}
m_{00}&m_{01}\\
m_{20}&m_{21}
\end{array}
\right|,\quad \cdots.
$$
The diagonal of $S_2$ is important and we denote it as $h$, where $$h=\left(
\frac{\tau_1}{\tau_0},\frac{\tau_2}{\tau_1},\cdots
\right),\quad \text{ with $\tau_0=1.$}$$
Furthermore, if we define Lax matrices $L_1=S_1\Lambda S_1^{-1}$ and $L_2=S_2\Lambda^\top S_2^{-1}$, then $P(x)=S_1\chi(x)$, $Q(y)=S_2^{-\top}\chi(y)$ with $\chi(x)=(1,x,x^2,\cdots)$ are the eigenfunctions of $L_1$ and $L_2$ respectively, enjoying the spectral problems
\begin{align*}
L_1P(x)=xP(x),\quad L_2^\top Q(y)=yQ(y)
\end{align*}
and orthogonal relation
\begin{align*}
\langle P(x), Q^\top(y)\rangle=I,
\end{align*}
where $I$ is the identity matrix and the inner product is defined by $\langle x^i,y^j\rangle=m_{i,j}$.

\subsection{Time Evolutions}
Let's consider a time-dependent moment matrix $m_\infty(t,s)$, which admits the time evolutions
\begin{align*}
\frac{\p m_\infty}{\p t_n}=\Lambda^n m_\infty,\quad \frac{\p m_\infty}{\p s_n}=-m_\infty \Lambda^{\top n}.
\end{align*}
Given the initial value $m_\infty(0,0)$, the unique solution of the evolution equation can be written as
\begin{align}\label{solution}
m_\infty(t,s)=e^{\sum_{i=1}^\infty t_i\Lambda^i}m_{\infty}(0,0)e^{-\sum_{i=1}^\infty s_i\Lambda^{\top i}}.
\end{align}
Obviously, if $m_\infty(t,s)$ is non-degenerate for any parameters $t,\,s\in \mathbb{R}$, then the Gauss-Borel decomposition is valid, and $\mathfrak{G}_\pm$ defined in \eqref{g-} and \eqref{g+} form the Lie subgroups of the general linear group with corresponding Lie algebras $\mathfrak{g}_\pm$. Therefore, the equations
\begin{align*}
&S_1\frac{\p m_\infty}{\p t_n}S_2^{-1}=S_1\frac{\p S_1^{-1}}{\p t_n}+\frac{\p S_2}{\p t_n}S_2^{-1}=-\frac{\p S_1}{\p t_n}S_1^{-1}+\frac{\p S_2}{\p t_n}S_2^{-1}\in \mathfrak{g}_-\oplus\mathfrak{g}_+,\\
&S_1\frac{\p m_\infty}{\p t_n}S_2^{-1}=S_1(\Lambda^n m_\infty)S_2^{-1}=S_1\Lambda^n S_1^{-1}=L_1^n
\end{align*}
lead to $$\frac{\p S_1}{\p t_n}=-(L_1^n)_-S_1,\quad \text{and}\quad  \frac{\p S_2}{\p t_n}=(L_1^n)_+S_2.$$ Here $-$ (respectively $+$) represents the lower triangular part without diagonal (respectively upper triangular part with diagonals) according to Lie algebra splitting. Similarly, one could obtain the result for time parameter $s$ in the same approach. Therefore, from the definition of Lax matrices, it is easy to obtain the Lax representation
\begin{align*}
\frac{\p L_1}{\p t_n}=[L_1,(L_1^n)_-],\quad \frac{\p L_2}{\p s_n}=[L_2,(L_2^n)_+].
\end{align*}

\subsection{Wave Functions and Tau Functions}
Now we'd like to introduce the wave functions and tau functions, to show the Lax representation is equal to the bilinear form of wave functions and tau functions. 

With the notation $\xi(t,z)=\sum_{k=1}^\infty t_kz^k$, let's define the wave functions
\begin{align}\label{wave}
\Phi_1(t,s;z)=e^{\xi(t,z)}S_1\chi(z),\quad \Phi_2(t,s;z)=e^{\xi(s,z^{-1})}S_2\chi(z)
\end{align}
and the dual wave functions 
\begin{align}\label{dualwave}
\Phi_1^*(t,s;z)=e^{-\xi(t,z)}S_1^{-\top}\chi(z^{-1}),\quad \Phi_2^*(t,s;z)=e^{-\xi(s,z^{-1})}S_2^{-\top}\chi(z^{-1}).
\end{align}
It is remarkable that these wave functions satisfy the following bilinear equations \cite{adler97,ueno84}
\begin{align}\label{bltoda}
\oint_{C_\infty} \Phi_{1,n}(t,s;z)\Phi_{1,m}^*(t',s';z)\frac{dz}{2\pi i z}=\oint_{C_0}\Phi_{2,n}(t,s;z)\Phi_{2,m}^*(t',s';z)\frac{dz}{2\pi i z},
\end{align}
where $C_\infty$ and $C_0$ stand for the circles surrounding infinity and zero respectively. Furthermore, if one defines tau functions
by
\begin{align}\label{tau} 
\tau_n(t,s)=\det(m_{i,j}(t,s))_{i,j=0}^{n-1},
\end{align} then one could find the relations between wave functions and tau functions via ($\tilde{\p}_t=(\p_{t_1},\frac{1}{2}\p_{t_2},\cdots)$)
\begin{align*}
&\Phi_1(t,s;z)=\left(e^{\xi(t,z)}\frac{e^{-\xi(\tilde{\p}_t,z^{-1})}\tau_n}{\tau_n}z^n\right)_{n\geq0},\quad \Phi^*_1(t,s;z)=\left(e^{-\xi(t,z)}\frac{e^{\xi(\tilde{\p}_t,z^{-1})}\tau_{n+1}}{\tau_{n+1}}z^{-n}\right)_{n\geq0},\\
&\Phi_2(t,s;z)=\left(e^{\xi(s,z^{-1})}\frac{e^{-\xi(\tilde{\p}_s,z^{})}\tau_{n+1}}{\tau_n}z^n\right)_{n\geq0},\quad
\Phi_2^*(t,s;z)=\left(e^{-\xi(s,z^{-1})}\frac{e^{\xi(\tilde{\p}_s,z)}\tau_n}{\tau_{n+1}}z^{-n}\right)_{n\geq0}.
\end{align*}
If we introduce the notation $[z]=(z,\frac{z^2}{2},\frac{z^3}{3},\cdots)$, then the bilinear identities \eqref{bltoda} could be written into tau-functions as
\begin{align}
\begin{aligned}\label{bilinear}
&\oint_{C_\infty}\tau_n(t-[z^{-1}],s)\tau_{m+1}(t'+[z^{-1}],s')e^{\xi(t-t',z)}z^{n-m-1}dz\\
&=\oint_{C_0}\tau_{n+1}(t,s-[z])\tau_{m}(t',s'+[z])e^{\xi(s-s',z^{-1})}z^{n-m-1}dz.
\end{aligned}
\end{align}

\section{Symmetric reduction and C-Toda hierarchy}\label{sec:symmetric}

In this section, we consider the symmetric reduction to the moment matrix, with our aim being to develop a theory of the symmetric tau function theory analogous to that done in \cite{adler1999,adler02,adler2002} for the skew symmetric case. The symmetric Cholesky decomposition would play an important role in the discussion of this part. Therefore, in subsection \ref{3.1}, we consider the symmetric initial moment matrix and find out that there are some relations between the pairs of dressing operators $S_1$ and $S_2$, wave functions $\Phi$ and $\Phi^*$ and Lax operators $L_1$ and $L_2$. Furthermore, the symmetric features and the commuting flow $\p_{t_n}m_\infty=\Lambda^nm_\infty+m_\infty\Lambda^{\top n}$ result in the relations between the symmetric tau functions and 2d-Toda's tau functions (See Proposition \ref{prop1}). Unfortunately, the relations between symmetric tau functions and 2d-Toda tau functions are nonlinear, thus breaking down the bilinear structures. The failure of bilinear forms ask us for some extra constraints. Motivated by the Cauchy two-matrix model, a rank one shift condition is given. This condition provides an explicit relation between auxiliary functions and tau functions, and thus help us to write down the whole C-Toda hierarchy successfully. The subsection \ref{sub:polynomial} is devoted to a detailed discussion about why the rank one shift condition works and finally demonstrate that the rank one shift condition is equal to a local spectral problem. The Lax integrability of the whole hierarchy is considered in the end.

\subsection{The Symmetric Reduction on 2d-Toda Hierarchy}\label{3.1}
From \eqref{solution}, we know that the most general moment matrix has the form
\begin{align*}
m_\infty(t,s)=e^{\sum_{i=1}^\infty t_i\Lambda^i}m_\infty(0,0)e^{-\sum_{i=1}^\infty s_i\Lambda^{\top i}},
\end{align*}
and two constraints are considered here. The first is to deal with the symmetric initial data $m_\infty(0,0)$. To assume that $m^\top_\infty(0,0)=m_\infty(0,0)$, one could show the relationship between $m_\infty(t,s)$ and $m_\infty(-s,-t)$, and further to show the relations between the corresponding tau functions, Lax operators and wave functions. The second part is to impose the time constraint `$s=-t$', from which one could induce commuting time flows $\p_{t_n}m_\infty=\Lambda^n m_\infty+m_\infty \Lambda^{\top n}$. In this case, tau function depends on time parameters $\{t_k\}_{k=1,2,\cdots}$ only. Some relations between symmetric tau functions and 2d-Toda's tau functions are obtained.

\subsubsection{Symmetric Reduction on Moment Matrix}
Consider the symmetric initial moment matrix $m_\infty(0,0)$ satisfies $m_\infty^\top(0,0)=m_\infty(0,0)$. It is easy to show
$$m_\infty^\top(t,s)=e^{-\sum_{i=1}^\infty s_i\Lambda^i}m^\top(0,0)e^{\sum_{i=1}^\infty t_i\Lambda^{\top i}}=m_\infty(-s,-t).$$ Moreover, the definition of tau function $\tau_n(t,s)=\det(m_{i,j}(t,s))_{i,j=0}^{n-1}$ gives us $\tau_n(-s,-t)=\tau_n(t,s)$, and it leads to
$$h(t,s)=\left(
\frac{\tau_1(t,s)}{\tau_0(t,s)},\frac{\tau_2(t,s)}{\tau_1(t,s)},\cdots\right)=h(-s,-t).$$ Furthermore, we can state the following proposition under this symmetric reduction.

\begin{proposition}\label{symmetricinitial}
For the symmetric initial moment matrix, the following equations are satisfied
\begin{enumerate}
\item $h^{-1}(t,s)S_1(t,s)=S_2^{-T}(-s,-t)$;
\item $h^{-1}(t,s)\Phi_1(t,s;z)=\Phi_2^*(-s,-t;z^{-1})$ and $h^{-1}(t,s)\Phi_2(t,s;z)=\Phi_1^*(-s,-t;z^{-1})$;
\item $L_1(t,s)=h(-s,-t)L_2^\top(-s,-t)h^{-1}(-s,-t)$.
\end{enumerate}
\end{proposition}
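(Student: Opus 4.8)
The plan is to derive all three identities from the single symmetric relation $m_\infty^\top(t,s)=m_\infty(-s,-t)$ together with the uniqueness of the Gauss--Borel decomposition. First I would apply the transpose to the factorization $m_\infty(t,s)=S_1^{-1}(t,s)S_2(t,s)$, obtaining $m_\infty^\top(t,s)=S_2^\top(t,s)S_1^{-\top}(t,s)$. Since $S_1\in\mathfrak{G}_-$ has unit diagonals, $S_1^{-\top}(t,s)$ lies in $\mathfrak{G}_+$ with unit diagonals, while $S_2^\top(t,s)$ lies in the group of lower-triangular matrices with nonzero diagonals; factoring out the diagonal $h(t,s)$ of $S_2$ rewrites this as $m_\infty^\top(t,s)=\bigl(S_2^\top(t,s)h^{-1}(t,s)\bigr)\bigl(h(t,s)S_1^{-\top}(t,s)\bigr)$, which is a genuine $\mathfrak{G}_-\cdot\mathfrak{G}_+$ decomposition. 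On the other hand, $m_\infty(-s,-t)=S_1^{-1}(-s,-t)S_2(-s,-t)$ is its Gauss--Borel decomposition. Matching the two by uniqueness gives $S_1^{-1}(-s,-t)=S_2^\top(t,s)h^{-1}(t,s)$ and $S_2(-s,-t)=h(t,s)S_1^{-\top}(t,s)$; either one, after transposing and using $h^\top=h$, yields statement (1): $h^{-1}(t,s)S_1(t,s)=S_2^{-\top}(-s,-t)$.

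For statement (2), I would substitute (1) into the definitions \eqref{wave} and \eqref{dualwave}. By definition $\Phi_1(t,s;z)=e^{\xi(t,z)}S_1(t,s)\chi(z)$, so $h^{-1}(t,s)\Phi_1(t,s;z)=e^{\xi(t,z)}S_2^{-\top}(-s,-t)\chi(z)$; comparing with $\Phi_2^*(-s,-t;z^{-1})=e^{-\xi(-s,z^{-1})}S_2^{-\top}(-s,-t)\chi(z)$ and noting $\xi(-s,z^{-1})=-\xi(s,z^{-1})$ is not quite what appears --- here the point is that evaluating $\Phi_2^*$ at the spectral parameter $z^{-1}$ swaps $\chi(z^{-1})\mapsto\chi(z)$ and the exponential becomes $e^{-\xi(-s,(z^{-1})^{-1})}=e^{\xi(s,z)}$, so one must also track that the first time-argument of $\Phi_1$ is $t$ whereas that of $\Phi_2^*(-s,-t;\cdot)$ is $-t$; careful bookkeeping of which slot carries which sign gives the first identity. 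The second identity of (2) follows the same way from the transpose/inverse form of (1), namely $h(t,s)S_2^{-\top}(t,s)=S_1(-s,-t)$, applied to $\Phi_2(t,s;z)=e^{\xi(s,z^{-1})}S_2(t,s)\chi(z)$ and matched against $\Phi_1^*(-s,-t;z^{-1})=e^{-\xi(-s,z)}S_1^{-\top}(-s,-t)\chi(z^{-1})$; one should double-check whether an extra factor of $h$ versus $h^{-1}$ appears, since $\Phi_2$ is built from $S_2$ (nontrivial diagonal) while $\Phi_1^*$ is built from $S_1^{-\top}$ (unit diagonal), and reconcile it using (1).

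For statement (3), I would start from the Lax operators $L_1(t,s)=S_1(t,s)\Lambda S_1^{-1}(t,s)$ and $L_2(t,s)=S_2(t,s)\Lambda^\top S_2^{-1}(t,s)$. Using (1) in the form $S_1(t,s)=h(t,s)S_2^{-\top}(-s,-t)$, I compute
\begin{align*}
L_1(t,s)&=h(t,s)S_2^{-\top}(-s,-t)\,\Lambda\,S_2^{\top}(-s,-t)h^{-1}(t,s)\\
&=h(t,s)\bigl(S_2(-s,-t)\Lambda^{\top}S_2^{-1}(-s,-t)\bigr)^{\top}h^{-1}(t,s)=h(t,s)L_2^{\top}(-s,-t)h^{-1}(t,s),
\end{align*}
and then replace $(t,s)$ by the symmetric argument so that $h(t,s)$ becomes $h(-s,-t)$ via the relation $h(t,s)=h(-s,-t)$ already recorded before the proposition. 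Assembling this gives exactly $L_1(t,s)=h(-s,-t)L_2^{\top}(-s,-t)h^{-1}(-s,-t)$.

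The main obstacle I anticipate is purely the sign- and slot-bookkeeping in part (2): the wave functions \eqref{wave}--\eqref{dualwave} carry exponential prefactors in $t$ and $s$ separately, and evaluating a dual wave function at $z^{-1}$ rather than $z$ simultaneously inverts the spectral parameter and (because of $\xi(t,z^{-1})$ appearing with $z^{-1}$) effectively re-inverts it inside the exponential, so one has to be scrupulous about the order in which the substitutions $t\leftrightarrow -s$, $s\leftrightarrow -t$, and $z\leftrightarrow z^{-1}$ are applied. The algebra behind parts (1) and (3) is routine once the uniqueness of Gauss--Borel factorization is invoked; the only subtlety there is correctly extracting the diagonal $h$ when transposing $S_2$, which I have flagged above.
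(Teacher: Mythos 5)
Your overall route is the same as the paper's: transpose one Gauss--Borel factorization, peel the diagonal $h$ off $S_2^\top$ to land back in $\mathfrak{G}_-\cdot\mathfrak{G}_+$, and invoke uniqueness. Parts (1) and (3) are correct as written; the paper transposes the $(-s,-t)$ factorization and matches it against the $(t,s)$ one, you do the mirror image, which is equivalent (and your version of (1) even comes out directly without invoking $h(t,s)=h(-s,-t)$, which you then correctly use only in (3)). The one real defect is in part (2): you never close the bookkeeping you flag, and both intermediate formulas you do write down are wrong. Evaluating $\Phi_2^*(t',s';w)=e^{-\xi(s',w^{-1})}S_2^{-\top}(t',s')\chi(w^{-1})$ at $(t',s';w)=(-s,-t;z^{-1})$ puts $s'=-t$ (not $-s$) into the exponential and $w^{-1}=z$ into $\chi$, so
\begin{align*}
\Phi_2^*(-s,-t;z^{-1})=e^{-\xi(-t,z)}S_2^{-\top}(-s,-t)\chi(z)=e^{\xi(t,z)}h^{-1}(t,s)S_1(t,s)\chi(z)=h^{-1}(t,s)\Phi_1(t,s;z),
\end{align*}
with no residual sign and no stray power of $h$: the diagonal is entirely absorbed by (1). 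Likewise $\Phi_1^*(-s,-t;z^{-1})=e^{-\xi(-s,z^{-1})}S_1^{-\top}(-s,-t)\chi(z)=e^{\xi(s,z^{-1})}S_1^{-\top}(-s,-t)\chi(z)$ (you wrote $e^{-\xi(-s,z)}$ and $\chi(z^{-1})$), and the transposed--inverted form of (1), $S_1^{-\top}(-s,-t)=h^{-1}(t,s)S_2(t,s)$, turns this into $h^{-1}(t,s)e^{\xi(s,z^{-1})}S_2(t,s)\chi(z)=h^{-1}(t,s)\Phi_2(t,s;z)$. This is exactly the computation the paper performs; once the substitution $(t,s;z)\mapsto(-s,-t;z^{-1})$ is applied to the correct slot of each dual wave function, the ``extra factor of $h$ versus $h^{-1}$'' you worry about does not arise.
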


\begin{proof}
Starting with $m_\infty^\top(t,s)=m_\infty(-s,-t)$ and Borel decomposition $m_\infty(t,s)=S^{-1}_1(t,s)S_2(t,s)$, we know
\begin{align*}
S_1^{-1}(t,s)S_2(t,s)=S^\top_2(-s,-t)S_1^{-\top}(-s,-t).
\end{align*}
Recall $S_1$ and $S_2$ belong to the groups $\mathfrak{G}_\pm$, therefore, we can rewrite the right hand side as
\begin{align*}
S^\top_2(-s,-t)S_1^{-\top}(-s,-t)=S_2^\top(-s,-t)h^{-1}(-s,-t)h(-s,-t)S_1^{-\top}(-s,-t)
\end{align*}
with $S^\top_2(-s,-t)h^{-1}(-s,-t)\in \mathfrak{G}_-$ and $h(-s,-t)S_1^{-\top}(-s,-t)\in\mathfrak{G}_+$.
Moreover,
from the uniqueness of Borel decomposition, it must satisfy
\begin{align*}
S_1^{-1}(t,s)=S_2^\top(-s,-t)h^{-1}(-s,-t), \quad S_2(t,s)=h(-s,-t)S^{-\top}_1(-s,-t),
\end{align*}
which gives rise to the first equality. 

The second equality is a direct generalisation of the first one if one notices the definition of wave functions \eqref{wave} and their dual \eqref{dualwave} and compute
\begin{align*}
&h^{-1}(t,s)\Phi_1(t,s;z)=h^{-1}(t,s)e^{\xi(t,z)}S_1(t,s)\chi(z)=e^{\xi(t,z)}S_2^{-\top}(-s,-t)\chi(z)=\Phi_2^*(-s,-t;z^{-1}).\\
&h^{-1}(t,s)\Phi_2(t,s;z)=h^{-1}(t,s)e^{\xi(s,z^{-1})}S_2(t,s)\chi(z^{-1})=e^{\xi(s,z^{-1})}S_1^{-\top}\chi(z^{-1})=\Phi_1^*(-s,-t;z^{-1}).
\end{align*}
Finally, a direct computation 
\begin{align*}
L_1(t,s)=S_1(t,s)\Lambda S_1^{-1}(t,s)&=h(-s,-t)S_2^{-\top}(-s,-t)\Lambda S_2^\top(-s,-t)h^{-1}(-s,-t)\\
&=h(-s,-t)L^\top_2(-s,-t)h^{-1}(-s,-t)
\end{align*}
results in the third equation.
\end{proof}

\subsubsection{The ``$s=-t$'' Constraint}
The assumption $s=-t$ shows that the moment matrix depends on parameters $\{t_k\}_{k=1,2,\cdots}$ only, satisfying the time evolutions
\begin{align*}
\p_{t_n}m_{i,j}=m_{i+n,j}+m_{i,j+n},\quad {\text{or}}\quad
\p_{t_n}m_\infty=\Lambda^n m_\infty+m_\infty \Lambda^{\top n}.
\end{align*}
Under this framework, one can define the symmetric tau function $\tilde{\tau}(t)$ by $\tilde{\tau}_n(t):=\tau_n(t,-t)$. To find the integrable hierarchy satisfied by $\{\tilde{\tau}_n(t)\}_{n\in\mathbb{N}}$ with $\tilde{\tau}_0(t)=1$, we need to firstly find the relations between the $\tilde{\tau}_n(t)$ and $\tau_n(t,s)|_{s=-t}$ and the result is shown as the following proposition.
\begin{proposition}\label{prop1}
Regarding the evolutions of the symmetric tau function $\tilde{\tau}_n(t)$ and 2d-Toda's tau function $\tau_n(t,s)|_{s=-t}$, they are connected with each other by
\begin{align}\label{todaandctoda}
\tilde{\tau}_m(t)\tilde{\tau}_m(t+[\alpha]-[\beta])-(\beta-\alpha)^2\tilde{\tau}_{m-1}(t-[\beta])\tilde{\tau}_{m+1}(t+[\alpha])=(\tau_m(t+[\alpha]-[\beta],-t))^2.
\end{align}
\end{proposition}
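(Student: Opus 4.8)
The plan is to turn \eqref{todaandctoda} into a finite-dimensional determinant identity for a symmetric matrix and then reduce that identity to the Jacobi identity for complementary minors (the $2\times 2$ case being Desnanot--Jacobi). The only input from the symmetric reduction is the relation $m_\infty^\top(t,s)=m_\infty(-s,-t)$ recorded above; in particular $N:=m_\infty(t+[\alpha],-t-[\alpha])$ is a \emph{symmetric} matrix, which is all that will be used.

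First I would express every tau function occurring in \eqref{todaandctoda} as a principal minor of a dressing of $N$. From \eqref{solution} one has the shift rules $m_\infty(t+u,s)=e^{\xi(u,\Lambda)}m_\infty(t,s)$ and $m_\infty(t,s+u)=m_\infty(t,s)e^{-\xi(u,\Lambda^\top)}$, together with $e^{\xi([\alpha],\Lambda)}=(1-\alpha\Lambda)^{-1}$ and $e^{-\xi([\beta],\Lambda)}=1-\beta\Lambda$. Writing $\pi=1-\alpha\Lambda$, $B=1-\beta\Lambda$, these give
\begin{align*}
&m_\infty(t,-t)=\pi N\pi^\top,\qquad m_\infty(t+[\alpha]-[\beta],-t-[\alpha]+[\beta])=BNB^\top,\\
&m_\infty(t-[\beta],-t+[\beta])=B\pi N\pi^\top B^\top,\qquad m_\infty(t+[\alpha]-[\beta],-t)=BN\pi^\top,
\end{align*}
so that, with $(\,\cdot\,)_{[k]}$ denoting the top-left $k\times k$ principal minor, $\tilde\tau_m(t)=\det(\pi N\pi^\top)_{[m]}$, $\tilde\tau_m(t+[\alpha]-[\beta])=\det(BNB^\top)_{[m]}$, $\tilde\tau_{m+1}(t+[\alpha])=\det(N)_{[m+1]}$, $\tilde\tau_{m-1}(t-[\beta])=\det(B\pi N\pi^\top B^\top)_{[m-1]}$ and $\tau_m(t+[\alpha]-[\beta],-t)=\det(BN\pi^\top)_{[m]}$. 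Since each of $\pi$, $B$, $B\pi$ couples indices only within distance two, all five minors depend on $N$ only through its top-left $(m+1)\times(m+1)$ block $G$; replacing $N$ by $G$ and $\pi,B$ by their $(m+1)\times(m+1)$ truncations and setting $K:=B\pi G\pi^\top B^\top$ (so $\det K=\det G=\det(N)_{[m+1]}$, the truncated dressings being unitriangular), the claim \eqref{todaandctoda} becomes, after using $\tau_m(t+[\alpha]-[\beta],-t)^2=\det(BG\pi^\top)_{[m]}\det(\pi GB^\top)_{[m]}$ (valid because $G=G^\top$ and transposition preserves principal minors), the purely algebraic identity
\begin{align*}
\det(\pi G\pi^\top)_{[m]}\det(BGB^\top)_{[m]}-\det(BG\pi^\top)_{[m]}\det(\pi GB^\top)_{[m]}=(\beta-\alpha)^2\det(K)_{[m-1]}\det K.
\end{align*}

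To prove this I would use that $\det(F)_{[m]}=\det(F)\,(F^{-1})_{mm}$ for an invertible $(m+1)\times(m+1)$ matrix $F$, and that, since $\pi$ and $B$ commute,
\begin{align*}
\pi G\pi^\top=B^{-1}KB^{-\top},\qquad BGB^\top=\pi^{-1}K\pi^{-\top},\qquad BG\pi^\top=\pi^{-1}KB^{-\top},
\end{align*}
each of determinant $\det K$. Hence, writing $W=K^{-1}$ (symmetric) and using that the last row of $B^\top$ is $e_m-\beta e_{m-1}$ and of $\pi^\top$ is $e_m-\alpha e_{m-1}$ (with $e_j$ the standard basis vectors), the four principal minors equal $\det K$ times, respectively, $c-2\beta b+\beta^2a$, $c-2\alpha b+\alpha^2a$, and $c-(\alpha+\beta)b+\alpha\beta a$ (the last one twice), where $a=W_{m-1,m-1}$, $b=W_{m-1,m}$, $c=W_{mm}$. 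Expanding the left-hand side, every monomial in $a^2,c^2,ab,bc$ cancels, leaving $(\det K)^2(\alpha-\beta)^2(ac-b^2)$; finally the Jacobi identity for complementary minors gives $ac-b^2=\det\big((K^{-1})_{\{m-1,m\},\{m-1,m\}}\big)=\det(K)_{[m-1]}/\det K$, which yields the asserted right-hand side.

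The genuinely delicate step is the first one: checking the dictionary between the five shifted symmetric tau functions and the principal minors of $\pi,B$ acting on $N$, and in particular that the reduction to the single block $G$ is simultaneously consistent for all of them and that the sizes $m-1,m,m+1$ match up as claimed. Everything afterwards — the cofactor identity $\det(F)_{[m]}=\det(F)(F^{-1})_{mm}$, the quadratic cancellation, and the Jacobi identity — is routine. (One could alternatively try to extract \eqref{todaandctoda} directly from the $2$d-Toda bilinear identity \eqref{bilinear} by specializing the free time variables and computing residues, but matching the shifts, the discrete indices and the factor $(\beta-\alpha)^2$ seems no easier than the determinantal route above.)
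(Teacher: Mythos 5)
Your argument is correct, and I checked the key steps: the Miwa-shift dictionary $e^{\xi([\alpha],\Lambda)}=(1-\alpha\Lambda)^{-1}$ does convert all five tau functions into principal minors of $\pi N\pi^\top$, $BNB^\top$, $B\pi N\pi^\top B^\top$, $BN\pi^\top$ and $N$ with the sizes $m,m,m-1,m,m+1$ as you claim; the bandwidth argument legitimately localizes everything to the $(m+1)\times(m+1)$ block $G$; the quadratic form $(c-2\beta b+\beta^2a)(c-2\alpha b+\alpha^2a)-(c-(\alpha+\beta)b+\alpha\beta a)^2$ indeed collapses to $(\alpha-\beta)^2(ac-b^2)$; and Jacobi's identity finishes it. However, this is a genuinely different route from the paper's. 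The paper follows the Adler--Shiota--van Moerbeke template (their Theorem 2.2 for the Pfaffian case): it specializes the 2d-Toda bilinear identity \eqref{bilinear} at $n=m-1$, $s=-t+[\beta]$, $t'=t+[\alpha]-[\beta]$, $s'=-t-[\alpha]$, evaluates both contour integrals by residues at $\infty$ and $0$, and then uses only the relation $\tau_n(t,s)=\tau_n(-s,-t)$ coming from the symmetric initial data --- exactly the alternative you dismiss in your final parenthesis as ``no easier.'' What your determinantal route buys is self-containedness: it needs only the solution formula \eqref{solution} and the symmetry of $N$, not the bilinear identity (which the paper imports from the literature), and it is a transparent polynomial identity in the moments, valid by Zariski density without worrying about which minors vanish. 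What the paper's route buys is brevity (a few lines once \eqref{bilinear} is granted) and uniformity: the same residue specialization immediately yields the analogous skew-symmetric relations in Section \ref{4.1}, which is why the authors phrase the proof as ``almost the same with \cite[Theorem 2.2]{adler2002}.'' One small remark: your proof implicitly assumes $m\geq 1$ (the indices $m-1,m$ of $W$ must exist); the case $m=0$ is trivial under the convention $\tilde{\tau}_{-1}=0$, but it is worth saying so.
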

\begin{proof}
This method used in this proof is almost the same with \cite[Theorem 2.2]{adler2002}, and the slight difference lies in the reduction conditions of the initial matrix. Regarding the bilinear identity satisfied by the 2d-Toda's tau function \eqref{bilinear}, if we take $n=m-1$ and ``$s=-t$'' constraint $s=-t+[\beta]$, $t'=t+[\alpha]-[\beta]$ and $s'=-t-[\alpha]$, then by the use of residue theorem, the right hand side of the bilinear identity \eqref{bilinear} could be computed as
\begin{align*}
\frac{1}{2\pi i}&\oint_{C_\infty} \tau_{m-1}(t-[z^{-1}],s)\tau_{m+1}(t'+[z^{-1}],s')e^{\xi(t-t',z)}\frac{dz}{z^2}\\&=(\beta-\alpha)\tau_{m-1}(t-[\beta],-t+[\beta])\tau_{m+1}(t+[\alpha],-t-[\alpha]).
\end{align*}
It is similar to the left hand side, and one could obtain
\begin{align*}
&\frac{1}{2\pi i}\oint_{C_0}\tau_m(t,s-[z])\tau_{m}(t',s'+[z])e^{\xi(s-s',z^{-1})}\frac{dz}{z^2}\\
&=\frac{1}{\alpha-\beta}\left(\tau_m(t,-t+[\beta]-[\alpha])\tau_{m}(t+[\alpha]-[\beta],-t)-\tau_m(t,-t)\tau_m(t+[\alpha]-[\beta],-t-[\alpha]+[\beta])\right).
\end{align*}
From the definition $\tilde{\tau}_n(t)=\tau_n(t,-t)$ and $\tau_n(t,s)=\tau_n(-s,-t)$, the above two equalities would result in the equation \eqref{todaandctoda}. 
\end{proof}
This proposition provides a way to study the symmetric tau functions from 2d-Toda theory. Taking the square root of both sides, one could obtain\footnote{Here we just take the positive sign when taking the square root of both sides. It doesn't matter if we take the negative sign since it wouldn't affect the forms of the equations.}
\begin{align}\label{symmreduction}
\tau_{m}(t+[\alpha]-[\beta],-t)=\sqrt{\tilde{\tau}_m(t)\tilde{\tau}_m(t+[\alpha]-[\beta])-(\beta-\alpha)^2\tilde{\tau}_{m-1}(t-[\beta])\tilde{\tau}_{m+1}(t+[\alpha])}.
\end{align}
Unlike the Pfaff lattice, the bilinear form \eqref{bilinear} would break down if we replace the tau functions of 2d-Toda by the symmetric tau functions with relationship \eqref{symmreduction}. Nonetheless, this equation implies us some useful information about the $t$-derivatives of symmetric tau functions and 2d-Toda's tau functions, which helps us to find the integrable hierarchy later.
\begin{itemize}
\item If we take the limit $\alpha\to\beta$, then we immediately obtain
\begin{align}\label{zeroorder}
\tau_m(t,s)|_{s=-t}=\tilde{\tau}_m(t);
\end{align}
\item If we take the derivative about $\alpha$ on both sides, then we obtain
\begin{align*}
\sum_{k=1}^\infty \alpha^{k-1}\p_{t_k}\tau_m(t+[\alpha]-[\beta],-t)&=\frac{1}{2}\frac{1}{\tau_m(t+[\alpha]-[\beta],-t)}\\&\times\{\tilde{\tau}_m(t)\times\sum_{k=1}^\infty \alpha^{k-1}\p_{t_k}\tilde{\tau}_m(t+[\alpha]-[\beta])+O\left((\beta-\alpha)\right)\}.
\end{align*}
Taking the limit $\alpha\to\beta$ and comparing the coefficient of $\alpha^{k-1}$, one obtains
\begin{align}\label{firstorder}
\p_{t_k}\tau_m(t,s)|_{s=-t}=\frac{1}{2}\p_{t_k}\tilde{\tau}_m(t);
\end{align}
\item Moreover, if we take the derivative about $\alpha$ and $\beta$ on both sides, and take $\alpha\to\beta$, then we get
\begin{align*}
-2\sum_{k,l=1}^\infty \p_{t_l}&\tau_m(t,-t)\p_{t_k}\tau_m(t,-t)\alpha^{k+l-2}-2\tau_m(t,-t)\sum_{k,l=1}^\infty\p_{t_l}\p_{t_k}\tau_m(t,-t)\alpha^{k+l-2}\\
&=-\tilde{\tau}_m(t)\sum_{k,l=1}^\infty \p_{t_k}\p_{t_l}\tilde{\tau}_m(t)\alpha^{k+l-2}+2\sum_{k,l=0}^\infty p_k(-\tilde{\p}_t)\tilde{\tau}_{m-1}(t)p_l(\tilde{\p}_t)\tilde{\tau}_{m+1}(t)\alpha^{k+l}.
\end{align*}
The special case $\alpha^0$ gives rise to the equation
\begin{align}\label{secondorder}
{\p_{t_1}^2}\tau_m(t,s)|_{s=-t}=\frac{1}{2}{\p_{ t_1}^2}\tilde{\tau}_m(t)-\frac{1}{\tilde{\tau}_m(t)}\left[
{\frac{1}{4}({\p_{t_1}}\tilde{\tau}_m(t))^2+\tilde{\tau}_{m-1}(t)\tilde{\tau}_{m+1}(t)}\right].
\end{align}
\end{itemize}
Most generally, one can state the following proposition.

\begin{corollary}
For arbitrary $i,j\in\mathbb{N}$, there exists the general relation between the symmetric tau function and 2d-Toda's tau functions via
\begin{align}\label{general}
\begin{aligned}
&\sum_{k=0}^i\sum_{l=0}^j p_k(\tilde{\p}_t)p_l(-\tilde{\p}_t)\tau_m(t,s)|_{s=-t}\times p_{i-k}(\tilde{\p}_t)p_{j-l}(-\tilde{\p}_t)\tau_m(t,s)|_{s=-t}\\
&=\tilde{\tau}_m(t)p_i(\tilde{\p}_t)p_j(-\tilde{\p}_t)\tilde{\tau}_m(t)-p_{i-2}(-\tilde{\p}_t)\tilde{\tau}_{m-1}(t)p_j(\tilde{\p}_t)\tilde{\tau}_{m+1}(t)\\
&+2p_{i-1}(-\tilde{\p}_t)\tilde{\tau}_{m-1}(t)p_{j-1}(\tilde{\p}_t)\tilde{\tau}_{m+1}(t)-p_i(-\tilde{\p}_t)\tilde{\tau}_{m-1}(t)p_{j-2}(\tilde{\p}_t)\tilde{\tau}_{m+1}(t).
\end{aligned}
\end{align}
\end{corollary}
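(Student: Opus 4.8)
The plan is to deduce \eqref{general} directly from the identity \eqref{todaandctoda} of Proposition \ref{prop1} by regarding it as an identity that holds identically in the two auxiliary parameters $\alpha$ and $\beta$ and then reading off the coefficient of $\alpha^i\beta^j$ on each side; this is the same mechanism that produced \eqref{zeroorder}, \eqref{firstorder} and \eqref{secondorder}, except that now all orders are retained rather than just the lowest ones. The only ingredient needed is the Taylor (vertex-operator) expansion in the shift directions $[\alpha]$ and $[\beta]$, expressed through the elementary Schur polynomials $p_k$ (with the convention $p_k\equiv 0$ for $k<0$):
\begin{align*}
f(t+[\alpha])=\sum_{k\geq 0}\alpha^k p_k(\tilde{\p}_t)f(t),\qquad f(t-[\beta])=\sum_{l\geq 0}\beta^l p_l(-\tilde{\p}_t)f(t).
\end{align*}
Since the two shift operators commute, this gives $f(t+[\alpha]-[\beta])=\sum_{i,j\geq 0}\alpha^i\beta^j\,p_i(\tilde{\p}_t)p_j(-\tilde{\p}_t)f(t)$.

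First I would expand the right-hand side of \eqref{todaandctoda}. Because $\alpha$ and $\beta$ enter $\tau_m(t+[\alpha]-[\beta],-t)$ only through its first argument, the expansion above yields $\tau_m(t+[\alpha]-[\beta],-t)=\sum_{i,j\geq 0}\alpha^i\beta^j\,p_i(\tilde{\p}_t)p_j(-\tilde{\p}_t)\tau_m(t,s)|_{s=-t}$, where the $\tilde{\p}_t$ act on the first slot and the restriction $s=-t$ is imposed afterwards. Squaring this series by the Cauchy product, the coefficient of $\alpha^i\beta^j$ equals
\begin{align*}
\sum_{k=0}^i\sum_{l=0}^j p_k(\tilde{\p}_t)p_l(-\tilde{\p}_t)\tau_m(t,s)|_{s=-t}\cdot p_{i-k}(\tilde{\p}_t)p_{j-l}(-\tilde{\p}_t)\tau_m(t,s)|_{s=-t},
\end{align*}
which is exactly the left-hand side of \eqref{general}.

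Next I would expand the left-hand side of \eqref{todaandctoda}. The term $\tilde{\tau}_m(t)\tilde{\tau}_m(t+[\alpha]-[\beta])$ contributes $\tilde{\tau}_m(t)\,p_i(\tilde{\p}_t)p_j(-\tilde{\p}_t)\tilde{\tau}_m(t)$ to the coefficient of $\alpha^i\beta^j$. For the remaining term I would write $(\beta-\alpha)^2=\beta^2-2\alpha\beta+\alpha^2$, expand $\tilde{\tau}_{m+1}(t+[\alpha])=\sum_{p\geq 0}\alpha^p p_p(\tilde{\p}_t)\tilde{\tau}_{m+1}(t)$ and $\tilde{\tau}_{m-1}(t-[\beta])=\sum_{q\geq 0}\beta^q p_q(-\tilde{\p}_t)\tilde{\tau}_{m-1}(t)$, multiply everything out, and collect the coefficient of $\alpha^i\beta^j$: the three monomials $\beta^2$, $-2\alpha\beta$, $\alpha^2$ shift $(p,q)$ to $(i,j-2)$, $(i-1,j-1)$, $(i-2,j)$ respectively, which together with the overall sign of $-(\beta-\alpha)^2$ reproduces the three products involving $\tilde{\tau}_{m-1}$ and $\tilde{\tau}_{m+1}$ in the last two lines of \eqref{general}, the convention $p_{<0}\equiv 0$ automatically giving the correct truncations when $i\leq 1$ or $j\leq 1$. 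Since \eqref{todaandctoda} holds identically in $\alpha$ and $\beta$, equating the coefficient of $\alpha^i\beta^j$ for each $i,j\in\mathbb{N}$ gives \eqref{general}.

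The only non-routine part is the index bookkeeping: keeping the ordering of the operators $p_k(\tilde{\p}_t)$ and $p_l(-\tilde{\p}_t)$ consistent throughout the Cauchy product, correctly shifting the summation indices under multiplication by $\beta^2$, $-2\alpha\beta$ and $\alpha^2$, and remembering that every factor of $\tau_m$ carries the restriction $|_{s=-t}$ taken after differentiation in the first slot. As a consistency check I would verify that $(i,j)=(0,0)$, $(1,0)$ and $(1,1)$ recover \eqref{zeroorder}, \eqref{firstorder} (case $k=1$) and \eqref{secondorder}. There is no conceptual obstacle: the corollary is simply the generating-function repackaging of Proposition \ref{prop1}.
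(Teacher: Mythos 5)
Your proposal is correct and is essentially identical to the paper's own proof: the paper likewise writes $\tau_m(t+[\alpha]-[\beta],s)|_{s=-t}=\sum_{k,l\geq0}\alpha^k\beta^l p_k(\tilde{\p}_t)p_l(-\tilde{\p}_t)\tau_m(t,s)|_{s=-t}$ and obtains \eqref{general} by comparing coefficients of $\alpha^i\beta^j$ in \eqref{todaandctoda}. One caveat: carrying out your bookkeeping literally (with $\alpha$ paired to $p_\bullet(\tilde{\p}_t)$ acting on $\tilde{\tau}_{m+1}$ and $\beta$ paired to $p_\bullet(-\tilde{\p}_t)$ acting on $\tilde{\tau}_{m-1}$), the shifts $(i,j-2)$, $(i-1,j-1)$, $(i-2,j)$ yield the terms $-p_{j-2}(-\tilde{\p}_t)\tilde{\tau}_{m-1}\,p_{i}(\tilde{\p}_t)\tilde{\tau}_{m+1}+2p_{j-1}(-\tilde{\p}_t)\tilde{\tau}_{m-1}\,p_{i-1}(\tilde{\p}_t)\tilde{\tau}_{m+1}-p_{j}(-\tilde{\p}_t)\tilde{\tau}_{m-1}\,p_{i-2}(\tilde{\p}_t)\tilde{\tau}_{m+1}$, i.e.\ the printed last three terms of \eqref{general} with $i$ and $j$ transposed; this agrees with the special cases $(0,0)$, $(1,0)$, $(1,1)$ (where the transposition is invisible) but not, say, with $(i,j)=(3,0)$, so the mismatch is a typo in the stated corollary rather than an error in your derivation --- just do not assert that your indices ``reproduce'' the displayed formula verbatim.
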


\begin{proof}
This equation is from the expansion
\begin{align*}
\tau_m(t+[\alpha]-[\beta],s)|_{s=-t}&=e^{\xi(\tilde{\p}_t,\alpha)-\xi(\tilde{\p}_t,\beta)}\tau_m(t,s)|_{s=-t}\\&=\sum_{k,l=0}^\infty\alpha^k\beta^lp_k(\tilde{\p}_t)p_l(-\tilde{\p}_t){\tau}_m(t,s)|_{s=-t}.
\end{align*}
By comparing with the coefficients of $\alpha^i\beta^j$ on both sides of \eqref{todaandctoda}, we can arrive at this conclusion.
\end{proof}
\begin{remark}
It should be pointed out that equations \eqref{zeroorder}-\eqref{secondorder} could also be obtained from this unified expression \eqref{general} by picking the label $(i,j)$ as $(0,0)$, $(1,0)$ and $(1,1)$, respectively.
\end{remark}

\subsubsection{symmetric moment matrix and Cholesky decomposition}\label{Cholesky} At the end of this subsection, we'd like to mention that these symmetric moment matrices could not only be embedded in the 2d-Toda theory, but they admit the  symmetric Cholesky decomposition. The Cholesky decomposition would lead to an equivalent expression. Let's consider the symmetric matrix $m_\infty(t)$ and the decomposition
\begin{align*}
m_\infty(t)=S^{-1}(t)S^{-\top}(t), \quad S(t)\in\mathcal{G}_-,
\end{align*}
where $\mathcal{G}_-$ is the group composed of the lower triangular matrices with nonzero diagonals. Moreover, the matrix $S(t)$ defines wave operators
\begin{align}\label{waveoperator2}
W_1(t)=S(t)e^{{\sum_{i=1}^\infty}t_i\Lambda^i},\quad W_2(t)=S^{-\top}(t)e^{-\sum_{i=1}^\infty t_i\Lambda^{\top i}},
\end{align}
wave functions (c.f. \eqref{wave} and \eqref{dualwave})
\begin{align*}
\Psi_1(t;z)=W_1(t)\chi(z)=e^{\sum_{i=1}^\infty t_iz^i}S(t)\chi(z),\, \Psi_2(t;z)=W_2(t)\chi(z)=e^{-\sum_{i=1}^\infty t_iz^i}S^{-\top}(t)\chi(z),
\end{align*}
and their dual
\begin{align*}
\Psi_1^*(t;z)=W_1(t)^{-\top}\chi(z^{-1})=\Psi_2(t;z^{-1}),\,\Psi_2^*(t;z)=W_2(t)^{-\top}\chi(z^{-1})=\Psi_1(t;z^{-1})
\end{align*}
with asymptotics 
\begin{align*}
\Psi_{1,n}(t;z)=e^{\sum_{i=1}^\infty t_iz^i} z^nc_n(t)\psi_{1,n}(t;z),\quad \psi_{1,n}=1+O(z^{-1}),\\
\Psi_{2,n}(t;z)=e^{-\sum_{i=1}^\infty t_iz^i}z^n c_n^{-1}(t)\psi_{2,n}(t;z),\quad \psi_{2,n}=1+O(z),
\end{align*}
where the $c_n$ are the diagonals of matrix $S$. Furthermore, the Lax matrix $L_1=S\Lambda S^{-1}$ and $L_2=S^{-\top}\Lambda^\top S^{\top}$ are connected by the simple relation $L_1=L_2^\top$.
From these relations, one can state the following proposition.
\begin{proposition}
For the symmetric moment matrix with commuting vector field $\p_{t_n}m_\infty=\Lambda^n m_\infty+m_\infty\Lambda^{\top n}$, the Cholesky decomposition $m_\infty(t)=S^{-1}(t)S^{-\top}(t)$ gives rise to the wave operators $W_1(t)$ and $W_2(t)$ via \eqref{waveoperator2}. Moreover, these two operators satisfy the relation
\begin{align*}
W_1(t)W_1^{-1}(t')=W_2(t)W^{-1}_2(t'),\quad \forall \, t,\, t'\in \mathbb{C}^\infty,
\end{align*}
which implies the bilinear identity for the wave functions
\begin{align*}
\oint_{C_\infty} \Psi_{1,n}(t;z)\Psi_{2,m}(t';z^{-1})\frac{dz}{2\pi iz}=\oint_{C_0}\Psi_{2,n}(t;z)\Psi_{1,m}(t';z^{-1})\frac{dz}{2\pi iz}.
\end{align*}
\end{proposition}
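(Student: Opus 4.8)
The plan is to reduce the whole statement to two elementary operator identities for $W_1(t)$ and $W_2(t)$, and then to invoke the standard translation of such identities into contour-integral form.

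First I would make the solution explicit and check that the Cholesky decomposition is available all along the flow. Setting $s=-t$ in \eqref{solution} gives $m_\infty(t)=e^{\sum_i t_i\Lambda^i}\,m_\infty(0)\,e^{\sum_i t_i\Lambda^{\top i}}$; since $m_\infty(0)$ is symmetric and has nonzero principal minors, the same is true of $m_\infty(t)$ for every $t$, so the factorisation $m_\infty(t)=S^{-1}(t)S^{-\top}(t)$ with $S(t)\in\mathcal{G}_-$ is legitimate and the operators \eqref{waveoperator2} are well defined.

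Next, substituting \eqref{waveoperator2} directly yields the two identities the argument rests on. On the one hand $W_1(t)W_2^\top(t)=S(t)\,e^{\sum_i t_i\Lambda^i}e^{-\sum_i t_i\Lambda^i}\,S^{-1}(t)=I$, so $W_2(t)=W_1^{-\top}(t)$; on the other hand the Cholesky relation gives $W_1(t)\,m_\infty(0)\,W_1^\top(t)=S(t)\,m_\infty(t)\,S^\top(t)=I$, hence $W_1^\top(t)W_1(t)=m_\infty^{-1}(0)$, which does not depend on $t$. Chaining the two facts, and using $W_1^\top(t)W_1(t)=m_\infty^{-1}(0)=W_1^\top(t')W_1(t')$,
\begin{align*}
W_1(t)W_1^{-1}(t') &= W_1^{-\top}(t)\bigl(W_1^\top(t)W_1(t)\bigr)W_1^{-1}(t')\\
&= W_1^{-\top}(t)\,m_\infty^{-1}(0)\,W_1^{-1}(t') = W_1^{-\top}(t)W_1^\top(t'),
\end{align*}
and the right-hand side equals $W_2(t)W_2^{-1}(t')$ because $W_2=W_1^{-\top}$ and therefore $W_2^{-1}=W_1^\top$. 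This establishes the operator identity for all $t,t'\in\mathbb{C}^\infty$.

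Finally, to pass to the bilinear identity for the wave functions I would use the usual dictionary: for semi-infinite matrices $A,B$ one has $\oint\bigl(A\chi(z)\bigr)_n\bigl(B\chi(z^{-1})\bigr)_m\,\frac{dz}{2\pi i z}=(AB^\top)_{nm}$, the integral picking out the constant term in $z$. With $\Psi_1(t;z)=W_1(t)\chi(z)$ and $\Psi_2(t;z)=W_2(t)\chi(z)$, the left integral then equals $\bigl(W_1(t)W_2^\top(t')\bigr)_{nm}=\bigl(W_1(t)W_1^{-1}(t')\bigr)_{nm}$ and the right integral equals $\bigl(W_2(t)W_1^\top(t')\bigr)_{nm}=\bigl(W_2(t)W_2^{-1}(t')\bigr)_{nm}$ (again by $W_2=W_1^{-\top}$), so the operator identity just proved closes the argument, in complete parallel with the derivation of \eqref{bltoda}. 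The one point that needs the customary care is that the factors $e^{\pm\xi(t,z)}$ carry essential singularities, so the equality along the two different contours $C_\infty$ and $C_0$ has to be read at the level of formal generating series; this is exactly the bookkeeping already carried out in \cite{adler97,ueno84}, and I expect it to be the only mildly technical ingredient of the proof.
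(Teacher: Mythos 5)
Your proof is correct, and it is essentially the standard argument that the paper implicitly appeals to (it states this proposition without proof, referring the reader to the derivation of \eqref{bltoda} in \cite{adler97,ueno84}): the identities $W_2(t)=W_1^{-\top}(t)$ and $W_1^{\top}(t)W_1(t)=m_\infty^{-1}(0)$ are exactly the content of the Cholesky factorisation of $m_\infty(t)=e^{\xi(t,\Lambda)}m_\infty(0)e^{\xi(t,\Lambda^{\top})}$, and the constant-term dictionary $\oint (A\chi(z))_n(B\chi(z^{-1}))_m\frac{dz}{2\pi iz}=(AB^{\top})_{nm}$ is the standard formal-residue translation. Your closing caveat about reading the two contour integrals as formal series at $z=\infty$ and $z=0$ respectively is the right one, since the infinite sums defining $W_1(t)W_1^{-1}(t')$ only make sense in that formal sense.
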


\subsection{Rank One Shift Condition and Integrable Hierarchy}\label{3.2}
In the last subsection, we demonstrate the bilinear identity loses if we replace the 2d-Toda's tau function by symmetric tau function. In this subsection, we would like to impose another condition for the moment matrix such that the symmetric tau functions could be put into an integrable hierarchy, which is called as the C-Toda hierarchy.

This condition, firstly appeared in \cite{bertola2010} , is called the rank one shift condition. It means the shift of moment matrix satisfies a rank one decomposition
\begin{align}\label{rankone}
\Lambda m_\infty+m_\infty\Lambda^\top=\alpha\alpha^\top,\quad \alpha=(\alpha_0,\alpha_1,\cdots)^\top,\quad \alpha_i(t)\in C^\infty.
\end{align}
It should be pointed out that the time evolution of the vector $\alpha$ should satisfy the condition
\begin{align}\label{atime}
\p_{t_i}\alpha=\Lambda^i \alpha,
\end{align}
to be consistent with that of moment matrix $m_\infty$.
Then according to the Borel decomposition $m_\infty=S_1^{-1}S_2$, the rank one condition is equal to
\begin{align*}
\Lambda S_1^{-1}S_2+S_{1}^{-1}S_2\Lambda^\top=\alpha\alpha^\top\quad S_1\Lambda S_1^{-1}+S_2\Lambda^\top S_2^{-1}=S_1\alpha\alpha^\top S_2^{-1}.
\end{align*}
From the Proposition \ref{symmetricinitial} and condition ``$s=-t$'', we know $S_2=hS_1^{-\top}$ and $L_2=hL_1^\top h^{-1}$. Therefore, the above equality could then be written in a symmetric form
\begin{align}\label{rankonecondition}
L_1h+hL_1^{\top}=(S_1\alpha)(S_1\alpha)^\top:=\sigma\sigma^\top
\end{align}
with $\sigma=S_1\alpha$.
In addition, the rank one shift condition \eqref{rankone}, or its equivalent condition \eqref{rankonecondition}, implies that there are some relations between symmetric $\tau$-functions and auxiliary functions $\{\sigma_n\}_{n\geq0}$.

\begin{proposition}\label{prop:3.4}
If we denote $L_1$ as
\begin{align*}
L_1=\left(
\begin{array}{ccccc}
l_{00}&1&&&\\
l_{10}&l_{11}&1&&\\
l_{20}&l_{21}&l_{22}&1&\\
\vdots&\vdots&\vdots&\vdots&\ddots
\end{array}
\right):=(l_{i,j}),
\end{align*}
then there are relations between tau functions and $\sigma$ via
\begin{subequations}
\begin{align}
&2l_{j,j}h_j=\sigma_j\sigma_j,\label{ctoda1}\\
&h_{j+1}+l_{j+1,j}h_j=\sigma_j\sigma_{j+1},\label{ctoda2}\\
&l_{i,j}h_j=\sigma_i\sigma_j,\, i\geq j+2.\label{ctoda3}
\end{align}
\end{subequations}
\end{proposition}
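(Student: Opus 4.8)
The plan is to read off the matrix entries of the symmetric identity \eqref{rankonecondition}. Recall that $h$ there denotes the diagonal matrix $h=\di(h_0,h_1,\dots)$ with $h_n=\tau_{n+1}/\tau_n$, and that $L_1=(l_{i,j})$ is lower Hessenberg, with $l_{i,i+1}=1$ and $l_{i,j}=0$ whenever $j\geq i+2$. Since $h$ is diagonal one has $(L_1h)_{i,j}=l_{i,j}h_j$ and $(hL_1^\top)_{i,j}=h_i\,l_{j,i}$, while $(\sigma\sigma^\top)_{i,j}=\sigma_i\sigma_j$ with $\sigma=S_1\alpha$. Thus \eqref{rankonecondition} is equivalent to the family of scalar identities
\begin{align}\label{scalarentries}
l_{i,j}h_j+l_{j,i}h_i=\sigma_i\sigma_j,\qquad i,j\geq 0,
\end{align}
which is symmetric under $i\leftrightarrow j$, so it suffices to treat $i\geq j$.

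Then I would split into the three cases forced by the Hessenberg shape of $L_1$. For $i=j$, \eqref{scalarentries} reads $2l_{j,j}h_j=\sigma_j\sigma_j$, which is \eqref{ctoda1}. For $i=j+1$, substituting $l_{j,j+1}=1$ gives $l_{j+1,j}h_j+h_{j+1}=\sigma_{j+1}\sigma_j$, i.e.\ \eqref{ctoda2}. For $i\geq j+2$, the index $i$ lies strictly above the superdiagonal so $l_{j,i}=0$, and \eqref{scalarentries} collapses to $l_{i,j}h_j=\sigma_i\sigma_j$, which is \eqref{ctoda3}. Every pair $(i,j)$ (up to swapping) falls into exactly one of these cases, so this exhausts the content of \eqref{rankonecondition}.

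I do not expect a genuine obstacle: once \eqref{rankonecondition} is available the argument is pure bookkeeping, and the entries with $i<j$ yield nothing new by the symmetry of \eqref{scalarentries}. The one point that carries real weight — and which is really the content preceding the statement rather than part of the proof — is the reduction of the rank one shift condition \eqref{rankone} to its symmetric form \eqref{rankonecondition}; this relies on $S_2=hS_1^{-\top}$ and $L_2=hL_1^\top h^{-1}$ from Proposition \ref{symmetricinitial} under the constraint $s=-t$, and is consistent with the flows precisely because the prescribed evolution \eqref{atime} of $\alpha$ matches $\p_{t_n}m_\infty=\Lambda^n m_\infty+m_\infty\Lambda^{\top n}$, so that $\sigma=S_1\alpha$ is well defined along the whole hierarchy.
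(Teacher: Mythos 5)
Your proposal is correct and follows exactly the route the paper intends: the paper states Proposition \ref{prop:3.4} as an immediate consequence of \eqref{rankonecondition}, and your entrywise reading $l_{i,j}h_j+h_i l_{j,i}=\sigma_i\sigma_j$, split according to the lower Hessenberg shape of $L_1$ (diagonal, subdiagonal with $l_{j,j+1}=1$, and $i\geq j+2$ where $l_{j,i}=0$), is precisely the omitted bookkeeping. Nothing is missing.
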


Now we plan to show the equations \eqref{ctoda1}-\eqref{ctoda3} are the hierarchy of intergable lattice, whose first example is the C-Toda lattice given by \cite{li2019}.
By the use of the fact $L_1=S_1\Lambda S_1^{-1}$ and denote $S_1=(s_{i,j})$, $S_1^{-1}=(\tilde{s}_{i,j})$, we have 
\begin{align*}
L_1=\left(
\begin{array}{cccc}
1&&&\\
s_{10}&1&&\\
s_{20}&s_{21}&1&\\
\vdots&\vdots&\vdots&\ddots\end{array}
\right)\left(
\begin{array}{ccccc}
0&1&&&\\
&0&1&&\\
&&0&1&\\
&&&\ddots&\ddots
\end{array}
\right)\left(
\begin{array}{cccc}
1&&&\\
\tilde{s}_{10}&1&&\\
\tilde{s}_{20}&\tilde{s}_{21}&1&\\
\vdots&\vdots&\vdots&\ddots\end{array}
\right)
\end{align*}
which shows 
\begin{align}\label{lij}
l_{i,j}=\tilde{s}_{i+1,j}+s_{i,j-1}+\sum_{k=j}^{i-1}s_{i,k}\tilde{s}_{k+1,j}.
\end{align}
Moreover, since $S_1^{-1}$ is the inverse matrix of $S_1$, and notice $S_1$ and $S_1^{-1}$ are strictly lower triangular matrix with diagonals $1$, we have
\begin{align}\label{stilde}
\tilde{s}_{i,j}=-s_{i,j}-\sum_{k=j+1}^{i-1}s_{i,k}\tilde{s}_{k,j},
\end{align}
In addition, from the 2d-Toda theory, we know $$s_{i,j}=\frac{p_{i-j}(-\tilde{\p}_t)\tau_i(t,s)}{\tau_i(t,s)}.$$ Therefore, we have the following proposition.
\begin{proposition}
The main diagonal and sub-diagonal give rise to the C-Toda lattice
\begin{subequations}
\begin{align}
&D_{t_1}\tilde{\tau}_{n+1}\cdot\tilde{\tau}_n=\tilde{\sigma}_n\tilde{\sigma}_n,\label{c1}\\
&D_{t_1}^2\tilde{\tau}_{n+1}\cdot\tilde{\tau}_{n+1}=4\tilde{\sigma}_n\tilde{\sigma}_{n+1},\label{c2}
\end{align}
\end{subequations}
where $\tilde{\sigma}_n=\tilde{\tau}_n\sigma_n$ and $D_t$ is the standard Hirota operator defined by
\begin{align*}
D_t^m f(t)\cdot g(t)=\frac{\partial^m}{\partial x^m}f(t+x)g(t-x)|_{x=0}.
\end{align*}
\end{proposition}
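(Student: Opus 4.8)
The equations \eqref{c1}--\eqref{c2} are, respectively, the $(n,n)$- and $(n+1,n)$-entries of the symmetrized rank-one relation \eqref{rankonecondition}, that is, identities \eqref{ctoda1} and \eqref{ctoda2} of Proposition \ref{prop:3.4} rewritten in terms of the symmetric tau functions. The plan is to convert the Lax entries $l_{i,j}=(L_1)_{i,j}$ appearing there into tau-function data in three steps: first express $l_{i,j}$ through the dressing matrix $S_1$ using \eqref{lij} together with the inversion formula \eqref{stilde}; then substitute the 2d-Toda expression $s_{i,j}=p_{i-j}(-\tilde{\p}_t)\tau_i/\tau_i$; and finally restrict to the symmetric locus $s=-t$ by means of \eqref{zeroorder}, \eqref{firstorder} and \eqref{secondorder}, using also that $h_n|_{s=-t}=\tilde{\tau}_{n+1}/\tilde{\tau}_n$ by \eqref{zeroorder}.

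Equation \eqref{c1} is immediate. Taking $i=j=n$ in \eqref{lij} (the sum is empty) and applying \eqref{stilde} gives $l_{n,n}=s_{n,n-1}-s_{n+1,n}$; since $s_{i,i-1}=-\p_{t_1}\tau_i/\tau_i$, relations \eqref{zeroorder}--\eqref{firstorder} give $l_{n,n}|_{s=-t}=\frac12\p_{t_1}\log(\tilde{\tau}_{n+1}/\tilde{\tau}_n)$. Substituting this into \eqref{ctoda1} and multiplying by $\tilde{\tau}_n^2$ turns the left-hand side into $\tilde{\tau}_n\p_{t_1}\tilde{\tau}_{n+1}-\tilde{\tau}_{n+1}\p_{t_1}\tilde{\tau}_n=D_{t_1}\tilde{\tau}_{n+1}\cdot\tilde{\tau}_n$ and the right-hand side into $(\tilde{\tau}_n\sigma_n)^2=\tilde{\sigma}_n^2$, which is \eqref{c1}.

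For \eqref{c2} I would take $i=n+1$, $j=n$ in \eqref{lij}, where three terms survive, clear the $S_1^{-1}$-entries with \eqref{stilde} to obtain $l_{n+1,n}=s_{n+1,n-1}-s_{n+2,n}+s_{n+2,n+1}s_{n+1,n}-s_{n+1,n}^2$, and expand the two ``$p_2$-entries'' $s_{n+1,n-1},s_{n+2,n}$ via $p_2(-\tilde{\p}_t)=\frac12\p_{t_1}^2-\frac12\p_{t_2}$, using \eqref{secondorder} (and \eqref{firstorder} for the $\p_{t_2}$-part) to pass to $\tilde{\tau}$. After multiplying \eqref{ctoda2} by $\tilde{\tau}_n\tilde{\tau}_{n+1}$, the pieces involving only $\tilde{\tau}_n,\tilde{\tau}_{n+1}$ together with $h_{n+1}$ recombine --- via $\p_{t_1}^2\log\tilde{\tau}_{n+1}=\frac{\p_{t_1}^2\tilde{\tau}_{n+1}}{\tilde{\tau}_{n+1}}-\left(\frac{\p_{t_1}\tilde{\tau}_{n+1}}{\tilde{\tau}_{n+1}}\right)^2$ --- into $\frac12\tilde{\tau}_{n+1}^2\p_{t_1}^2\log\tilde{\tau}_{n+1}=\frac14 D_{t_1}^2\tilde{\tau}_{n+1}\cdot\tilde{\tau}_{n+1}$, while the right-hand side is $\tilde{\tau}_n\tilde{\tau}_{n+1}\sigma_n\sigma_{n+1}=\tilde{\sigma}_n\tilde{\sigma}_{n+1}$. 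The main obstacle is everything else: the ``$(n+2)$-terms'' of $l_{n+1,n}$ --- namely $\p_{t_1}^2\log\tilde{\tau}_{n+2}$, the ratio $\tilde{\tau}_{n+1}\tilde{\tau}_{n+3}/\tilde{\tau}_{n+2}^2$, the $(\p_{t_1}\log\tilde{\tau}_{n+2})$-squares and the $\p_{t_2}$-derivatives --- must all cancel, and this is not apparent from \eqref{ctoda2} alone.

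To circumvent this I would prove \eqref{c2} by induction on $n$. Differentiating \eqref{c1} in $t_1$ and combining it with the flow $\p_{t_1}\sigma_n=l_{n,n}\sigma_n+\sigma_{n+1}-(h_n/h_{n-1})\sigma_{n-1}$ --- which follows from $\sigma=S_1\alpha$, \eqref{atime}, and the Lie-algebra splitting of $S_1(\p_{t_1}m_\infty)S_2^{-1}=L_1+L_2$ together with $L_2=hL_1^{\top}h^{-1}$ (so that the subdiagonal entries of $L_2$ are $h_n/h_{n-1}$) --- one gets the recursion $\tilde{\sigma}_n\tilde{\sigma}_{n+1}-\frac14 D_{t_1}^2\tilde{\tau}_{n+1}\cdot\tilde{\tau}_{n+1}=\frac{\tilde{\tau}_{n+1}^2}{\tilde{\tau}_n^2}(\tilde{\sigma}_{n-1}\tilde{\sigma}_n-\frac14 D_{t_1}^2\tilde{\tau}_n\cdot\tilde{\tau}_n)$. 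The base case $n=0$ (where the last term of the $\sigma$-flow is absent and $\tilde{\tau}_0=1$) is checked directly, and \eqref{c2} then follows for all $n$. As a by-product this forces $\p_{t_2}\log(\tilde{\tau}_{n+1}/\tilde{\tau}_n)=0$, which is exactly the identity making the ``$(n+2)$-terms'' above cancel --- a two-reduction concealed in the C-Toda structure.
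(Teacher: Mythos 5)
Your proof of \eqref{c1} is the paper's own argument verbatim: extract $l_{n,n}=s_{n,n-1}-s_{n+1,n}$ from \eqref{lij} and \eqref{stilde}, pass to $\tilde{\tau}$ via \eqref{zeroorder}--\eqref{firstorder}, and substitute into \eqref{ctoda1}. For \eqref{c2} you take a genuinely different route. The paper pushes the direct computation of $l_{n+1,n}$ through: it keeps the $\p_{t_2}$-terms coming from $p_2(-\tilde{\p}_t)$ and eliminates them with the modified KP identity $(D_{t_1}^2+D_{t_2})\tau_{n+1}\cdot\tau_{n+2}|_{s=-t}=0$ (available because the $\tau_n(t,s)$ are 2d-Toda tau functions), which collapses $l_{n+1,n}$ to $\bigl(\tfrac14 D_{t_1}^2\tilde{\tau}_{n+1}\cdot\tilde{\tau}_{n+1}-\tilde{\tau}_n\tilde{\tau}_{n+2}\bigr)/\tilde{\tau}_{n+1}^2$; substitution into \eqref{ctoda2} then gives \eqref{c2} after the $h_{n+1}$-term cancels against $\tilde{\tau}_n\tilde{\tau}_{n+2}/\tilde{\tau}_{n+1}^2\cdot h_n$. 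Your induction bypasses both \eqref{ctoda2} and the $t_2$-flow: the evolution $\p_{t_1}\sigma_n=\sigma_{n+1}+l_{n,n}\sigma_n-(h_n/h_{n-1})\sigma_{n-1}$ is correct (it follows from \eqref{lax1} with $n=1$ and \eqref{atime}), and differentiating $2l_{n,n}h_n=\sigma_n^2$ indeed yields
\begin{align*}
h_n\,\p_{t_1}^2\log h_n=2\sigma_n\sigma_{n+1}-2\frac{h_n}{h_{n-1}}\sigma_{n-1}\sigma_n,
\end{align*}
which is your recursion $X_n=(h_n/h_{n-1})X_{n-1}$ for $X_n=\sigma_n\sigma_{n+1}-\tfrac12 h_n\p_{t_1}^2\log\tilde{\tau}_{n+1}$; the base case $X_0=0$ checks out directly. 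So your argument is valid and in fact more economical: it derives \eqref{c2} from the diagonal condition \eqref{ctoda1} together with the time evolution alone, whereas the paper needs the extra input of the mKP bilinear identity. What the paper's route buys in exchange is the explicit closed form of $l_{n+1,n}$ in terms of $\tilde{\tau}$, which is reused later (cf.\ the discussion around \eqref{l3} in the skew symmetric section).

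One genuine error in your write-up is the closing ``by-product''. Comparing your induction result with the direct computation of $l_{n+1,n}$ does \emph{not} force $\p_{t_2}\log(\tilde{\tau}_{n+1}/\tilde{\tau}_n)=0$; it forces exactly the mKP identity, which expresses $\p_{t_2}\log(\tau_{n+2}/\tau_{n+1})|_{s=-t}$ through $t_1$-derivatives but does not annihilate it. Since $\p_{t_2}m_{i,j}=m_{i+2,j}+m_{i,j+2}$, the ratios $h_n$ are genuinely $t_2$-dependent, and there is no hidden two-reduction in the C-Toda structure. This remark is inessential to your induction, so the proof of \eqref{c1}--\eqref{c2} stands, but the claim itself should be deleted.
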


\begin{proof}

Firstly, we prove the equality \eqref{c1}. From equation \eqref{lij} and \eqref{stilde}, one can demonstrate the elements of main diagonal can be written as
 $$l_{j,j}=s_{j,j-1}+\tilde{s}_{j+1,j}=s_{j,j-1}-s_{j+1,j},$$
and as is indicated, $$s_{j,j-1}=\frac{p_1(-\tilde{\p}_t){\tau}_j(t,s)|_{s=-t}}{{\tau}_j(t,s)|_{s=-t}}.$$ 
Furthermore, from the relations  \eqref{zeroorder} and \eqref{firstorder}, one can express
$s_{j,j-1}$ in terms of symmetric tau function as $s_{j,j-1}=-\frac{1}{2}\p_{t_1}\log\tilde{\tau}_j(t)$. Therefore, the main diagonal \eqref{ctoda1} leads to 
\begin{align*}
\p_{t_1}\log\frac{\tilde{\tau}_{j+1}}{\tilde{\tau}_j}\times\frac{\tilde{\tau}_{j+1}}{\tilde{\tau}_j}=\frac{\tilde{\sigma}_j\tilde{\sigma}_j}{\tilde{\tau}_j^2},
\end{align*}
which is indeed \eqref{c1}. 

Now we turn to equation \eqref{c2}. From the equations
\begin{align*}
&l_{j+1,j}=s_{j+1,j-1}+s_{j+1,j}\tilde{s}_{j+1,j}+\tilde{s}_{j+2,j}=s_{j+1,j-1}-s_{j+2,j}+s_{j+2,j+1}s_{j+1,j}-s_{j+1,j}^2,\\
&s_{j+1,j-1}=\frac{p_2(-\tilde{\p}_t){\tau}_{j+1}(t,s)|_{s=-t}}{{\tau}_{j+1}(t,s)|_{s=-t}},
\end{align*} one knows
\begin{align*}
s_{j+1,j-1}-s_{j+2,j}=\frac{1}{2}\p_{t_2}\log\frac{{\tau}_{j+2}(t,s)|_{s=-t}}{{\tau}_{j+1}(t,s)|_{s=-t}}+\frac{1}{2}\frac{\p_{t_1}^2{\tau}_{j+1}(t,s)|_{s=-t}}{{\tau}_{j+1}(t,s)|_{s=-t}}-\frac{1}{2}\frac{\p_{t_1}^2{\tau}_{j+2}(t,s)|_{s=-t}}{{\tau}_{j+2}(t,s)|_{s=-t}}.
\end{align*}
It seems that $t_2$-flow would involve in this equation. Interestingly, the $t_2$-flow can be cancelled from the 2d-Toda hierarchy since $\{{\tau}_n(t,s)|_{s=-t}\}_{n=0}^\infty$ satisfy the modified KP hierarchy \cite{jimbo1983}, whose first equation is $(D_{t_1}^2+D_{t_2}){\tau}_{j+1}(t,s)\cdot{\tau}_{j+2}(t,s)|_{s=-t}=0$. This equation implies
\begin{align*}
\p_{t_2}\log\frac{{\tau}_{j+2}(t,s)|_{s=-t}}{{\tau}_{j+1}(t,s)|_{s=-t}}&=-2\frac{\p_{t_1}\tau_{j+1}(t,s)|_{s=-t}\p_{t_1}\tau_{j+2}(t,s)|_{s=-t}}{\tau_{j+1}(t,s)|_{s=-t}\tau_{j+2}(t,s)|_{s=-t}}\\
&+\frac{\p_{t_1}^2{\tau}_{j+1}(t,s)|_{s=-t}}{{\tau}_{j+1}(t,s)|_{s=-t}}+\frac{\p_{t_1}^2{\tau}_{j+2}(t,s)|_{s=-t}}{{\tau}_{j+2}(t,s)|_{s=-t}},
\end{align*}
and therefore
\begin{align}\label{l3}
l_{j+1,j}=\frac{{\tau}_{j+1}(t,s)|_{s=-t}\p_{t_1}^2{\tau}_{j+1}(t,s)|_{s=-t}-(\p_{t_1}{\tau}_{j+1}(t,s)|_{s=-t})^2}{({\tau}_{j+1}(t,s)|_{s=-t})^2}.
\end{align}
By using the derivatives among $\tilde{\tau}_j(t)$ and ${\tau}_j(t,s)|_{s=-t}$ in \eqref{zeroorder}-\eqref{secondorder},  we finally arrive at
\begin{align*}
l_{j+1,j}=\frac{\frac{1}{4}D_{t_1}^2\tilde{\tau}_{j+1}\cdot\tilde{\tau}_{j+1}}{\tilde{\tau}_{j+1}^2}-\frac{\tilde{\tau}_j\tilde{\tau}_{j+2}}{\tilde{\tau}_{j+1}^2}
\end{align*}
By taking $l_{j+1,j}$ into equation \eqref{ctoda2}, the second equation of C-Toda lattice can be obtained.
\end{proof}

One should notice: the rank one shift condition is one of the necessary conditions to find these quadralinear equations, and the other is the symmetric reduction on the $\tau$-function of 2d-Toda hierarchy. These two conditions, could be regarded as the spectral problem and time evolutions of the C-Toda hierarchy respectively, which are the inherent structures of integrability and we'd like to give an explanation in the subsection \ref{laxpair}.

\subsection{Rank One Shift Condition Revisited: Polynomial Case}\label{sub:polynomial}
In fact, the rank one shift condition has a natural explanation in polynomial sense. If we denote the symmetric moment matrix $m_\infty=(m_{i,j})_{i,j\in\mathbb{N}}$, composed by moments generated by Cauchy kernel as
\begin{align}\label{cbopmoment}
m_{i,j}=\int_{\mathbb{R}^2_+}\frac{x^iy^j}{x+y}\omega(x)\omega(y)dxdy,
\end{align}
then from the rank one shift condition \eqref{rankone},
one knows $\{\alpha_i\}_{i\in\mathbb{Z}}$ can be expressed as
\begin{align}\label{alpha}\alpha_i=\int_{\mathbb{R}_+} x^i\omega(x)dx.
\end{align}
In this case, we always assume that the weight function $\omega(x)$ decays fast when $x\to\infty$ so that the moments are nonsingular. As a way of illustration, in \cite{bertola2014}, $\omega(x)$ is taken as the Laguerre weight, and the corresponding Cauchy-Laguerre polynomials are considered, with the asymptotic behaviour depicted by Meijer G-function.

At the same time, these moments in \eqref{cbopmoment} could be interpreted by the symmetric bilinear inner product $\langle \cdot,\cdot\rangle$ mapping $\mathbb{R}[x]\times\mathbb{R}[y]\mapsto\mathbb{R}$, such that
\begin{align}\label{innerproduct}
\langle x^i,y^j\rangle:=m_{i,j}=\int_{\mathbb{R}^2_+}\frac{x^iy^j}{x+y}\omega(x)\omega(y)dxdy.
\end{align}
Under the bilinear inner product, one can introduce a family of polynomials $\{P_n(x)\}_{n=0}^\infty$, satisfying the orthogonal relation 
\begin{align*}
\langle P_n(x),P_m(y)\rangle=h_n\delta_{m,n},\quad h_n=\frac{\det(m_{i,j})_{i,j=0}^n}{\det(m_{i,j})_{i,j=0}^{n-1}}.
\end{align*}
We call this family of polynomials the symmetric Cauchy bi-orthogonal polynomials since $m_{i,j}=m_{j,i}$ and a determinant expression of these polynomials could be found in \cite{li2019}. 

\begin{remark}
In this paper, we mainly consider the symmetric reduction of moment matrix, therefore from the Cholesky decomposition we know the dressing operator can be chosen as the same upon a constant factor. Therefore, the two family of Cauchy bi-orthogonal polynomials in this case is reduced to one family of polynomials. This is a special case of Cauchy bi-orthogonal polynomials motivated by the structure of symmetric tau functions.
\end{remark}

To be consistent with the time evolution \eqref{atime}, from now on, we assume $\omega(x)$ is dependent on time parameters $\{t_k\}_{k=1,2,\cdots}$ with evolutions $\partial_{t_k}\omega(x;t)=x^k\omega(x;t)$.
Furthermore, it should be pointed out that if the time-dependent moments are chosen as
\begin{align*}
m_{i,j}=\int_{\mathbb{R}_+^2}\frac{x^iy^j}{x+y}\omega(x;t)\omega(y;t)dxdy,
\end{align*}
then from the definition of tau function, one knows
\begin{align*}
{\tau}_n=\det(m_{i,j})_{i,j=0}^{n-1}=\det\left(\int_{\mathbb{R}_+^2}\frac{x^iy^j}{x+y}\omega(x;t)\omega(y;t)dxdy\right)_{i,j=0}^{n-1}.
\end{align*}
By using the Andr\'eief formula, one can show
\begin{align*}
\tau_n&=\sum_{\sigma\in S_n}\epsilon(\sigma){\prod_{j=1}^n}\int_{\mathbb{R}_+^2}x_j^{\sigma_j-1}y_j^{j-1}\frac{1}{x_j+y_j}\omega(x_j;t)\omega(y_j;t)dx_jdy_j\\
&=\int_{\mathbb{R}_+^{n}\times \mathbb{R}_+^n}\Delta(x)\prod_{j=1}^n y_j^{j-1}\frac{1}{x_j+y_j}\omega(x_j;t)\omega(y_j;t)dx_jdy_j\\
&=\frac{1}{n!}\sum_{\sigma\in S_n}\epsilon(\sigma)\int_{\mathbb{R}_+^n\times\mathbb{R}_+^n}\Delta(x)\prod_{j=1}^ny_{\sigma_j}^{j-1}\frac{1}{x_j+y_j}\omega(x_j;t)\omega(y_j;t)dx_jdy_j\\
&=\frac{1}{(n!)^2}\int_{\mathbb{R}_+^n\times\mathbb{R}_+^n}\det\left[\frac{1}{x_i+y_j}\right]_{i,j=1}^n\Delta(x)\Delta(y)\prod_{i=1}^n\omega(x_i;t)\omega(y_i;t)dx_idy_i.
\end{align*}
which is the time-dependent partition function of Cauchy two-matrix model.
Moreover, the Cauchy bi-orthogonal polynomials could be written in terms of dressing operator.
\begin{proposition}
We have 
\begin{align*}
P_n(x)=(S_1\chi(x))_n,
\end{align*}
where $(S_1\chi(x))_n$ is the $(n+1)$-th component of the vector $S_1\chi(x)$. 
\end{proposition}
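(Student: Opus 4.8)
The plan is to identify the polynomial $P_n(x)$, defined by the orthogonality relations $\langle P_n(x), P_m(y)\rangle = h_n\delta_{n,m}$ together with the normalization that $P_n$ is monic of degree $n$, with the $(n+1)$-th component of $S_1\chi(x)$, using the characterization of $S_1$ coming from the Borel decomposition $m_\infty = S_1^{-1}S_2$. First I would recall that by construction $S_1 \in \mathfrak{G}_-$ is lower triangular with unit diagonal, so the vector $S_1\chi(x)$ has $n$-th entry (counting from $0$) of the form $\sum_{k=0}^{n}(S_1)_{n,k}x^k$ with $(S_1)_{n,n}=1$; hence each $(S_1\chi(x))_n$ is automatically a monic polynomial of degree $n$, matching the normalization of $P_n$. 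It remains only to check the orthogonality.

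The key computation is to feed the identity $m_\infty = S_1^{-1}S_2$, equivalently $S_1 m_\infty = S_2$, into the inner product. Using the definition $\langle x^i, y^j\rangle = m_{i,j}$ and bilinearity, one has $\langle (S_1\chi(x))_n, y^j\rangle = \sum_k (S_1)_{n,k} m_{k,j} = (S_1 m_\infty)_{n,j} = (S_2)_{n,j}$. Since $S_2 \in \mathfrak{G}_+$ is upper triangular, $(S_2)_{n,j}=0$ for $j<n$, which already gives $\langle (S_1\chi(x))_n, y^j\rangle = 0$ for all $j<n$; because $\{1,y,\dots,y^{m}\}$ spans the same space as $\{(S_1\chi(y))_0,\dots,(S_1\chi(y))_m\}$ (again by unit-lower-triangularity of $S_1$), this is equivalent to $\langle (S_1\chi(x))_n, (S_1\chi(y))_m\rangle = 0$ for $m<n$, and by the symmetry $m_{i,j}=m_{j,i}$ of the inner product the orthogonality holds for $m>n$ as well. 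For $m=n$ one computes $\langle (S_1\chi(x))_n, (S_1\chi(y))_n\rangle = (S_2 S_1^\top)_{n,n} = (S_2)_{n,n} = h_n$, using that $S_1^\top$ is upper triangular with unit diagonal and that $h$ is by definition the diagonal of $S_2$ with $h_n = \tau_{n+1}/\tau_n$. Thus $(S_1\chi(x))_n$ satisfies exactly the defining relations of $P_n(x)$, and since those relations determine $P_n$ uniquely (the Gram matrix $(m_{i,j})$ has nonzero principal minors), the two coincide.

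I do not expect a serious obstacle here; the statement is essentially a restatement of the Borel decomposition in the language of orthogonal polynomials, and the only points requiring a little care are (i) making sure the triangular change of basis between $\{x^k\}$ and $\{(S_1\chi(x))_k\}$ is invoked correctly so that orthogonality against monomials upgrades to orthogonality against the $P_m$, and (ii) invoking the symmetry of the moment matrix to get orthogonality on both sides of the diagonal rather than just one. Both are routine given the setup already established in Section~\ref{sec:review}.
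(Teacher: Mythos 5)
Your proof is correct. The paper itself states this proposition without any proof, treating it as an immediate consequence of the setup in Section \ref{sec:review}, so there is nothing to compare line by line; your argument supplies the missing details accurately. The computation $\langle (S_1\chi(x))_n,(S_1\chi(y))_m\rangle=(S_1m_\infty S_1^\top)_{n,m}=(S_2S_1^\top)_{n,m}$ is the right one: since $m_\infty$ is symmetric this matrix is symmetric, and since it is also upper triangular it must be the diagonal matrix $h$, which gives both orthogonality and the normalization at once; uniqueness then follows from the monic normalization and the nonvanishing of the principal minors. The route the paper implicitly has in mind is marginally different but equivalent: combine the general 2d-Toda biorthogonality $\langle P(x),Q^\top(y)\rangle=I$ with $Q(y)=S_2^{-\top}\chi(y)$ and the symmetric-reduction identity $S_2^{-\top}=h^{-1}S_1$ from Proposition \ref{symmetricinitial}, which yields $\langle S_1\chi(x),(S_1\chi(y))^\top\rangle=h$ in one line. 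Your version is self-contained and does not need Proposition \ref{symmetricinitial}; either is fine.
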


With this expression, one can show the explicit expression for the $\sigma$ as
$$\sigma=S_1\alpha=\left(\int_{\mathbb{R}_+}P_n(x)\omega(x)dx\right)_{n\geq0}$$
if equation \eqref{alpha} is taken into account.
Now that $\sigma$ is a column vector, one can write it as a diagonal matrix acting on a vector composed by $1$, denoting as $\sigma=D_\sigma\mathbbm{1}$. Therefore, we have the following proposition.
\begin{proposition}
The equation \eqref{rankonecondition} is equivalent to
\begin{align*}
(\Lambda-Id)D_\sigma^{-1}L_1h+(\Lambda-Id)D_\sigma^{-1}hL_1^\top=0.
\end{align*}
\end{proposition}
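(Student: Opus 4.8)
The plan is to obtain the displayed identity from \eqref{rankonecondition} by two elementary left–multiplications in the semi-infinite algebra $\mathcal{A}^{\frac{\infty}{2}}$. Write $\mathbbm{1}=(1,1,1,\dots)^\top$, so that by construction $\sigma=D_\sigma\mathbbm{1}$ with $D_\sigma=\di(\sigma_0,\sigma_1,\dots)$; the key structural fact is that $\Lambda\mathbbm{1}=\mathbbm{1}$, hence $(\Lambda-Id)\mathbbm{1}=0$. I would also record the purely formal observation that $L_1h+hL_1^\top$ is symmetric: since $h$ is diagonal, $(L_1h)^\top=hL_1^\top$.

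For the direction \eqref{rankonecondition} $\Rightarrow$ the displayed equation, left-multiply $L_1h+hL_1^\top=\sigma\sigma^\top$ by $D_\sigma^{-1}$. Since $\sigma\sigma^\top=D_\sigma\mathbbm{1}\mathbbm{1}^\top D_\sigma$, this gives $D_\sigma^{-1}(L_1h+hL_1^\top)=\mathbbm{1}\mathbbm{1}^\top D_\sigma=\mathbbm{1}\sigma^\top$, i.e. a matrix every row of which equals $\sigma^\top$. Left-multiplying once more by $\Lambda-Id$ and using $(\Lambda-Id)\mathbbm{1}=0$ annihilates the right-hand side; splitting the left-hand side into its two summands yields exactly $(\Lambda-Id)D_\sigma^{-1}L_1h+(\Lambda-Id)D_\sigma^{-1}hL_1^\top=0$.

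For the converse, suppose $(\Lambda-Id)D_\sigma^{-1}(L_1h+hL_1^\top)=0$ and set $N=D_\sigma^{-1}(L_1h+hL_1^\top)$. From $\big((\Lambda-Id)N\big)_{i,j}=N_{i+1,j}-N_{i,j}$ one reads off that each column of $N$ is constant in the row index, so $N=\mathbbm{1}b^\top$ with $b^\top$ the zeroth row of $N$; multiplying back by $D_\sigma$ produces the rank-one relation $L_1h+hL_1^\top=\sigma b^\top$. Symmetry of the left-hand side forces $\sigma b^\top=b\sigma^\top$, and since $\sigma_0\neq 0$ this gives $b=c\,\sigma$ with $c=b_0/\sigma_0$. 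Comparing $(0,0)$-entries, $c=2(L_1)_{0,0}h_0/\sigma_0^2$, which equals $1$ because in the present polynomial setting $2(L_1)_{0,0}h_0=2m_{0,1}=\alpha_0^2=\sigma_0^2$ — using $h_0=m_{0,0}$, the orthogonality normalisation $(L_1)_{0,0}=m_{0,1}/m_{0,0}$ (read off from $L_1P(x)=xP(x)$ with $P_0\equiv1$), the $x\leftrightarrow y$ symmetry of the Cauchy moments \eqref{cbopmoment} together with the $(0,0)$-component of \eqref{rankone}, and $\sigma_0=(S_1\alpha)_0=\alpha_0$. Hence $L_1h+hL_1^\top=\sigma\sigma^\top$. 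The one place that really needs care is exactly this last normalisation: the operator $\Lambda-Id$ only detects $D_\sigma^{-1}(L_1h+hL_1^\top)$ up to its constant-column (rank-one) part, so a single scalar identity must be supplied externally to upgrade the conclusion from ``$\sigma b^\top$'' to ``$\sigma\sigma^\top$''; everything else is linear bookkeeping with the shift operators.
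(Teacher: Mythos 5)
Your proposal is correct, and on the forward implication it coincides exactly with the paper's own one-line argument: act with $(\Lambda-Id)D_\sigma^{-1}$ on both sides of \eqref{rankonecondition} and use $(\Lambda-Id)\mathbbm{1}=0$ to annihilate $D_\sigma^{-1}\sigma\sigma^\top=\mathbbm{1}\sigma^\top$. Where you genuinely go beyond the paper is the converse: the paper's proof consists only of the forward direction, even though the proposition asserts an equivalence, and it never discusses how to recover \eqref{rankonecondition} from the displayed identity. You correctly isolate the obstruction --- $(\Lambda-Id)N=0$ only forces $N$ to have constant columns, so one recovers $L_1h+hL_1^\top=\sigma b^\top$; symmetry of the left-hand side (valid because $h$ is diagonal) upgrades this to $c\,\sigma\sigma^\top$; and the scalar $c$ is invisible to $\Lambda-Id$, so it must be fixed by one external identity. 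Your normalisation $2(L_1)_{0,0}h_0=\sigma_0^2$ does check out in the polynomial setting of this subsection: $h_0=m_{0,0}$, $(L_1)_{0,0}=m_{1,0}/m_{0,0}$ from orthogonality of $P_1$, $m_{1,0}+m_{0,1}=\alpha_0^2$ directly from the Cauchy kernel in \eqref{cbopmoment}, and $\sigma_0=\alpha_0$, giving $c=1$. The one caveat worth recording is that this scalar input is precisely the $(0,0)$-entry of the condition being recovered, so as an abstract statement about an arbitrary vector $\sigma$ the ``equivalence'' holds only up to a rank-one rescaling $c\,\sigma\sigma^\top$; in the concrete Cauchy-moment setting the normalisation comes for free. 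You make this limitation explicit, which the paper glosses over; your argument is therefore both correct and more complete than the published proof.
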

\begin{proof}
Since $\mathbbm{1}$ is the null vector of $(\Lambda-Id)$, then one can immediately obtain this equation by acting $(\Lambda-Id)D_\sigma^{-1}$ on both sides of equation \eqref{rankonecondition}.
\end{proof}
Moreover, if we denote a matrix-value operator $$\mathcal{A}:=(\Lambda-Id)D_\sigma^{-1}L_1h,$$ then $\mathcal{A}\in M_{[-1,2]}$, where $M_{[-1,2]}$ is the set of band matrices with one sub-diagonals and two off-diagonals. It is because $L_1=S_1\Lambda S_1^{-1}$ is a lower Hessenberg matrix, we know $(\Lambda-Id)D_\sigma^{-1}L_1h$ has two off-diagonals while $(\Lambda-Id)D_\sigma^{-1}hL_1^\top$ is still an upper Hessenberg matrix, from which we could conclude $(\Lambda-Id)D_\sigma^{-1}L_1h$ belongs to $M_{[-1,2]}$. This fact implies a four term recurrence relation or a $3\times 3$ spectral problem.
\begin{corollary}\label{ftrrcbop}
The spectral problem $L_1\Phi=z\Phi$ could be decomposed into
\begin{align}\label{spec}
z(\Lambda-Id)D_\sigma^{-1}\Phi=\mathcal{A}h^{-1}\Phi.
\end{align}
\end{corollary}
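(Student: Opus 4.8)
The plan is to derive \eqref{spec} by a one-line substitution: unwind the definition $\mathcal{A}=(\Lambda-Id)D_\sigma^{-1}L_1h$, cancel $h$ against $h^{-1}$, and insert the spectral problem $L_1\Phi=z\Phi$, where $\Phi=P(z)=S_1\chi(z)$ is the eigenfunction of $L_1$ recalled in Section \ref{sec:review}. Before that I would record the two standing facts that make the manipulation legitimate: the operators in play act on formal series, with $L_1$ lower Hessenberg (hence row-finite) and $h,D_\sigma,\Lambda-Id$ banded, so all the products below are defined and associate; and $h,D_\sigma$ are genuinely invertible, since $h_n=\tau_{n+1}/\tau_n\neq0$ by the moment-matrix hypothesis while $\sigma_n=\int_{\mathbb{R}_+}P_n(x)\omega(x)\,dx\neq0$ is assumed throughout --- this nonvanishing is the real content hidden in the symbol $D_\sigma^{-1}$.

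With this in place the computation is immediate:
\begin{align*}
\mathcal{A}h^{-1}\Phi=(\Lambda-Id)D_\sigma^{-1}L_1hh^{-1}\Phi=(\Lambda-Id)D_\sigma^{-1}L_1\Phi=(\Lambda-Id)D_\sigma^{-1}(z\Phi)=z(\Lambda-Id)D_\sigma^{-1}\Phi,
\end{align*}
the third step being $L_1\Phi=z\Phi$ and the last using that $z$ is a scalar, which is precisely \eqref{spec}. Conversely, \eqref{spec} forces $D_\sigma^{-1}(L_1\Phi-z\Phi)$ to be a scalar multiple of $\mathbbm{1}$, hence recovers the spectral problem only up to the one-parameter term $L_1\Phi=z\Phi+c(z)\sigma$ permitted by the rank one shift \eqref{rankone}; this is why the statement is phrased as a decomposition rather than an equivalence.

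What gives \eqref{spec} its content is the band structure already recorded above: $\mathcal{A}h^{-1}\in M_{[-1,2]}$, while $z(\Lambda-Id)D_\sigma^{-1}$ touches only the diagonal and the first superdiagonal, so row $n$ of \eqref{spec} is a four-term relation among $\Phi_{n-1},\Phi_n,\Phi_{n+1},\Phi_{n+2}$ in which $z$ enters through two consecutive components only --- equivalently the $3\times3$ local spectral problem anticipated before the statement. For completeness I would recall that the $M_{[-1,2]}$ bound is exactly where the rank one shift condition is used: applying $(\Lambda-Id)D_\sigma^{-1}$ to \eqref{rankonecondition} and using $D_\sigma^{-1}\sigma=\mathbbm{1}$ gives the second representation $\mathcal{A}=-(\Lambda-Id)D_\sigma^{-1}hL_1^\top$, which is upper Hessenberg, so intersecting the band supports of the two representations confines $\mathcal{A}$ to the offsets $\{-1,0,1,2\}$. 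I do not anticipate a genuine obstacle here; the things to keep straight are the consistency of the two expressions for $\mathcal{A}$ and the standing nonvanishing of $h_n$ and $\sigma_n$.
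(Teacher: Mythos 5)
Your proof is correct and takes essentially the same route as the paper, which states the corollary without a separate proof precisely because it reduces to the one-line substitution $\mathcal{A}h^{-1}\Phi=(\Lambda-Id)D_\sigma^{-1}L_1hh^{-1}\Phi=z(\Lambda-Id)D_\sigma^{-1}\Phi$, with the band bound $\mathcal{A}\in M_{[-1,2]}$ established in the surrounding text exactly as you recall it (intersecting the supports of the two representations of $\mathcal{A}$ coming from the rank one shift condition). Your added remarks --- the standing nonvanishing of $h_n$ and $\sigma_n$ needed for $D_\sigma^{-1}$ and $h^{-1}$, and the converse observation that \eqref{spec} only recovers $L_1\Phi=z\Phi+c(z)\sigma$ because $\mathbbm{1}$ spans the kernel of $\Lambda-Id$ --- are correct and slightly sharper than what the paper records.
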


As has been indicated in 2d-Toda theory, the Lax matrix is usually a nonlocal upper Hessenberg matrix and is difficult to analyse, while under the rank one shift condition, this spectral problem could be decomposed into a local $3\times 3$ spectral problem and some further explanations would be made in the sequent part.

\subsection{Lax Integrability of the C-Toda Lattice}\label{laxpair}
It has been shown that the rank one shift condition gives rise to a local spectral problem and the involvement of time parameters indicates an evolutionary equation. In this part, we'd like to go further with the Lax representation for the C-Toda hierarchy. Under the constraint ``$s=-t$'', the moment matrix $m_\infty$ now is dependent on time variables $\{t_k\}_{k=1,2,\cdots}$ and it satisfies the time evolution
\begin{align}\label{timeevolution}
\p_{t_k}m_\infty=\Lambda^k m_\infty+m_\infty\Lambda^{\top k}.
\end{align}
From the time-dependent Borel decomposition $m_\infty(t)=S_1^{-1}(t)S_2(t)$ and define $L_1(t)=S_1(t)\Lambda S_1^{-1}(t)$, $L_2(t)=S_2(t)\Lambda^\top S_2^{-1}(t)$, we know
\begin{align*}
S_1\frac{\p m_\infty}{\p t_n}S_2^{-1}=L_1^n+L_2^n=-\frac{\p S_1}{\p t_n}S_1^{-1}+\frac{\p S_2}{\p t_n}S_2^{-1}\in\mathfrak{g}_-\oplus\mathfrak{g}_+,
\end{align*}
where $\mathfrak{g}_-$ (respectively $\mathfrak{g}_+$) is the Lie algebra of strictly lower triangular matrices (resp. Lie algebra of upper triangular matrices).
Therefore, according to the Lie algebra splitting, one could obtain
\begin{align}\label{lax1}
\frac{\p S_1}{\p t_n}=-(L_1^n+L_2^n)_-S_1, \quad \frac{\p L_1}{\p t_n}=[-(L_1^n+L_2^n)_-,L_1].
\end{align}
As has been shown in Proposition \ref{symmetricinitial}, we know $S_1^{-1}(t)=S_2^\top(t)h^{-1}(t)$. This fact leads to the relation $L_1=hL_2^\top h^{-1}$, where $h$ is a diagonal matrix admitting the form $h=\text{diag}\{h_0,h_1,\cdots\}$ with $h_n={\tau_{n+1}}/{\tau_n}$.

 Therefore, the Lax equation \eqref{lax1} could be re-expressed as 
\begin{align}\label{lax}
\frac{\p L_1}{\p t_n}=[-(L_1^n+hL_1^{\top n}h^{-1})_-,L_1],
\end{align}
dependent on Lax operator $L_1$ and a diagonal matrix $h$.
Now, we try to give an explicit expression for the Lax operator $L_1$ in terms of nonlinear variables (or symmetric tau-functions introduced before).
With the help of bilinear inner product \eqref{innerproduct}, time evolution \eqref{timeevolution} and rank one shift condition \eqref{rankonecondition}, we could obtain the following four-term recurrence relation and time evolution for the Cauchy bi-orthogonal polynomials \cite[Equation 2.4, Equation 3.2]{li2019}
\begin{align}\label{comp}
\begin{aligned}
&x(P_{n+1}(x;t)+a_nP_n(x;t))=P_{n+2}(x;t)+b_nP_{n+1}(x;t)+c_nP_n(x;t)+d_nP_{n-1}(x;t),\\
&\p_{t_1}P_{n+1}(x;t)+a_n\p_{t_1}P_n(x;t)=a_n\p_{t_1}(\log h_n)P_n(x;t)
\end{aligned}
\end{align}
with the coefficients
\begin{align*}
a_n=-\frac{\sigma_{n+1}\tau_n}{\sigma_n\tau_{n+1}}, \quad b_n=a_n+\frac{1}{2}\p_{t_1}\log h_{n+1},\quad c_n=-\frac{h_{n+1}}{h_n}-\frac{1}{2}a_n\p_{t_1}\log h_n,\quad d_n=-a_n\frac{h_n}{h_{n-1}}.
\end{align*}
If we denote $a=\text{diag}(a_1,a_2,\cdots)$, $b=\text{diag}(b_0,b_1,\cdots)$, $c=\text{diag}(c_1,c_2,\cdots)$ and $d=\text{diag}(d_2,d_3,\cdots)$, then the above recurrence relation and time evolution could be written as
\begin{subequations}
\begin{align}
&(\Lambda^0+\Lambda^{-1}a)xP(x;t)=(\Lambda^1+\Lambda^0b+\Lambda^{-1}c+\Lambda^{-2}d)P(x;t),\label{ftr}\\
&(\Lambda^0+\Lambda^{-1}a)\p_{t_1}P(x;t)=(\Lambda^{-1}e)P(x;t)
\end{align}
\end{subequations}
with $e=\text{diag}(e_1,e_2,\cdots)$ and $e_n=2(b_{n-1}-a_{n-1})a_n$. Here the equation \eqref{ftr} is another equivalent form of Proposition \ref{ftrrcbop}. From the explicit expressions for the symmetric Cauchy biorthogonal polynomials, we can state the following proposition.

\begin{proposition}
The Lax matrix of C-Toda lattice admits the expression 
\begin{align*}
L_1=(\Lambda^0+\Lambda^{-1}a)^{-1}(\Lambda^1+\Lambda^0b+\Lambda^{-1}c+\Lambda^{-2}d).
\end{align*}
Moreover, there exists
\begin{align*}
(\Lambda^0+\Lambda^{-1}a)^{-1}(\Lambda^{-1}e)=-(L_1+hL_1^\top h^{-1})_-.
\end{align*}
\end{proposition}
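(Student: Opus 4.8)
The plan is to read both identities off the polynomial equations \eqref{ftr} and the matrix form of the $t_1$-evolution, using the dictionary $P(x;t)=S_1\chi(x)$ together with the spectral problem attached to $L_1$. The key preliminary remark is that $L_1=S_1\Lambda S_1^{-1}$ gives $L_1P(x;t)=S_1\Lambda\chi(x)=S_1\bigl(x\chi(x)\bigr)=xP(x;t)$, so on the column $P(x;t)$ the multiplication operator ``$x$'' is realised by $L_1$.

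For the first identity I would substitute $xP(x;t)=L_1P(x;t)$ into \eqref{ftr}, obtaining
\[
(\Lambda^0+\Lambda^{-1}a)\,L_1\,P(x;t)=(\Lambda^1+\Lambda^0b+\Lambda^{-1}c+\Lambda^{-2}d)\,P(x;t).
\]
Since $P(x;t)=S_1\chi(x)$ with $S_1$ invertible (lower triangular, unit diagonal) and the entries of $\chi(x)=(1,x,x^2,\dots)$ are linearly independent functions of $x$, an identity $A\,P(x;t)=B\,P(x;t)$ holding for all $x$ forces $A=B$ as semi-infinite matrices (each matrix here has finitely many nonzero entries per row --- lower Hessenberg for $L_1$, banded for the others --- so its action on $\chi(x)$ is well defined entrywise). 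Hence $(\Lambda^0+\Lambda^{-1}a)\,L_1=\Lambda^1+\Lambda^0b+\Lambda^{-1}c+\Lambda^{-2}d$, and since $\Lambda^0+\Lambda^{-1}a$ is lower triangular with unit diagonal, hence invertible, left multiplication by its inverse gives the claimed formula for $L_1$.

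For the second identity I would differentiate $P(x;t)=S_1(t)\chi(x)$ in $t_1$; as $\chi(x)$ is $t$-independent this gives $\p_{t_1}P(x;t)=(\p_{t_1}S_1)\chi(x)$. Plugging in the Lax flow \eqref{lax1} at $n=1$, $\p_{t_1}S_1=-(L_1+L_2)_-S_1$, and using $S_2=hS_1^{-\top}$ from Proposition \ref{symmetricinitial} under the ``$s=-t$'' constraint (so that $L_2=S_2\Lambda^\top S_2^{-1}=hL_1^{\top}h^{-1}$) yields $\p_{t_1}P(x;t)=-(L_1+hL_1^{\top}h^{-1})_-\,P(x;t)$. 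Comparing with the matrix form of the polynomial time evolution, $\p_{t_1}P(x;t)=(\Lambda^0+\Lambda^{-1}a)^{-1}(\Lambda^{-1}e)\,P(x;t)$, and removing the factor $P(x;t)$ as before, I obtain $(\Lambda^0+\Lambda^{-1}a)^{-1}(\Lambda^{-1}e)=-(L_1+hL_1^{\top}h^{-1})_-$. A quick consistency check is that the left side is visibly strictly lower triangular, matching the ``$-$'' projection on the right.

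The computations are otherwise mere substitution; the one point deserving care, and the part I regard as the crux, is the passage from an identity valid on the family $\{P(x;t)\}_x$ to an identity of semi-infinite matrices --- this rests on the invertibility of $S_1$ and on the fact that every operator involved has finitely many nonzero entries in each row, so products are formed row by row with finite sums. I do not anticipate a deeper obstacle beyond bookkeeping of triangularity and verifying the elementary relation $L_2=hL_1^{\top}h^{-1}$.
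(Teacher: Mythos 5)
Your proof is correct, and for the main (second) identity it follows a genuinely different route from the paper's. The first identity is handled the same way in both places: the paper simply declares it obvious from the four-term recurrence, and you supply the clean justification (strip $P(x;t)=S_1\chi(x)$ off both sides using that the $P_n$ are monic of degree $n$, hence linearly independent, and that all matrices involved are row-finite). For the second identity the paper proceeds by brute force: it expands $(\Lambda^0+\Lambda^{-1}a)^{-1}=\sum_{j\geq0}(-\Lambda^{-1}a)^j$, computes the strictly lower part of $L_1+h\Lambda^{-1}h^{-1}$ diagonal by diagonal, and verifies the resulting scalar identities such as $-e_{n-1}=a_{n-1}a_{n-2}-a_{n-1}b_{n-2}+c_{n-1}+h_nh_{n-1}^{-1}$ by substituting the explicit values of $a_n,b_n,c_n,d_n,e_n$ from \eqref{comp}. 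You instead derive the identity structurally: $\p_{t_1}P=(\p_{t_1}S_1)\chi=-(L_1+L_2)_-P$ from the Lax flow \eqref{lax1} together with $L_2=hL_1^\top h^{-1}$, equated against the matrix form of the polynomials' $t_1$-evolution, and then strip off $P$. Your argument is shorter and exposes why the identity must hold (it is forced by the coexistence of the two evolution equations), at the cost of outsourcing the analytic content to the time-evolution relation quoted from \cite{li2019}; the paper's computation is heavier but self-contained at the level of the explicit coefficients and doubles as a consistency check that those quoted coefficients are compatible with the Lie-algebra splitting. One small point worth keeping in mind when you invoke row-finiteness: $hL_1^\top h^{-1}$ itself is upper Hessenberg and hence \emph{not} row-finite, but only its strictly lower projection $(\cdot)_-$ enters the identity, and that projection is row-finite, so your cancellation of $P$ remains valid.
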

\begin{proof}
The first equation is obvious since the Cauchy bi-orthogonal polynomials can be viewed as a modification of the wave function and thus the spectral operator $L_1$ can be characterised by the four term recurrence relation. The only thing we need to verify is to show the second equality, i.e.
\begin{align*}
\{(\Lambda^0+\Lambda^{-1}a)^{-1}(\Lambda^1+\Lambda^0b+\Lambda^{-1}c+\Lambda^{-2}d)+h\Lambda^{-1}h^{-1}\}_{\Lambda^{<0}}=-(\Lambda^0+\Lambda^{-1}a)^{-1}(\Lambda^{-1}e)
\end{align*}
after taking the exact value of $L_1$.
Equally, we plan to show
\begin{align*}
\left\{\sum_{j=0}^\infty (-\Lambda^{-1}a)^j(\Lambda^1+\Lambda^0b+\Lambda^{-1}c+\Lambda^{-2}d)+h\Lambda^{-1} h^{-1}\right\}_{\Lambda^{<0}}=-\sum_{j=0}^\infty (-\Lambda^{-1}a)^j(\Lambda^{-1}e).
\end{align*}
For the $\Lambda^{-1}$ term, it is to verify
\begin{align*}
&-\Lambda^{-1}e=\Lambda^{-1}a\Lambda^{-1}a\Lambda^1-\Lambda^{-1}a\Lambda^0b+\Lambda^{-1}c+h\Lambda^{-1}h^{-1},
\end{align*}
and it is equal to verify 
\begin{align*}
 -e_{n-1}=a_{n-1}a_{n-2}-a_{n-1}b_{n-2}+c_{n-1}+h_nh_{n-1}^{-1}.
\end{align*}
By taking the expressions of the coefficients in \eqref{comp}, it is easy to find the equation is valid.

For the reminding parts, firstly, notice that for $j\geq1$,
\begin{align*}
-(-\Lambda^{-1}a)^j\Lambda^{-1}e=(-\Lambda^{-1}a)^{j+2}\Lambda^1+(-\Lambda^{-1}a)^{j+1}\Lambda^0b+(-\Lambda^{-1}a)^j\Lambda^{-1}c+(-\Lambda^{-1}a)^{j-1}\Lambda^{-2}d
\end{align*}
could be expressed as
\begin{align*}
a_{n-1}e_{n-2}=-a_{n-1}a_{n-2}a_{n-3}+a_{n-1}a_{n-2}b_{n-3}-a_{n-1}c_{n-2}+d_{n-1}
\end{align*}
 if we make a cancellation of the highest order term and make a shift of the index.
 Moreover, this equation could be verified if the values of the coefficients are taken, and thus complete this proof.
\end{proof}

\begin{remark}
This hierarchy is interesting in two folds. One is in the classical integrable system theory. As is shown in \cite{chang18}, the first member of the hierarchy could be viewed as the continuum limit of the full discrete CKP equation, and therefore, this hierarchy in some sense could be viewed as a discrete CKP hierarchy. The second is pointed out in \cite{zuo19}, in which the Lax matrix of the form
\begin{align*}
L=(\Lambda-a_{l+2})^{-1}(\Lambda^{k+1}+a_1\Lambda^k+\cdots+a_{l+1}\Lambda^{k-l}),\quad 1\leq k<l
\end{align*}
plays an important role when studying the Frobenius manifold isomorphic to the orbit space of the extended affine Weyl group $\tilde{W}^{(k,k+1)}(A_l)$. One can see when $k=1$ and $l=2$, this Lax matrix implies a four term recurrence relationship of bi-orthogonal polynomials and Cauchy bi-orthogonal polynomials provides a reasonable example in this case.
\end{remark}

At the end of this section, we would like to remark that the Cholesky decomposition discussed in Section \ref{Cholesky} could result in an equivalent Lax representation for this hierarchy. As before, consider $m_\infty=S^{-1}S^{-\top}$ with $S$ the lower triangular matrix with nonzero diagonals. By defining the Lax matrix $L=S\Lambda S^{-1}$ and from the commuting vector field \eqref{timeevolution}, one could obtain
\begin{align*}
-\frac{\p S}{\p t_n}S^{-1}-\frac{\p S^\top}{\p t_n}S^{-\top}=L^n+L^{\top n}.
\end{align*}
Note that $S$ and $S^\top$ admit the same diagonals, therefore, from the Lie algebra decomposition, one has
\begin{align*}
\frac{\p S}{\p t_n}=-[(L^n)_-+(L^{\top n})_{-}+(L^n)_0]S,
\end{align*}
where $-$ and $0$ stand for the strictly lower triangular and diagonal part of the matrix respectively.

\section{Skew symmetric reduction and B-Toda hierarchy}\label{sec:skewsymmetric}
In this section, we would like to consider skew symmetric reduction to the moment matrix, comparing with the symmetric reduction given in the last section. In Subsection \ref{4.1}, we would like to give a review to the skew symmetric reduction and restate the basic ideas about the skew Borel decomposition and the concept of skew orthogonal polynomials. The previous results are important in two aspects. One is that the Pfaff lattice hierarchy could be embedded into 2d-Toda hierarchy and the second is due to the explicit connection between Dyson's unitary matrix model and the symmetric, symplectic random ensembles. The latter demonstrates the relationship between the orthogonal polynomials and skew orthogonal polynomials, which are indeed the wave functions of Pfaff lattice and Toda lattice \cite{adler00,adler02}. 

However, there are some problems still remained in the skew symmetric reductions. In \cite{kac97}, Kac and de van Leur have found a large BKP hierarchy and demonstrated later in \cite{vandeleur01} that the Pfaff lattice hierarchy should be a subclass of the large BKP hierarchy with only even indexed tau functions. In \cite{adler2002}, Adler and van Moerbeke left the exact relation between Pfaff lattice and large BKP hierarchy as an ongoing problem. 
Very recently, Chang et al showed that if the odd indexed Pfaffian tau functions are introduced, some integrable lattices could be derived with the concept of partial skew orthogonal polynomials \cite{chang182}. Therefore, in Subsection \ref{4.2}, we give some new properties about partial skew orthogonal polynomials, from which  an alternative expressions for the odd indexed wave functions for Pfaff lattice hierarchy would be provided. With the help of the odd indexed wave function, we prove that the large BKP hierarchy and the Pfaff lattice hierarchy are exactly the same in the sense of wave functions/tau functions. The second interesting problem is the tau function of the Pfaff lattice hierarchy/large BKP hierarchy are not exactly the tau functions of BKP hierarchy in Hirota's sense since the tau functions of BKP hierarchy should satisfy the derivative law of Gram-type Pfaffian, which motivates us to consider the rank two shift condition here. In Subsection \ref{4.3}, we consider the rank two shift condition for the skew symmetric case in details and embed them into 2d-Toda theory. It is shown that the B-Toda lattice is the first nontrivial example in this case which is not listed in the large BKP hierarchy. Therefore, we make an effort in the rank two shift condition and finally give a novel sub-hierarchy under this condition. A recurrence relation for the particular partial skew orthogonal polynomials is given in terms of the Fay identity and some discussions about the Lax integrability are made at the end of this section.

\subsection{The Skew Symmetric Reduction on the 2d-Toda Hierarchy}\label{4.1}
Consider a skew symmetric moment matrix $m_{2k}$ of the block form
\begin{align*}
m_{2k}=\left[\begin{array}{cc}
M_1&-C^{\top}\\
C&M_2
\end{array}
\right],
\end{align*} 
where $M_1$ is a nonzero $2\times 2$ skew symmetric matrix and $M_2$ is a $(2k-2)\times (2k-2)$ skew symmetric matrix and $C$ is a $(2k-2)\times 2$ matrix. By applying the decomposition
\begin{align*}
\left[\begin{array}{cc}
M_1&-C^\top\\
C&M_2
\end{array}
\right]=\left[\begin{array}{cc}
I&\\
CM_1^{-1}&I
\end{array}
\right]\left[\begin{array}{cc}
M_1&\\
&M_2+CM_1^{-1}C^\top
\end{array}
\right]\left[\begin{array}{cc}
I&-M_1^{-1}C^\top\\
&I
\end{array}
\right],
\end{align*}
and iterating the process, one can see the skew symmetric moment matrix with even order can be decomposed into the form $m_\infty=S^{-1}hS^{-\top}$, where
\begin{align*}
S\in\{A|\,\text{$A$ is a strictly lower triangular matrix in the sense of $2\times2$ block matrix}\},
\end{align*}
\begin{align}\label{sop}
h=\text{diag}\left\{\left(\begin{array}{cc}
0&h_0\\
-h_0&0\end{array}
\right),\left(\begin{array}{cc}
0&h_1\\
-h_1&0\end{array}
\right),\cdots\right\},\quad h_n=\frac{\hat{\tau}_{2n+2}}{\hat{\tau}_{2n}},\quad \hat{\tau}_{2n}=\Pf(m_{i,j})_{i,j=0}^{2n-1}
\end{align}
with the notation $\hat{\tau}_0=1$.

\begin{remark}
In fact, this kind of decomposition has been considered in \cite{adler1999,adler2002} while they put more emphasis on the standard decomposition $m_\infty=\hat{S}^{-1}J\hat{S}^{-\top}$, where $J$ is the matrix of the form
\begin{align*}
J=\text{diag}\left(
J_{2\times 2},\, J_{2\times 2},\cdots
\right),\quad J_{2\times 2}=\left(\begin{array}{cc}
0&1\\
-1&0
\end{array}\right),
\end{align*}
and $\hat{S}$ is the lower triangular block matrix with nonzero diagonals. This standard decomposition is called the skew Borel decomposition and corresponds to the Lie algebra splitting $gl_\infty=k\oplus sp(\infty)$, where $k$ is the Lie algebra of lower triangular matrix with diagonals (in the sense of $2\times 2$ block matrix) and $sp(\infty)$ is the Lie algebra of Symplectic group.
\end{remark}

Two main results have already been known from this skew symmetric LU decomposition. One is to construct skew orthogonal polynomials and corresponding Christoffel-Darboux kernel \cite{adler1999,forrester19}. Starting from the dressing operator $S$, one can set a family of polynomials $\{P_n(x)\}_{n\geq0}$ such that 
\begin{align}\label{sop2}
\begin{aligned}
&(1)\, P(x)=S\chi(x);\\
&(2)\, \langle P(x),P^\top(y)\rangle=h,
\end{aligned}
\end{align}
where $P(x)$ is the vector of skew orthogonal polynomials $P(x)=(P_0(x),P_1(x),\cdots)^\top$, $\langle \cdot,\cdot\rangle$ is a skew symmetric inner product defined by $\langle x^i,y^j\rangle=m_{i,j}$ such that $m_{i,j}=-m_{j,i}$ and $h$ is given in \eqref{sop}. Moreover, this family of skew orthogonal polynomials could be written in explicit Pfaffian expressions \cite{adler1999,chang182,forrester19}
\begin{align}\label{sops}
P_{2n}(x)=\frac{1}{\htau_{2n}}\Pf(0,\cdots,2n,x),\quad P_{2n+1}(x)=\frac{1}{\htau_{2n}}\Pf(0,\cdots,2n-1,2n+1,x),
\end{align}
where $\htau_{2n}=\Pf(0,\cdots,2n-1)$, $\Pf(i,j)=m_{i,j}$ and $\Pf(i,x)=x^i$. 

The other is to connect with integrable hierarchy by the introduction of evolutions \cite{adler1999,adler2002}. Let's consider the moment matrix admitting the form
\begin{align*}
m_\infty(t,s)=e^{\sum_{i=1}^\infty t_i\Lambda^i}m_\infty(0,0)e^{-\sum_{i=1}^\infty s_i\Lambda^{\top i}}
\end{align*}
with initial moment matrix $m^\top_\infty(0,0)=-m_\infty(0,0)$. The constraint ``$s=-t$'' shows us the commuting vector field
\begin{align}\label{tes}
\p_{t_n} m_\infty=\Lambda^n m_\infty+m_\infty\Lambda^{\top n},
\end{align}
and therefore the skew orthogonal polynomials have the relation
\begin{align}\label{rel1}
(z+\p_{t_1})(\htau_{2n}P_{2n}(x;t))=\htau_{2n}P_{2n+1}(x;t),
\end{align}
which is a mixture of the spectral problem and time evolution.
Moreover, the skew symmetric tau functions $\{\htau_{2n}(t)\}_{n\geq0}$ defined in \eqref{sop} have the following connections with 2d-Toda's tau functions $\{\tau_n(t,s)\}_{n\geq0}$ via relations (see \cite[Theorem 2.2]{adler2002})
\begin{align*}
\tau_{2n}(t+[\alpha]-[\beta],-t)&=\htau_{2n}(t)\htau_{2n}(t+[\alpha]-[\beta]), \\ 
\tau_{2n+1}(t+[\alpha]-[\beta],-t)&=(\alpha-\beta)\htau_{2n}(t-[\beta])\htau_{2n+2}(t+[\alpha])
\end{align*}
by making use of the same manner mentioned in Proposition \ref{prop1}.
Furthermore, by expanding these tau functions in terms of $\alpha$ and $\beta$ and compare with the coefficients, one can show \begin{align}
p_k(\tilde{\p}_t)p_l(-\tilde{\p}_t)\tau_{2n}(t,s)|_{s=-t}&=\htau_{2n}(t)p_k(\tilde{\p}_t)p_l(-\tilde{\p}_t)\htau_{2n}(t)\label{even},\\
p_k(\tilde{\p}_t)p_l(-\tilde{\p}_t)\tau_{2n+1}(t,s)|_{s=-t}&=p_l(-\tilde{\p}_t)\htau_{2n}(t)p_{k-1}(\tilde{\p}_t)\htau_{2n+2}(t)-p_{l-1}(-\tilde{\p}_t)\htau_{2n}(t)p_k(\tilde{\p}_t)\htau_{2n+2}(t).\nonumber
\end{align}
Therefore, these relations result in the Pfaff lattice hierarchy \cite{adler2002} (or DKP hierarchy \cite{jimbo1983})
\begin{align}\label{pfafflattice}
\begin{aligned}
0&=\oint_{C_\infty} \htau_{2n}(t-[z^{-1}])\htau_{2m+2}(t'+[z^{-1}])e^{\xi(t-t',z)}z^{2n-2m-2}dz\\
&+\oint_{C_0}\htau_{2n+2}(t+[z])\htau_{2m}(t'-[z])e^{\xi(t'-t,z^{-1})}z^{2n-2m}dz.
\end{aligned}
\end{align}
Interestingly, this bilinear identity could be written in terms of the wave functions
\begin{align}\label{pfaffwave}
\oint_{C_\infty}\Phi_{1}(t;z)\otimes h^{-1}\Phi_{2}(t';z^{-1})\frac{dz}{2\pi iz}+\oint_{C_0}\Phi_{2}(t;z)\otimes h^{-1}\Phi_{1}(t';z^{-1})\frac{dz}{2\pi iz}=0,
\end{align}
where wave functions $\Phi_1$ and $\Phi_2$ are defined by
\begin{align*}
\Phi_1(t,z)=e^{\sum_{i=1}^\infty t_i\Lambda^i}S(t)\chi(z),\quad \Phi_2(t,z)=h(t)e^{-\sum_{i=1}^\infty t_i\Lambda^{\top i}}S^{-\top}(t)\chi(z).
\end{align*}
It should be mentioned that with the explicit expressions of wave functions \cite[Theorem 3.2]{adler02}
\begin{subequations}
\begin{align}
\Phi_{1,2n}(t,z)&=e^{\sum_{i=1}^\infty t_iz^i}z^{2n}\frac{\tau_{2n}(t-[z^{-1}])}{\tau_{2n}(t)},\label{1e}\\
\Phi_{1,2n+1}(t,z)&=e^{\sum_{i=1}^\infty t_iz^i}z^{2n}\frac{(z+\p_{t_1})\tau_{2n}(t-[z^{-1}])}{\tau_{2n}(t)},\label{1o}\\
\Phi_{2,2n}(t,z)&=e^{-\sum_{i=1}^\infty t_iz^{-i}}z^{2n+1}h_{n}\frac{\tau_{2n+2}(t+[z])}{\tau_{2n+2}(t)},\label{2e}\\
\Phi_{2,2n+1}(t,z)&=-e^{-\sum_{i=1}^\infty t_iz^{-i}}z^{2n+1}h_n\frac{(z^{-1}-\p_{t_1})\tau_{2n+2}(t+[z])}{\tau_{2n+2}(t)},\label{2o}
\end{align}
\end{subequations}
the Pfaff lattice \eqref{pfafflattice} could be obtained by taking $\Phi_{1,2n}$ and $\Phi_{2,2m}$ into \eqref{pfaffwave}. Moreover, if one takes $t_1$-derivative or $t'_1$-derivative on both sides of $\Phi_{1,2n}$ and $\Phi_{2,2m}$, and employs equation \eqref{rel1}, the equation regarding $\Phi_{1,2n+1}$ and $\Phi_{2,2m+1}$ would be obtained.

Although these tau functions could be regarded as the most general case regarding the skew symmetric reduction, there should be another family of the tau functions about the odd-number index, which has been pointed out in \cite{vandeleur01}, to find out the tau functions of the large BKP hierarchy (or $D_\infty$ hierarchy in Sato school's framework \cite[Section 7]{jimbo1983}). In fact, there have been many discussions about the odd-indexed Pfaffian tau functions \cite{hu17,orlov16,vandeleur15,wang19} in recent years, but few are made about the wave functions. In the next parts, we would like to show the odd indexed Pfaffian tau functions would provide an alternative expressions for the odd indexed wave functions, and therefore to show the wave equations \eqref{pfaffwave} regarding \eqref{1o} and \eqref{2o} would lead  to the other cases of large BKP hierarchy.
\subsection{Partial Skew Orthogonal Polynomials, Skew Orthogonal Polynomials and Large BKP Hierarchy}\label{4.2}
Unlike the free charged fermions whose Fock space is generated by a single vacuum state, the Fock space of neutral fermions are generated by two different vacuum state---$|\text{vac}\rangle$ and $\phi_0|\text{vac}\rangle$ \cite{jimbo1983,wang19}. Therefore, the Fermion-Boson correspondence tells us the tau functions of the BKP and DKP hierarchy should be labelled by even and odd numbers respectively. Therefore, in \cite{hu17,orlov16,vandeleur15,wang19}, the authors have demonstrated the odd-indexed Pfaffian tau functions $\{\htau_{2n+1}\}_{n\in\mathbb{N}}$ for the BKP hierarchy or large BKP hierarchy from different perspectives.
The intimate connection between tau functions and orthogonal polynomials motivates the concept of the skew orthogonal polynomials related to the odd indexed tau functions. Some results are given in \cite{chang182}. 

Let's consider a skew symmetric bilinear inner product $\langle\,\cdot,\cdot\,\rangle$ from $\mathbb{R}[x]\times\mathbb{R}[y]\mapsto\mathbb{R}$, then the partial skew orthogonal polynomials $\{\tp_n(x)\}_{n\in\mathbb{N}}$ are defined under the this  inner product by
\begin{align}\label{psops}
\begin{aligned}
\tilde{P}_{2n}(x)&=\frac{1}{\htau_{2n}}\Pf\left(
\begin{array}{cc}
m_{i,j}&x^i\\
-x^j&0
\end{array}
\right)_{i,j=0}^{2n},\quad \htau_{2n}=\Pf(m_{i,j})_{i,j=0}^{2n-1}\\
 \tilde{P}_{2n+1}(x)&=\frac{1}{\htau_{2n+1}}\Pf\left(
 \begin{array}{ccc}
 0&\alpha_j&0\\
 -\alpha_i&m_{i,j}&x^i\\
 0&-x^j&0
 \end{array}
 \right)_{i,j=0}^{2n+1},\quad \htau_{2n+1}=\Pf\left(
 \begin{array}{cc}
 0&\alpha_i\\
 -\alpha_j&m_{i,j}
 \end{array}\right)_{i,j=0}^{2n},
\end{aligned}
\end{align}
where the general skew symmetric bi-moments 
\begin{align*}
\text{$m_{i,j}=\langle\, x^i,y^j\,\rangle:=\int_{\Gamma^2}x^iy^j\omega(x,y)d\mu(x)d\mu(y)$ with $\omega(x,y)=-\omega(y,x)$,}
\end{align*} and $\alpha_i$ is the single moments defined by $\alpha_i=\int_{\Gamma}x^id\mu(x)$. To make the moments are finite, we assume that measure $d\mu(x)$ decay fast at the boundary of the supports $\p\Gamma$. 

Remarkably, some specified choices of $\omega(x,y)$ have been indicated in \cite{orlov16}, well motivated from combinatorics, quantum Hall effect as well as different random matrix ensembles in cases of Mehta-Pandey interpolating ensemble, matrix models with orthogonal and symplectic symmetries, Bures ensemble, and so on. Later on, for simplicity, we denote $\Pf(i,j)=m_{i,j}$, $\Pf(d_0,i)=\alpha_i$, $\Pf(i,x)=x^i$ and $\Pf(d_0,x)=0$, and thus the notations in \eqref{psops} could be alternatively written as
\begin{align*}
\tp_{2n}(x)=\frac{1}{\htau_{2n}}\Pf(0,\cdots,2n,x),\quad \tp_{2n+1}(x)=\frac{1}{\htau_{2n+1}}\Pf(d_0,0,\cdots,2n+1,x)
\end{align*} 
with $\htau_{2n}=\Pf(0,\cdots,2n-1)$ and $\htau_{2n+1}=\Pf(d_0,0,\cdots,2n)$.

Firstly, we'd like to give some properties about the partial skew orthogonal polynomials and their connections with skew orthogonal polynomials. 
\begin{proposition}
Under the skew symmetric inner product $\langle\,\cdot,\cdot\,\rangle$, one has
\begin{align*}
&\langle \tp_{2n}(x),\tp_{2m}(y)\rangle=\langle \tp_{2n+1}(x),\tp_{2m+1}(y)\rangle=0,\\
&\langle \tp_{2n}(x),\tp_{2m+1}(y)\rangle=\frac{\htau_{2n+2}}{\htau_{2n}}\delta_{n.m},\quad \langle \tp_{2n+1}(x),\tp_{2m}(y)\rangle=-\frac{\htau_{2n+2}\htau_{2m+1}}{\htau_{2n+1}\htau_{2m}},\quad m\leq n.
\end{align*}
\end{proposition}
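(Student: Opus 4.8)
The plan is to compute each of the four inner products by expanding the partial skew orthogonal polynomials via their Pfaffian definitions \eqref{psops} and applying the skew-symmetric bilinear form term by term, reducing everything to Pfaffian identities (expansion along a row/column and the Pfaffian analogue of cofactor expansion). First I would record the basic action of the inner product on the Pfaffian building blocks: $\langle \Pf(i,x),\,y^j\rangle=\langle x^i,y^j\rangle=m_{i,j}=\Pf(i,j)$, and $\langle 1,\,y^j\rangle$ together with the auxiliary symbol $\Pf(d_0,\cdot)$ account for the $\alpha_j=\Pf(d_0,j)$ entries. Thus $\langle \tp_{2n}(x),\tp_{2m}(y)\rangle$ becomes, up to the normalisation $1/(\htau_{2n}\htau_{2m})$, a double Pfaffian expansion in which the ``$x$'' slot of the first polynomial is contracted against the ``$y$'' slot of the second; the result is (essentially by the Pfaffian expansion formula) a Pfaffian of a matrix built from $\{0,\dots,2n\}\times\{0,\dots,2m\}$ that has a repeated structure forcing it to vanish. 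The same mechanism — contraction producing a Pfaffian with a repeated index block or an odd-sized skew block — kills $\langle \tp_{2n+1}(x),\tp_{2m+1}(y)\rangle$.

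For the nonzero cases I would proceed more carefully. For $\langle \tp_{2n}(x),\tp_{2m+1}(y)\rangle$ with $m\le n$, after contracting the $x$-slot of $\tp_{2n}$ against the $y$-slot of $\tp_{2m+1}$ one obtains $\frac{1}{\htau_{2n}\htau_{2m+1}}\Pf(d_0,0,\dots,2m+1,0,\dots,2n)$, and the key step is to recognise that when $m\le n$ the index list $\{0,\dots,2m+1\}\cup\{0,\dots,2n\}$ collapses: the duplicated indices $0,\dots,2m+1$ appear in a way that reduces the big Pfaffian to $\Pf(0,\dots,2n+1)$ times a sign, i.e. to $\htau_{2n+2}$, while the $d_0$-row gets absorbed. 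When $n=m$ this gives exactly $\htau_{2n+2}/\htau_{2n}$ and the Kronecker delta appears because for $m<n$ a different (vanishing) combination results — so I need to check the two sub-cases $m=n$ and $m<n$ separately for this pairing, which is where the stated $\delta_{n,m}$ comes from. For $\langle \tp_{2n+1}(x),\tp_{2m}(y)\rangle$ with $m\le n$, the same contraction yields $\frac{1}{\htau_{2n+1}\htau_{2m}}\Pf(d_0,0,\dots,2n+1,0,\dots,2m)$; here the duplication is on the shorter list $\{0,\dots,2m\}$, and collapsing it leaves $\Pf(d_0,0,\dots,2n+1)$ times a residual factor that must be identified as $\htau_{2m+1}$ — explaining both the minus sign (from reordering the $d_0$ entry past the repeated block) and the quotient $\htau_{2n+2}\htau_{2m+1}/(\htau_{2n+1}\htau_{2m})$.

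The main obstacle I anticipate is bookkeeping the signs and the index-collapse lemma precisely: one needs a clean statement of the form ``a Pfaffian whose argument list contains a consecutive block $0,1,\dots,k$ repeated equals (up to an explicit sign) the Pfaffian with that block deleted once, times $\Pf(0,\dots,k-1)$,'' together with its $d_0$-decorated variant, and then to track how the ordering of $d_0$ relative to the repeated block produces the $(-1)$ in the last identity. I would isolate this as a small combinatorial lemma on Pfaffians (proved by repeated single-index expansion), after which all four formulas follow by substitution; alternatively, the identities can be cross-checked against the skew-orthogonality relations \eqref{sop2} for the genuine skew orthogonal polynomials $P_n$ and the known expansions \eqref{sops}, using that $\tp_{2n}=P_{2n}$ while $\tp_{2n+1}$ differs from $P_{2n+1}$ by a lower-order correction involving $\alpha$, which pins down the constants without re-deriving every sign.
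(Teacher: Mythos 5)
Your overall strategy---expand the Pfaffian definitions \eqref{psops}, push the skew bilinear form through term by term, and reassemble via the Pfaffian expansion formula---is the same as the paper's. But the central device you propose for carrying this out is wrong. You claim the contraction produces a single Pfaffian over the concatenated index list, e.g.\ $\frac{1}{\htau_{2n}\htau_{2m+1}}\Pf(d_0,0,\dots,2m+1,0,\dots,2n)$, to be simplified by an ``index-collapse lemma'' asserting that a Pfaffian containing a repeated consecutive block equals, up to sign, the Pfaffian with that block deleted once times a smaller Pfaffian. No such lemma can hold: the Pfaffian is alternating in its arguments, so any repeated argument forces it to vanish identically, and your claimed intermediate expression is therefore $0$ for all $n,m$. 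This already contradicts the statement at $n=m=0$, where $\langle\tp_0(x),\tp_1(y)\rangle=m_{01}=\htau_2/\htau_0$ is generically nonzero while $\Pf(d_0,0,1,0)=0$. (For the even--even and odd--odd pairings the literal reading happens to give the correct answer $0$, but for the wrong reason, which is why the error only surfaces at the nonzero pairings.)

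What actually happens is that only the \emph{inner} sum reassembles into a Pfaffian, with a single index appended at a time: as in the paper, $\langle \tp_{2n}(x),\tp_{2m}(y)\rangle=\frac{1}{\htau_{2n}\htau_{2m}}\sum_{i=0}^{2n}(-1)^i\Pf(0,\cdots,\hat{i},\cdots,2n)\,\Pf(0,\cdots,2m,i)$. Here $\Pf(0,\cdots,2m,i)$ vanishes precisely for $i\le 2m$ (the alternating property again), which kills the sum when $n\le m$; the remaining case $n>m$ is \emph{not} covered by this and needs the antisymmetry $\langle f(x),g(y)\rangle=-\langle g(x),f(y)\rangle$ of the form, a step your plan omits. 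In the odd cases the expansion along the extra $d_0$ slot produces an additional term proportional to $\alpha_i=\Pf(d_0,i)$, and it is this term, resummed into $\htau_{2n+1}$ or $\htau_{2m+1}$, that yields the nonzero right-hand sides---not a collapse of repeated indices. Your fallback route---deducing the $\tp$ relations from the skew-orthogonality \eqref{sop2} of the $P_n$ together with the triangular change of basis in Proposition \ref{ss} (whose proof is an independent Pfaffian identity, so no circularity arises)---is sound and would rescue the argument; if you take it, promote it from a cross-check to the main proof.
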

\begin{proof}
The proof of this proposition is based on the basic expansion of Pfaffians. Noting that
\begin{align*}
\langle \tp_{2n}(x),\tp_{2m}(y)\rangle&=\frac{1}{\htau_{2n}\htau_{2m}}\sum_{i=0}^{2n}\sum_{j=0}^{2m}(-1)^{i+j}\langle x^i,y^j\rangle\Pf(0,\cdots,\hat{i},\cdots,2n)\Pf(0,\cdots,\hat{j},\cdots,2m)\\
&=\frac{1}{\htau_{2n}\htau_{2m}}\sum_{i=0}^{2n}(-1)^i\Pf(0,\cdots,\hat{i},\cdots,2n)\Pf(0,\cdots,2m,i),
\end{align*}
where $\hat{i}$ means the missing of the index.
The last equation is equal to zero if $n\leq m$. Moreover, according to the skew symmetry, we know $\langle \tp_{2n}(x),\tp_{2m}(y)\rangle=0$ for all $n,\,m\in\mathbb{N}$. By the use of the same manner, one can prove the remaining formulae and we omit it here.
\end{proof}
Therefore, there is a strictly lower triangular matrix $\ts$, such that
\begin{align}\label{tildeh}
m_{\infty}=\ts^{-1}\tilde{h}\ts^{-\top},\quad \tilde{h}=\left(
\begin{array}{ccccc}
0&\frac{\htau_2}{\htau_0}&0&\frac{\htau_1\htau_4}{\htau_0\htau_3}&\cdots\\
-\frac{\htau_2}{\htau_0}&0&0&0&\cdots\\
0&0&0&\frac{\htau_4}{\htau_2}&\cdots\\
-\frac{\htau_1\htau_4}{\htau_0\htau_3}&0&-\frac{\htau_4}{\htau_2}&0&\cdots\\
\vdots&\vdots&\vdots&\vdots&\ddots
\end{array}\right)
\end{align}
and the vector of partial skew orthogonal polynomials can be written as $\tp(x)=\ts\chi(x)$. From the skew Borel decomposition $m_\infty=S^{-1}hS^{-\top}$, we know $\tilde{h}$ and $h$ are congruent via the equation $(S\ts^{-1})\tilde{h}(S\ts^{-1})^\top=h$. Moreover, there exists a relation between skew orthogonal polynomials $\{P_n(x)\}_{n\in\mathbb{N}}$ given by \eqref{sops} and partial skew orthogonal polynomials $\{\tp_n(x)\}_{n\in\mathbb{N}}$
\begin{align*}
P(x)=S\chi(x)=(S\ts^{-1})\ts\chi(x)=(S\ts^{-1})\tp(x).
\end{align*}
In what follows, we would like to show the form of $S\ts^{-1}$. 
\begin{proposition}\label{ss}
For the skew orthogonal polynomials $\{P_n(x)\}_{n\in\mathbb{N}}$ and partial skew orthogonal polynomials $\{\tp_n(x)\}_{n\in\mathbb{N}}$, there exist
\begin{align*}
P_{2n}(x)&=\tp_{2n}(x),\\
P_{2n+1}(x)&=\tp_{2n+1}(x)+\alpha_{n}\tp_{2n}(x)+\beta_n\tp_{2n-1}(x)
\end{align*}
with proper coefficients $\alpha_n$ and $\beta_n$.
\end{proposition}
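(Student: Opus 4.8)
The guiding observation is that both families are obtained by dressing the monomial vector $\chi(x)=(1,x,x^2,\dots)^\top$ with \emph{strictly lower triangular unipotent} matrices, $P(x)=S\chi(x)$ and $\tp(x)=\ts\chi(x)$, so that $P(x)=(S\ts^{-1})\tp(x)$ with $S\ts^{-1}$ again unipotent lower triangular; the proposition is the claim that the rows of $S\ts^{-1}$ are very sparse. Concretely, since each $\tp_k$ is monic of degree $k$ (expand the Pfaffians in \eqref{psops} along their last column), the set $\{\tp_0,\dots,\tp_{2n+1}\}$ is a triangular basis of the polynomials of degree $\le 2n+1$, so one may write
\[
P_{2n+1}(x)=\tp_{2n+1}(x)+\sum_{k=0}^{2n}\gamma_k\tp_k(x),
\]
the coefficient of $\tp_{2n+1}$ being $1$ by matching leading terms. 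For the even index there is nothing to prove: comparing the Pfaffian formula \eqref{sops} with the definition \eqref{psops}, $P_{2n}(x)$ and $\tp_{2n}(x)$ are literally the same expression $\htau_{2n}^{-1}\Pf(0,\dots,2n,x)$.

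To determine the $\gamma_k$ I would pair the difference $Q:=P_{2n+1}-\tp_{2n+1}=\sum_{k=0}^{2n}\gamma_k\tp_k$ against the two families $\tp_{2m+1}(y)$ and $\tp_{2m}(y)$ for $m\le n-1$, using two inputs. First, the partial skew-orthogonality relations of the preceding Proposition give $\langle\tp_{2k}(x),\tp_{2m}(y)\rangle=\langle\tp_{2k+1}(x),\tp_{2m+1}(y)\rangle=0$, $\langle\tp_{2k}(x),\tp_{2m+1}(y)\rangle=(\htau_{2k+2}/\htau_{2k})\delta_{km}$, and $\langle\tp_{2k+1}(x),\tp_{2m}(y)\rangle=-\htau_{2k+2}\htau_{2m+1}/(\htau_{2k+1}\htau_{2m})$ for $m\le k$ (and $0$ for $m>k$, by skew symmetry of the inner product). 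Second, $P_{2n+1}$, being a skew orthogonal polynomial, is by \eqref{sop2} skew-orthogonal in the second slot to every polynomial of degree $\le 2n-1$ (directly: $\langle P_{2n+1}(x),y^j\rangle=\htau_{2n}^{-1}\Pf(0,\dots,2n-1,2n+1,j)=0$ for $j\le 2n-1$ by the repeated-index rule). Pairing $Q$ with $\tp_{2m+1}(y)$, $m\le n-1$: the odd $\tp_k$ drop out and the even ones are orthogonal to everything but $\tp_{2m+1}$, isolating $\gamma_{2m}(\htau_{2m+2}/\htau_{2m})=\langle P_{2n+1}(x),\tp_{2m+1}(y)\rangle-0=0$, hence $\gamma_0=\gamma_2=\dots=\gamma_{2n-2}=0$. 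Pairing $Q$ with $\tp_{2m}(y)$, $m\le n-1$: the even $\tp_k$ drop out and, after cancelling the common factor $\htau_{2m+1}/\htau_{2m}$, one is left with the triangular system $\sum_{k=m}^{n-1}\gamma_{2k+1}(\htau_{2k+2}/\htau_{2k+1})=-\htau_{2n+2}/\htau_{2n+1}$ valid for every $m\le n-1$; taking successive differences in $m$ forces $\gamma_1=\gamma_3=\dots=\gamma_{2n-3}=0$ and pins down $\beta_n:=\gamma_{2n-1}=-\htau_{2n-1}\htau_{2n+2}/(\htau_{2n}\htau_{2n+1})$. Thus only $\gamma_{2n}$ and $\gamma_{2n-1}$ survive, and setting $\alpha_n:=\gamma_{2n}$ yields $P_{2n+1}=\tp_{2n+1}+\alpha_n\tp_{2n}+\beta_n\tp_{2n-1}$ (with the convention $\tp_{-1}=0$, so the $\beta_n$-term is absent when $n=0$).

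The Pfaffian bookkeeping behind monicity and behind $\langle P_{2n+1}(x),y^j\rangle=0$ for $j\le 2n-1$ is routine. The point needing genuine care — and what I expect to be the main obstacle — is keeping the index ranges straight when substituting the partial skew-orthogonality relations: these are \emph{not} antisymmetric under interchanging the two families (unlike the relations for genuine skew orthogonal polynomials), so one must consistently track whether $m<k$, $m=k$ or $m>k$ in each pairing, and note that the pairing against $\tp_{2m}(y)$ exploits the orthogonality of $P_{2n+1}$ only for $m\le n-1$, not for $m=n$. That last restriction is precisely why the expansion of $P_{2n+1}$ is allowed nonzero $\tp_{2n}$- and $\tp_{2n-1}$-components rather than collapsing to $\tp_{2n+1}$ alone.
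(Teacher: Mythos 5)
Your proof is correct, but it takes a genuinely different route from the paper's. The paper proves the odd-index identity in one stroke by applying the Pfaffian Pl\"ucker identity
\begin{align*}
\Pf(d_0,\star,2n,2n+1,x)\Pf(\star)&=\Pf(d_0,\star,2n)\Pf(\star,2n+1,x)-\Pf(d_0,\star,2n+1)\Pf(\star,2n,x)\\
&\quad+\Pf(d_0,\star,x)\Pf(\star,2n,2n+1),\qquad \{\star\}=\{0,\dots,2n-1\},
\end{align*}
whose four factors are, up to normalisation, exactly $\htau_{2n+1}\tp_{2n+1}$, $\htau_{2n}$, $\htau_{2n}P_{2n+1}$, $\htau_{2n}\tp_{2n}$, $\htau_{2n-1}\tp_{2n-1}$ and $\htau_{2n+2}$; dividing by $\htau_{2n}\htau_{2n+1}$ gives the three-term relation together with the \emph{explicit} values $\alpha_n=\Pf(d_0,0,\dots,2n-1,2n+1)/\htau_{2n+1}$ and $\beta_n=-\htau_{2n+2}\htau_{2n-1}/(\htau_{2n}\htau_{2n+1})$. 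You instead expand $P_{2n+1}$ in the monic triangular basis $\{\tp_k\}$ and kill the lower coefficients by pairing against $\tp_{2m}$ and $\tp_{2m+1}$, using the partial skew-orthogonality relations of the preceding proposition plus the fact that $\langle P_{2n+1}(x),y^j\rangle=0$ for $j\le 2n-1$; your index bookkeeping is sound, your triangular system correctly yields $\gamma_{2m}=0$ and $\gamma_{2j+1}=0$ in the stated ranges, and your value of $\beta_n$ agrees with \eqref{abn}. What the paper's argument buys is the closed form for $\alpha_n$, which is actually used immediately afterwards (to identify $\alpha_n=\p_{t_1}\log\htau_{2n+1}$ once time flows are introduced); your argument leaves $\alpha_n=\gamma_{2n}$ undetermined, which suffices for the proposition as stated (``proper coefficients'') but would need one extra pairing, e.g.\ against $\tp_{2n+1}(y)$, to recover the explicit expression. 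What your argument buys is conceptual transparency: it makes clear that the sparsity of $S\ts^{-1}$ is forced purely by the two orthogonality structures, independently of any Pfaffian identity beyond those already encoded in the previous proposition.
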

\begin{proof}
It is obvious that the polynomials of even order are exactly the same in these two cases according to the definitions of the Pfaffian elements. For the polynomials of odd order, we need to use the Pfaffian identity
\begin{align*}
&\Pf(d_0,\star,2n,2n+1,x)\Pf(\star)\\
&=\Pf(d_0,\star,2n)\Pf(\star,2n+1,x)-\Pf(d_0,\star,2n+1)\Pf(\star,2n,x)+\Pf(d_0,\star,x)\Pf(\star,2n,2n+1)
\end{align*}
with $\{\star\}=\{0,\cdots,2n-1\}$. By taking the explicit expressions of these Pfaffians in \eqref{sops} and \eqref{psops},  we make this proof completed with
\begin{align}\label{abn}
\alpha_n=\frac{\Pf(d_0,0,\cdots,2n-1,2n+1)}{\htau_{2n+1}},\quad \beta_n=-\frac{\htau_{2n+2}\htau_{2n-1}}{\htau_{2n}\htau_{2n+1}}.
\end{align}
\end{proof}
Following this proposition, one can show 
\begin{align*}
S\ts^{-1}=\left(\begin{array}{ccccc}
1&&&&\\
\alpha_0&1&&&\\
0&0&1&&\\
0&\beta_1&\alpha_1&1&\\
\vdots&\vdots&\vdots&\vdots&\ddots
\end{array}
\right)
\end{align*}
and denote this matrix as $\mathcal{A}$. In addition, if we introduce the Lax operator $L=S\Lambda S^{-1}$ and operator $\tilde{L}=\ts\Lambda\ts^{-1}$, then we have the following relation between these two operators
\begin{align}\label{laxrelation}
L=\mathcal{A}\tilde{L}\mathcal{A}^{-1}.
\end{align}

Next, we would introduce the time parameters and assume the commuting vector fields $\p_{t_n}m_\infty=\Lambda^n m_\infty+m_\infty\Lambda^{\top n}$, which connects the partial skew orthogonal polynomials (wave functions) with tau function.
\begin{proposition}\label{sopwave}
With the time evolutions $\p_{t_n}m_\infty=\Lambda^n m_\infty+m_\infty\Lambda^{\top n}$ and $\p_{t_n}\alpha=\Lambda^n\alpha$, we have
\begin{align}\label{psop2}
\tp_n(x;t)=\frac{\htau_n(t-[x^{-1}])}{\htau_n(t)}x^n.
\end{align}
\end{proposition}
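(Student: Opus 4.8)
The plan is to deduce \eqref{psop2} from a one-step Pfaffian manipulation: a unipotent change of the bordering monomials, followed by a single cofactor expansion. The first ingredient is the effect of the Miwa shift $t\mapsto t-[x^{-1}]$ on the data. From \eqref{solution} specialised to ``$s=-t$'' one has $m_\infty(t)=e^{\sum_i t_i\Lambda^i}m_\infty(0,0)e^{\sum_i t_i\Lambda^{\top i}}$, and from the hypothesis $\p_{t_n}\alpha=\Lambda^n\alpha$ one has $\alpha(t)=e^{\sum_i t_i\Lambda^i}\alpha(0)$. Since $\sum_{k\ge1}\tfrac1{kx^k}\Lambda^k=-\log(I-\Lambda/x)$ as a formal series of lower-triangular matrices, substituting $t_k\mapsto t_k-\tfrac1{kx^k}$ and using that all powers of $\Lambda$ commute gives
\begin{align*}
m_\infty(t-[x^{-1}])=(I-\Lambda/x)\,m_\infty(t)\,(I-\Lambda^{\top}/x),\qquad \alpha(t-[x^{-1}])=(I-\Lambda/x)\,\alpha(t),
\end{align*}
that is, $m_{i,j}(t-[x^{-1}])=m_{i,j}-x^{-1}(m_{i+1,j}+m_{i,j+1})+x^{-2}m_{i+1,j+1}$ and $\alpha_i(t-[x^{-1}])=\alpha_i-x^{-1}\alpha_{i+1}$, all entries on the right being evaluated at $t$.

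Next I would work inside the Pfaffian. Fix $n$ and consider $\htau_n(t)\tp_n(x;t)$, which by \eqref{psops} equals $\Pf(0,\dots,n,x)$ for $n$ even and $\Pf(d_0,0,\dots,n,x)$ for $n$ odd, with $\Pf(i,x)=x^i$ and $\Pf(d_0,x)=0$. In this Pfaffian I replace the symbol $i$ by $\hat e_i:=i-x^{-1}(i+1)$ for $i=0,\dots,n-1$, leaving $n$, $x$ (and $d_0$ when $n$ is odd) untouched; ordering the symbols as $d_0<0<1<\cdots<n<x$ this is a unipotent upper-triangular substitution, so, being a congruence by a matrix of determinant one, it leaves the Pfaffian unchanged. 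By the formulas just obtained, the new entries satisfy $\Pf(\hat e_i,\hat e_j)=m_{i,j}(t-[x^{-1}])$ and $\Pf(d_0,\hat e_i)=\alpha_i(t-[x^{-1}])$, while $\Pf(\hat e_i,x)=x^i-x^{-1}x^{i+1}=0$ and $\Pf(d_0,x)=0$, so the $x$-row and $x$-column now contain the single nonzero entry $\Pf(n,x)=x^n$. Expanding the Pfaffian along that row kills every term but one and removes the two symbols $n$ and $x$, leaving $x^n$ times the Pfaffian of the shifted moment array, which is exactly $x^n\htau_n(t-[x^{-1}])$ by the definitions in \eqref{sop} and \eqref{psops}; a parity count of the two removed positions shows the cofactor carries the sign $+1$ in both the even and the odd case. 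Dividing by $\htau_n(t)$ gives \eqref{psop2}.

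I expect the substance of the argument to sit entirely in the second step, which is short, so the obstacle is really one of careful bookkeeping. One has to check that $i\mapsto i-x^{-1}(i+1)$, extended by the identity on $\{d_0,n,x\}$, is genuinely triangular with unit determinant on the bordered index set, since this is exactly what collapses the $x$-column to a single entry and produces the factor $x^n$; and one has to pin down the $+1$ in the Pfaffian cofactor expansion, where the parity of $n$ enters through the size of the bordered matrix. Both become routine once the ordering of the symbols is fixed. If one prefers to avoid the change-of-basis language, \eqref{psop2} can instead be proved by induction on $n$ using the Pfaffian identities employed in the proof of Proposition~\ref{ss}, but the one-step argument above is shorter.
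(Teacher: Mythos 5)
Your argument is correct, and it proves \eqref{psop2} by a genuinely different route from the paper. The paper's proof works coefficient by coefficient in the Miwa expansion: it writes $\htau_n(t-[x^{-1}])=\sum_{k\ge0}p_k(-\tilde{\p}_t)\htau_n\,x^{-k}$, uses the Pfaffian differentiation rule $\p_{t_j}\Pf(0,\dots,2n-1)=\sum_k\Pf(0,\dots,k+j,\dots,2n-1)$ (and its analogue with the $d_0$ label) to evaluate each $p_k(-\tilde{\p}_t)\htau_n$ as $(-1)^k$ times the Pfaffian with the index $n-k$ deleted and $n$ appended, and then recognises the resulting sum as the cofactor expansion of the bordered Pfaffians in \eqref{psops}. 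You instead package the entire Miwa shift into the single matrix identity $m_\infty(t-[x^{-1}])=(I-\Lambda/x)m_\infty(t)(I-\Lambda^\top/x)$, $\alpha(t-[x^{-1}])=(I-\Lambda/x)\alpha(t)$, perform one unipotent congruence inside the bordered Pfaffian, and expand once along the now-sparse $x$-row; the two approaches are of comparable length, but yours isolates the mechanism (the shift is a bidiagonal congruence that annihilates the bordering column $\Pf(\hat e_i,x)$) more transparently, while the paper's term-by-term computation has the side benefit of producing the explicit formulas $p_k(-\tilde{\p}_t)\htau_{2n}=(-1)^k\Pf(0,\dots,\widehat{2n-k},\dots,2n)$, which are reused implicitly elsewhere. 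Your bookkeeping checks out: the substitution $i\mapsto i-x^{-1}(i+1)$ for $i=0,\dots,n-1$ is upper unitriangular in the ordering $d_0<0<\cdots<n<x$ so the Pfaffian is preserved, the surviving pairing couples the two last-positioned indices $n$ and $x$ and hence carries sign $+1$ in both parities, and the residual Pfaffian has exactly the shifted entries defining $\htau_n(t-[x^{-1}])$.
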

\begin{proof}
The expansion of the notation in \eqref{psop2} gives us
\begin{align*}
\tp_{n}(x;t)=\frac{e^{\xi(\tilde{\p}_t,x^{-1})}\htau_n(t)}{\htau_n(t)}x^n=\sum_{k\geq0}\frac{p_k(-\tilde{\p}_t)\htau_n}{\htau_n}x^{n-k},
\end{align*}
where $\{p_k\}_{k\geq0}$ are the Schur functions. Moreover, by the use of the time evolution condition, one knows \cite[Equation 3.84]{hirota04}
\begin{align*}
\p_{t_j}\Pf(0,\cdots,2n-1)&=\sum_{k=0}^{2n-1}\Pf(0,\cdots,k+j,\cdots,2n-1),\\
\p_{t_j}\Pf(d_0,0,\cdots,2n)&=\sum_{k=0}^{2n}\Pf(d_0,0,\cdots,k+j,\cdots,2n).
\end{align*}
Therefore, by expanding the Schur function, one can demonstrate
\begin{align*}
p_k(-\tilde{\p}_t)\htau_{2n}&=(-1)^k\Pf(0,\cdots,\widehat{2n-k},\cdots,2n),\\
p_k(-\tilde{\p}_t)\htau_{2n+1}&=(-1)^k\Pf(d_0,0,\cdots,\widehat{2n+1-k},2n+1)
\end{align*}
and $p_k(-\tilde{\p}_t)\htau_n=0$ if $k>n$. This formula corresponds to the Pfaffian form of the partial skew orthogonal polynomials in \eqref{psops}.
\end{proof}
Therefore, we know that $\alpha_n$ in \eqref{abn} could be easily expressed as $\alpha_n=\p_{t_1}\log\htau_{2n+1}$ if time parameters are involved. Moreover, the equation \eqref{rel1} and Proposition \ref{ss} imply
\begin{align*}
(x+\p_{t_1})\tp_{2n}(x;t)=\tp_{2n+1}(x;t)+\p_{t_1}\log\frac{\htau_{2n+1}}{\htau_{2n}}\tp_{2n}(x;t)-\frac{\htau_{2n+2}\htau_{2n-1}}{\htau_{2n+1}\htau_{2n}}\tp_{2n-1}(x;t)
\end{align*} 
with respect to the $t_1$-flow. In fact, for the partial skew orthogonal polynomials $\{\tp_n(x;t)\}_{n\in\mathbb{N}}$ with odd and even indexes, they obey the same recurrence relation \cite[Lemma 3.21]{chang182}
\begin{align}\label{substitution}
(x+\p_{t_1})\tp_n(x;t)=\tp_{n+1}(x)+\left(\p_{t_1}\log \frac{\htau_{n+1}}{\htau_n}\right)\tp_n(x;t)-\frac{\htau_{n+2}\htau_{n-1}}{\htau_{n+1}\htau_{n}}\tp_{n-1}(x;t).
\end{align}
Substituting the expression \eqref{psop2} into equation \eqref{substitution}, one can find
\begin{align}\label{largebkp}
\begin{aligned}
&x^2\left[\htau_{n+1}(t)\htau_n(t-[x^{-1}])-\htau_n(t)\htau_{n+1}(t-[x^{-1}])\right]\\
&+x\left[\htau_{n+1}(t)\p_{t_1}\htau_n(t-[x^{-1}])-\p_{t_1}\htau_{n+1}(t)\htau_n(t-[x^{-1}])\right]+\htau_{n+2}(t)\htau_{n-1}(t-[x^{-1}])=0.
\end{aligned}
\end{align}
This equation is a particular Fay identity of the large BKP hierarchy given by \cite[Corollary 6.1]{vandeleur01}.
Again, by expanding the Miwa variables into Schur functions, i.e. $\htau_n(t-[x^{-1}])=\sum_{k\geq0}p_k(-\tilde{\p}_t)\htau_n z^{-k}$, we finally arrive at
\begin{align*}
\begin{aligned}
\htau_n\cdot p_{k+2}(-\tilde{\p}_t)\htau_{n+1}&-\htau_{n+1}\cdot p_{k+2}(-\tilde{\p}_t)\htau_n+\p_{t_1}\htau_{n+1}\cdot p_{k+1}(-\tilde{\p}_t)\htau_n\\
&-\htau_{n+1}\cdot\p_{t_1}p_{k+1}(-\tilde{\p}_t)\htau_n-\htau_{n+2}\cdot p_k(-\tilde{\p}_t)\htau_{n-1}=0,\quad \forall k\in\mathbb{Z}.
\end{aligned}
\end{align*}
The first nontrivial example is the $k=0$ case
\begin{align*}
(D_2+D_1^2)\htau_{n}\cdot\htau_{n+1}=2\htau_{n-1}\htau_{n+2},
\end{align*}
which is the first nontrivial example of large BKP hierarchy proposed in \cite{kac97} and an example in \cite[Equation 7.7]{jimbo1983}. 
Moreover, this particular Fay identity gives us an alternative expressions of the odd-indexed tau functions.
 \begin{proposition}\label{waveal}
The wave functions $\Phi_{1,2n+1}(t;z)$ and $\Phi_{2,2n+1}(t;z)$ could be expressed in terms of odd-indexed tau functions via
\begin{subequations}
\begin{align}
&\Phi_{1,2n+1}(t;z)=e^{\sum_{i=1}^\infty t_iz^i}\label{phi1o}\\
&\times\left(\frac{1}{\htau_{2n+1}}\htau_{2n+1}(t-[z^{-1}])z^{2n+1}+\frac{\p_{t_1}\htau_{2n+1}}{\htau_{2n}\htau_{2n+1}}\htau_{2n}(t-[z^{-1}])z^{2n}-\frac{\htau_{2n+2}}{\htau_{2n}\htau_{2n+1}}\htau_{2n-1}(t-[z^{-1}])z^{2n-1}\right),\nonumber\\
&\Phi_{2,2n+1}(t;z)=-e^{-\sum_{i=1}^\infty t_iz^{-i}}\label{phi2o}\\
&\times\left(\frac{\htau_{2n+2}}{\htau_{2n}\htau_{2n+1}}\htau_{2n+1}(t+[z])z^{2n}-\frac{\p_{t_1}\htau_{2n+1}}{\htau_{2n}\htau_{2n+1}}\htau_{2n+2}(t+[z])z^{2n+1}-\frac{1}{\htau_{2n+1}}\htau_{2n+3}(t+[z])z^{2n+2}\right).\nonumber
\end{align}
\end{subequations}
\end{proposition}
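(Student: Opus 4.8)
The plan is to start from the known even-indexed wave function expressions \eqref{1o} and \eqref{2o}, which already involve the auxiliary combination $(z+\p_{t_1})\tau_{2n}(t-[z^{-1}])$ respectively $(z^{-1}-\p_{t_1})\tau_{2n+2}(t+[z])$, and to rewrite the $2d$-Toda tau functions $\tau_{2n}, \tau_{2n+1}$ appearing there in terms of the Pfaffian tau functions $\htau_m$ using Proposition \ref{sopwave}. Recall from Proposition \ref{symmetricinitial} and the skew-symmetric analogue that the even-indexed $2d$-Toda tau function reduces to $\tau_{2n}(t)=\htau_{2n}(t)^2$ up to the $[\alpha]-[\beta]$ shift structure; more precisely, what is actually needed is the relation between the wave-function components $\Phi_{1,2n+1}$ and the polynomials $\tp_{2n+1}$. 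Since $\Phi_{1,2n+1}(t;z)=e^{\xi(t,z)}(S(t)\chi(z))_{2n+1}=e^{\xi(t,z)}P_{2n+1}(z;t)$ and $\Phi_{2,2n+1}$ is the dual analogue, the first step is simply to invoke Proposition \ref{ss}, which gives
\[
P_{2n+1}(z;t)=\tp_{2n+1}(z;t)+\alpha_n\tp_{2n}(z;t)+\beta_n\tp_{2n-1}(z;t),
\]
together with $\alpha_n=\p_{t_1}\log\htau_{2n+1}$ and $\beta_n=-\htau_{2n+2}\htau_{2n-1}/(\htau_{2n}\htau_{2n+1})$ from \eqref{abn} and the paragraph after Proposition \ref{sopwave}.

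Second, I would substitute the closed Miwa-shift expression \eqref{psop2} for each of the three partial skew orthogonal polynomials, namely
\[
\tp_m(z;t)=\frac{\htau_m(t-[z^{-1}])}{\htau_m(t)}z^m
\]
for $m=2n+1,2n,2n-1$. Collecting the three terms and multiplying through by $e^{\xi(t,z)}$ produces exactly the right-hand side of \eqref{phi1o}: the $z^{2n+1}$ term carries the factor $1/\htau_{2n+1}$, the $z^{2n}$ term picks up $\alpha_n/\htau_{2n}=\p_{t_1}\htau_{2n+1}/(\htau_{2n}\htau_{2n+1})$, and the $z^{2n-1}$ term picks up $\beta_n/\htau_{2n-1}=-\htau_{2n+2}/(\htau_{2n}\htau_{2n+1})$. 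This is a direct bookkeeping computation with no residue calculus needed. For the dual wave function $\Phi_{2,2n+1}$ one proceeds identically: write $\Phi_{2,2n+1}(t;z)=e^{-\xi(t,z^{-1})}(h(t)S^{-\top}(t)\chi(z))_{2n+1}$, use that the dual skew-orthogonal polynomial of odd index decomposes through the transpose of $\mathcal{A}$ (so the same coefficients $\alpha_n,\beta_n$ appear, now coupling indices $2n+1$, $2n+2$, $2n+3$ because of the "reversed" triangular structure of $\mathcal{A}^{-\top}$), and substitute the Miwa form $\htau_m(t+[z])/\htau_m(t)\cdot z^{?}$ together with the diagonal factor $h_n=\htau_{2n+2}/\htau_{2n}$; matching powers of $z$ and signs yields \eqref{phi2o}. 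Alternatively, and perhaps more cleanly, one can obtain \eqref{phi2o} from \eqref{phi1o} by applying the $t_1$-derivative identity \eqref{rel1} (equivalently \eqref{substitution}) to \eqref{2e}, exactly as the text already sketches for the pair \eqref{1o}, \eqref{2o}: take a $t_1$-derivative of $\Phi_{2,2n}$, use the three-term relation \eqref{substitution} to rewrite $(x+\p_{t_1})\tp_{2n}$ in terms of $\tp_{2n+1}$, $\tp_{2n}$, $\tp_{2n-1}$, and then re-expand in Miwa variables.

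The main obstacle I anticipate is \emph{not} the algebra of collecting terms but getting the index conventions and sign bookkeeping for the \emph{dual} side exactly right: the matrix $\mathcal{A}=S\ts^{-1}$ is lower triangular with the nonzero off-diagonal pattern in rows $2n+1$ (entries in columns $2n-1,2n,2n+1$), so $\mathcal{A}^{-\top}$ acting on $\chi(z)$ reshuffles which $\htau$'s appear and in particular flips $2n-1\mapsto 2n+3$, $2n\mapsto 2n+2$ relative to the primal case; one must also track the diagonal $h$ correctly since $\Phi_2$ carries the extra factor $h(t)$. A careful way to avoid errors is to verify the claimed \eqref{phi2o} against the already-established \eqref{2o} plus the Fay identity \eqref{largebkp} (with $n\to 2n+1$ in the role of the running index), checking that the $z^{2n+2}$, $z^{2n+1}$, $z^{2n}$ coefficients match; since \eqref{largebkp} was derived precisely from \eqref{substitution} and \eqref{psop2}, this consistency check is essentially automatic and closes the argument. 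Everything else — the expansion $e^{\xi(\tilde\p_t,z^{-1})}\htau_n/\htau_n=\sum_k p_k(-\tilde\p_t)\htau_n\, z^{-k}$, the identification of $\alpha_n,\beta_n$ — is quoted verbatim from the propositions already proved above.
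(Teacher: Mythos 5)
Your proposal is correct and follows essentially the same route as the paper: your decomposition $P_{2n+1}=\tp_{2n+1}+\alpha_n\tp_{2n}+\beta_n\tp_{2n-1}$ combined with the Miwa form \eqref{psop2} is exactly the content of the Fay identity \eqref{largebkp}, which is what the paper substitutes into \eqref{1o}, and your fallback for \eqref{phi2o} --- checking against \eqref{2o} via the $t+[z]$-shifted Fay identity --- is literally the paper's derivation of that formula. The only caveat is that your first-choice route for $\Phi_{2,2n+1}$ through $\mathcal{A}^{-\top}$ is left vague (the exponential $e^{-\sum t_i\Lambda^{\top i}}$ does not commute past $\mathcal{A}^{-\top}$, so that bookkeeping is not as mechanical as stated), but you correctly anticipate this and supply the working alternative.
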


\begin{proof}
The first equation \eqref{phi1o} is obvious if one substitutes the Fay identity \eqref{largebkp} into the expression \eqref{1o}. Moreover, if one takes the shift $t-[x^{-1}]\mapsto t$ and transform $x^{-1}\mapsto x$, one can get an equivalent Fay identity
\begin{align*}
&x^{-2}\left[\htau_{n+1}(t+[x])\htau_n(t)-\htau_n(t+[x])\htau_{n+1}(t)\right]\\
&+x^{-1}\left[\htau_{n+1}(t+[x])\p_{t_1}\htau_n(t)-\p_{t_1}\htau_{n+1}(t+[x])\htau_n(t)\right]+\htau_{n+2}(t+[x])\htau_{n-1}(t)=0,
\end{align*}
which implies  $$(x^{-1}-\p_{t_1})\htau_{n}(t+[x])=x^{-1}\htau_{n-1}(t+[x])\frac{\htau_{n}}{\htau_{n-1}}-\p_{t_1}\log\htau_{n-1}\htau_{n}(t+[x])-x\htau_{n+1}(t+[x])\frac{\htau_{n-2}}{\htau_{n-1}}$$
up to a shift in ``$n$'' index.
Taking this equation into \eqref{2o}, one can get the conclusion.
\end{proof}
Therefore, if one takes the odd-indexed Pfaffian wave functions into \eqref{pfaffwave} of the form
\begin{align*}
\oint_{C_\infty}\Phi_{1,2n+1}(t;z)\Phi_{2,2m}(t';z^{-1})\frac{dz}{2\pi iz}+\oint_{C_0}\Phi_{2,2n+1}(t;z)\Phi_{1,2m}(t';z^{-1})\frac{dz}{2\pi iz}=0,
\end{align*}
one can obtain
\begin{align*}
\htau_{2n}(t)&\left(\oint_{C_\infty}e^{\xi(t-t',z)}\htau_{2n+1}(t-[z^{-1}])\htau_{2m+2}(t'+[z^{-1}])z^{2n-2m-1}dz\right.\\
&\left.+\oint_{C_0}e^{\xi(t'-t,z^{-1})}\htau_{2n+3}(t+[z])\htau_{2m}(t'-[z])z^{2n-2m+1}dz\right)\\
-\htau_{2n+2}(t)&\left(\oint_{C_\infty}
e^{\xi(t-t',z)}\htau_{2n-1}(t-[z^{-1}])\htau_{2m+2}(t'+[z^{-1}])z^{2n-2m-3}dz\right.\\
&\left.+\oint_{C_0}e^{\xi(t'-t,z^{-1})}\htau_{2n+1}(t+[z])\htau_{2m}(t'-[z])z^{2n-2m-1}dz
\right)=0,
\end{align*}
and this equation is valid for all $t,\,t'\in\mathbb{C}$. With $n$ and $m$ being symmetry invariant in this case, this equation results in
\begin{align}
\begin{aligned}\label{oddeven}
\htau_{2n}(t)\htau_{2m+1}(t')&=\oint_{C_\infty}e^{\xi(t-t',z)}\htau_{2n-1}(t-[z^{-1}])\htau_{2m+2}(t'+[z^{-1}])z^{2n-2m-3}\\
&+\oint_{C_0}e^{\xi(t'-t,z^{-1})}\htau_{2n+1}(t+[z])\htau_{2m}(t'-[z])z^{2n-2m-1}dz.
\end{aligned}
\end{align}
Furthermore, from the equation \eqref{oddeven} and Proposition \ref{waveal}, we know the equation
\begin{align*}
\oint_{C_\infty}\Phi_{1,2n+1}(t;z)\Phi_{2,2m+1}(t';z^{-1})\frac{dz}{2\pi iz}+\oint_{C_0}\Phi_{2,2n+1}(t;z)\Phi_{1,2m+1}(t';z^{-1})\frac{dz}{2\pi iz}=0
\end{align*}
could be expressed in terms of tau functions as
\begin{align*}
0&=\oint_{C_\infty}e^{\xi(t-t',z)}\htau_{2n-1}(t-[z^{-1}])\htau_{2m+1}(t'+[z^{-1}])z^{2n-2m-2}dz\\&+\oint_{C_0}e^{\xi(t'-t,z^{-1})}
\htau_{2n+1}(t+[z])\htau_{2m-1}(t'-[z])z^{2n-2m}dz.
\end{align*}
To conclude, we have the following proposition.
\begin{proposition}
The large BKP hierarchy
\begin{align}\label{lbkp}
\begin{aligned}
\frac{1}{2}(1-(-1)^{n+m})\htau_n(t)\htau_{m}(t')&=\oint_{C_\infty}e^{\xi(t-t',z)}\htau_{n-1}(t-[z^{-1}])\htau_{m+1}(t'+[z^{-1}])z^{n-m-2}dz\\&+\oint_{C_0}e^{\xi(t'-t,z^{-1})}\htau_{n+1}(t+[z])\htau_{m-1}(t'-[z])z^{n-m}dz
\end{aligned}
\end{align}
could be written in terms of wave functions \eqref{pfaffwave}.
\end{proposition}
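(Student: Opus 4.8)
The plan is to read \eqref{lbkp} off the component form of the wave-function bilinear identity \eqref{pfaffwave}, organising the argument by the parity of the two indices. Extracting the $(k,l)$-entry of \eqref{pfaffwave} gives, for all $k,l\ge 0$,
\[
\oint_{C_\infty}\Phi_{1,k}(t;z)\Phi_{2,l}(t';z^{-1})\frac{dz}{2\pi i z}+\oint_{C_0}\Phi_{2,k}(t;z)\Phi_{1,l}(t';z^{-1})\frac{dz}{2\pi i z}=0 ,
\]
into which one substitutes the explicit wave functions: \eqref{1e} and \eqref{2e} in the even--even case, \eqref{phi1o} of Proposition \ref{waveal} together with \eqref{2e} in the odd--even case, \eqref{1e} together with \eqref{phi2o} in the even--odd case, and \eqref{phi1o}, \eqref{phi2o} in the odd--odd case. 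In each case, clearing the exponential prefactors $e^{\pm\xi}$, the diagonal factors $h_\bullet$, and the $\htau$-denominators turns the identity into a contour-integral relation purely among the Pfaffian tau functions $\htau_\bullet$.

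I would then carry out the four reductions in turn, noting that each flips the parity of the index pair. The even--even substitution reproduces \eqref{pfafflattice}, which is exactly the instance of \eqref{lbkp} carrying the label $(2n+1,2m+1)$ (so $n+m$ is even and the prefactor vanishes); this is already available. The odd--even substitution, after the resulting residues are evaluated and the pieces matched up using the symmetry of the labels in $n,m$, collapses to \eqref{oddeven}, which is \eqref{lbkp} for the label $(2n,2m+1)$ (so $n+m$ is odd and the prefactor equals $1$); the even--odd substitution is the $t\leftrightarrow t'$ transpose of this and needs no separate work. Finally the odd--odd substitution, with \eqref{phi1o} and \eqref{phi2o} both inserted, reduces --- via \eqref{oddeven} and its $t\leftrightarrow t'$ variant --- to the residual identity displayed just above the Proposition, which is \eqref{lbkp} for the label $(2n,2m)$ with vanishing prefactor. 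Since $n,m$ range over $\mathbb{N}$ in each of the four cases, the labels so produced exhaust all index pairs, and the four outcomes assemble into \eqref{lbkp}, the coefficient $\tfrac12(1-(-1)^{n+m})$ recording precisely that the right-hand side equals $\htau_n(t)\htau_m(t')$ when $n+m$ is odd and $0$ when $n+m$ is even.

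For the converse --- that \eqref{lbkp} carries all of \eqref{pfaffwave}, which is what makes the large BKP hierarchy and the Pfaff-lattice hierarchy agree in the sense of wave functions --- one runs the substitutions backwards, using the Miwa-shift representation \eqref{psop2} of the partial skew orthogonal polynomials together with the wave-function formulas \eqref{1e}--\eqref{2o}, \eqref{phi1o}, \eqref{phi2o} to recognise each scalar entry of \eqref{pfaffwave} as one of the four instances of \eqref{lbkp} found above.

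The hard part will be the residue bookkeeping in the mixed-parity and the odd--odd cases: the three-term wave functions \eqref{phi1o}, \eqref{phi2o} produce, besides the wanted term, several spurious contributions that cancel only after the already-proved identities \eqref{pfafflattice} and \eqref{oddeven} are fed back in, not termwise. One must also check carefully that the contour integrals attached to the even--even and odd--odd labels genuinely evaluate to $0$ rather than to a multiple of $\htau_n\htau_m$, since it is exactly this vanishing that the uniform prefactor $\tfrac12(1-(-1)^{n+m})$ is asserting; concretely this comes down to matching powers of $z$ and the Miwa shifts on the two circles so that the residues pair off and annihilate.
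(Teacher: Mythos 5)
Your proposal is correct and follows essentially the same route as the paper: the paper's argument is exactly the parity-by-parity substitution of the explicit wave functions (\eqref{1e}, \eqref{2e} for even indices and the three-term expressions \eqref{phi1o}, \eqref{phi2o} of Proposition \ref{waveal} for odd indices) into \eqref{pfaffwave}, identifying the even--even case with \eqref{pfafflattice}, the mixed case with \eqref{oddeven}, and the odd--odd case with the displayed identity just above the proposition, which are precisely the instances of \eqref{lbkp} you list, with the prefactor $\tfrac12(1-(-1)^{n+m})$ encoding the parity of the label sum. Your index bookkeeping for which instance of \eqref{lbkp} each parity sector produces matches the paper's.
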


\subsection{Rank Two Shift Condition and B-Toda Lattice}\label{4.3}
From the last two subsections, we know the Pfaff lattice hierarchy and the large BKP hierarchy are the same if the odd-indexed tau functions are introduced. However, it should be mentioned that one of the most important features of BKP hierarchy hasn't been involved. It was shown by Hirota \cite[\S 3.3]{hirota04} that the tau functions of BKP hierarchy always admit the form
\begin{align*}
\tau=\Pf(1,2,\cdots,2n),\quad \Pf(i,j)=\int_{-\infty}^{t_1} D_{t_1} f_i(t)\cdot f_j(t)dt_1,
\end{align*}
and $\{f_i(t),\,i=1,\cdots,2n\}$ should satisfy the linear differential relation $\p_{t_n}f_i(t)=\p_{t_1}^n f_i(t)$. If we introduce the label $d_i$ by $\Pf(d_n,i)=\p_{t_1}^n f_i(x)$, then the derivative formula 
\begin{align}\label{bkp}
\p_{t_1}\Pf(i,j)=\Pf(d_0,d_1,i,j),\quad \Pf(d_0,d_1)=0
\end{align}
results in the Gram-type Pfaffian elements.
It is not difficult to see that Pfaffian tau functions shown in \eqref{psops} are not the tau functions of BKP hierarchy if we merely consider the time evolutions $\p_{t_n}m_\infty=\Lambda^nm_\infty+m_\infty\Lambda^{\top n}$ and $\p_{t_n}\alpha=\Lambda^n\alpha$. 
In fact, with the commuting vector fields, the derivative formula \eqref{bkp} implies the relations between the single moments and bi-moments, which can be formulated by the following rank two shift condition
\begin{align}\label{ranktwo}
\Lambda m_\infty+m_\infty \Lambda^\top=\Lambda \alpha\alpha^\top-\alpha\alpha^\top\Lambda^\top, \quad \alpha=(\alpha_0,\alpha_1,\cdots)^\top,\quad \alpha_i(t)\in C^\infty.
\end{align}
\begin{remark}
The rank two shift condition can be naturally realised by the moments of Bures ensemble
\begin{align*}
m_{i,j}=\int_{\mathbb{R}_+^2}\frac{x-y}{x+y}x^iy^jd\mu(x)d\mu(y)
\end{align*}
from the relation $$m_{i+1,j}+m_{i,j+1}=\int_{\mathbb{R}_+}x^{i+1}d\mu(x)\int_{\mathbb{R}_+}y^jd\mu(y)-\int_{\mathbb{R}_+}x^id\mu(x)\int_{\mathbb{R}_+}y^{j+1}d\mu(y).$$
\end{remark}
The first aim is to derive the B-Toda lattice from the 2d-Toda theory, by the consideration of rank two shift condition.
By using the skew-Borel decomposition $m_\infty=S^{-1}hS^{-\top}$ in \eqref{sop} and denoting $L=S\Lambda S^{-1}$, $\sigma=S\alpha$ and $\rho=S\Lambda\alpha$, the above rank two shift condition is equal to
\begin{align*}
Lh+hL^\top=\rho\sigma^\top-\sigma\rho^\top,
\end{align*}
where the right hand side is again a skew symmetric matrix of rank two.
Moreover, we have the following proposition.
\begin{proposition}\label{prop:4.1}
The rank two shift condition is equal to
\begin{subequations}
\begin{align}
\rho_{2n+1}\sigma_{2n-1}-\rho_{2n-1}\sigma_{2n+1}&=h_{n-1}l_{2n+1,2n-2}-h_n,\label{btoda1}\\
\rho_{2n-1}\sigma_{2n-2}-\rho_{2n-2}\sigma_{2n-1}&=-h_{n-1}(l_{2n-1,2n-1}+l_{2n-2,2n-2}),\label{btoda2}\\
\rho_{j-1}\sigma_{2n-1}-\rho_{2n-1}\sigma_{j-1}&=h_{n-1}l_{j-1,2n-2},\quad\,\,\,\,\, j=2n+1,2n+3,\cdots,\label{btoda3}\\
\rho_{j-1}\sigma_{2n-2}-\rho_{2n-2}\sigma_{j-1}&=-h_{n-1}l_{j-1,2n-1},\quad j=2n+1,2n+2,\cdots,\label{btoda4}
\end{align}
\end{subequations}
where $\{l_{i,j}\}$ are the elements of the Lax operator $L$.
\end{proposition}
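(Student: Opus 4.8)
The starting point is already available in the text just above: inserting the skew-Borel decomposition $m_\infty=S^{-1}hS^{-\top}$ of \eqref{sop} into the rank two shift condition \eqref{ranktwo}, conjugating by $S$, and using $L=S\Lambda S^{-1}$, $\sigma=S\alpha$, $\rho=S\Lambda\alpha$ converts \eqref{ranktwo} into the single matrix identity $Lh+hL^{\top}=\rho\sigma^{\top}-\sigma\rho^{\top}$. Both sides are skew symmetric, so the plan is simply to compare the entries $(i,j)$ with $i>j$ and show that they split, according to a handful of parity cases, into precisely the four families \eqref{btoda1}--\eqref{btoda4}; the diagonal entries vanish on both sides and carry no information.

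The right-hand side is immediate: $(\rho\sigma^{\top}-\sigma\rho^{\top})_{i,j}=\rho_i\sigma_j-\sigma_i\rho_j$, which is exactly the left-hand side of each of \eqref{btoda1}--\eqref{btoda4}. For the left-hand side I would use two structural facts. First, $L$ is lower Hessenberg with $l_{i,i+1}=1$ for every $i$, because $S$ is lower triangular with identity $2\times2$ diagonal blocks. Second, $h$ from \eqref{sop} has a single nonzero entry in each row and column, namely $h_{2k,2k+1}=h_k$ and $h_{2k+1,2k}=-h_k$. Together these collapse $(Lh)_{i,j}$ to one term, $(Lh)_{i,2k}=-h_k\,l_{i,2k+1}$ and $(Lh)_{i,2k+1}=h_k\,l_{i,2k}$, so that $(Lh+hL^{\top})_{i,j}=(Lh)_{i,j}-(Lh)_{j,i}$ is a sum of at most two such terms, each easy to identify.

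The case split is on the parity of the column $j$ and the location of the row $i$. When $j=2n-2$ is even and $i=2n-1$, the two surviving terms are $-h_{n-1}l_{2n-1,2n-1}$ and $-h_{n-1}l_{2n-2,2n-2}$, producing \eqref{btoda2}. When $j=2n-2$ is even and $i\ge 2n$, the transpose term $(Lh)_{j,i}$ drops out because the block $h_{n-1}$ then lies outside the Hessenberg band of $L$, and the entry is $-h_{n-1}l_{i,2n-1}$, which is \eqref{btoda4}. When $j=2n-1$ is odd and $i=2n+1$, one gets $h_{n-1}l_{2n+1,2n-2}$ from $(Lh)_{i,j}$ and $h_n\,l_{2n-1,2n}=h_n$ from the transpose term, the $h_n$ appearing precisely because the unit superdiagonal entry $l_{2n-1,2n}=1$ collides with the block $h_n$; this gives \eqref{btoda1}. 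Finally, when $j=2n-1$ is odd and $i>2n-1$ with $i\neq 2n+1$, the transpose term again vanishes and the entry reduces to $h_{n-1}l_{i,2n-2}$, which is \eqref{btoda3}. Since these cases exhaust all $(i,j)$ with $i>j$, the four families \eqref{btoda1}--\eqref{btoda4} are equivalent to the matrix identity, hence to the rank two shift condition \eqref{ranktwo}.

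The bulk of this is bookkeeping, and the one place where care is genuinely needed is deciding \emph{which} transpose contributions $(Lh)_{j,i}$ survive: they persist exactly when a nonzero entry of $h$ meets the diagonal or the unit superdiagonal of $L$ inside the transpose, and this single mechanism is responsible for both exceptional terms --- the combination $l_{2n-1,2n-1}+l_{2n-2,2n-2}$ in \eqref{btoda2} and the $-h_n$ in \eqref{btoda1}. I expect sign and index slips in these boundary cases, together with making sure that no off-band entry is left unaccounted for, to be the only real obstacle; the remainder is routine linear algebra.
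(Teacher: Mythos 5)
Your proposal is correct and follows essentially the same route as the paper: both exploit that each row and column of $h$ has a single nonzero entry together with the lower Hessenberg structure of $L$ (unit superdiagonal) to evaluate $(Lh+hL^{\top})_{i,j}$ entry by entry and match the result against $\rho_i\sigma_j-\sigma_i\rho_j$, with the exceptional terms $-h_n$ and $l_{2n-1,2n-1}+l_{2n-2,2n-2}$ arising exactly where a block of $h$ meets the diagonal or unit superdiagonal of $L$. The only (cosmetic) difference is that you use the skew symmetry of $h$ to write $(hL^{\top})_{i,j}=-(Lh)_{j,i}$ instead of computing $hL^{\top}$ separately as the paper does, and your case analysis in fact clarifies that \eqref{btoda3} holds for all $j\geq 2n+1$ with $j\neq 2n+2$.
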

\begin{proof}
One could easily observe $h=[-h_0e_2,h_0e_1,-h_1e_4,h_1e_3,\cdots]$, where $e_i$ is the $i$th elementary column vector, which means its $i$th element is $1$ and the others are $0$. Therefore, the $(i,2j-1)$-th and $(i,2j)$-th elements of $Lh$ could be computed as
\begin{align*}
Lh(i,2j-1)
&=\left\{\begin{array}{ll}
-h_{j-1}l_{i-1,2j-1},&i\geq 2j,\\
-h_{j-1},&i=2j-1,\end{array}
\right.\quad
Lh(i,2j)
&=\left\{\begin{array}{ll}
h_{j-1}l_{i-1,2j-2},&i\geq 2j-1,\\
h_{j-1},&i=2j-2.\end{array}
\right.
\end{align*}
Similarly, if one denotes $h=[h_0e_2^\top,-h_0e_1^\top,h_1e_4^\top, -h_1e_3^\top,\cdots]$, then one could get
\begin{align*}
hL^\top(2i-1,j)
&=\left\{\begin{array}{ll}
h_{i-1}l_{j-1,2i-1},&j\geq 2i,\\
h_{i-1},&j=2i-1,\end{array}
\right.\quad
hL^\top(2i,j)
&=\left\{\begin{array}{ll}
-h_{i-1}l_{j-1,2i-2},&j\geq 2i-1,\\
-h_{i-1},&j=2i-2.\end{array}
\right.
\end{align*}
Making a combination of these results, one can easily arrive at the conclusion.
\end{proof}
Furthermore, from the 2d-Toda theory (see equations \eqref{lij} and \eqref{stilde}), we know the explicit expressions for $l_{j,j}$ could be written as 
\begin{align*}
l_{j,j}=s_{j,j-1}+\tilde{s}_{j+1,j}=s_{j,j-1}-s_{j+1,j},\quad
s_{j,j-1}=\frac{p_1(-\tilde{\p}_t)\tau_j(t,s)|_{s=-t}}{\tau_j(t,s)|_{s=-t}},
\end{align*}
and by the use of the relation \eqref{even} with $k=0,\,l=0$ and $k=0,\,l=1$, the equation \eqref{btoda2} could be rewritten as
\begin{align}\label{bt1}
D_{t_1}\htau_{2n}\cdot\htau_{2n-2}=\htau_{2n-2}^2(\rho_{2n-1}\sigma_{2n-2}-\rho_{2n-2}\sigma_{2n-1}):=\hr_{2n-1}\hs_{2n-2}-\hr_{2n-2}\hs_{2n-1},
\end{align}
where $\hr_{2n-1}=\htau_{2n-2}\rho_{2n-1}$, $\hr_{2n-2}=\htau_{2n-2}\rho_{2n-2}$, $\hs_{2n-1}=\htau_{2n-2}\sigma_{2n-1}$ and $\hs_{2n-2}=\htau_{2n-2}\sigma_{2n-2}$.

On the other hand, if one takes $j=2n+1$ in \eqref{btoda4} and then this equation equals to
\begin{align*}
\rho_{2n}\sigma_{2n-2}-\rho_{2n-2}\sigma_{2n}=-h_{n-1}l_{2n,2n-1}.
\end{align*}
In 2d-Toda theory, it is shown that (c.f. \eqref{l3})
\begin{align*}
l_{2n,2n-1}=\frac{\tau_{2n}(t,s)|_{s=-t}\p_{t_1}^2\tau_{2n}(t,s)|_{s=-t}-(\p_{t_1}^2\tau_{2n}(t,s)|_{s=-t})^2}{(\tau_{2n}(t,s)|_{s=-t})^2}.
\end{align*}
Now, taking \eqref{even} into consideration and choosing $(k,\,l)$ as $(0,\,0)$, $(0,\,1)$ and $(1,\,1)$ respectively, one can find
\begin{align*}
l_{2n,2n-1}=\frac{\htau_{2n}\p_{t_1}^2\htau_{2n}-(\p_{t_1}\htau_{2n})^2}{\htau_{2n}^2}.
\end{align*}
Therefore, equation \eqref{btoda4} in this case could be rewritten as
\begin{align}\label{bt2}
\frac{1}{2}D_{t_1}^2\htau_{2n}\cdot\htau_{2n}=\hr_{2n-2}\hs_{2n}-\hr_{2n}\hs_{2n-2}.
\end{align}
One should notice that $\hat{\sigma}$ and $\hat{\rho}$ are interrelated.
\begin{proposition}
One has
\begin{align}\label{sr2}
\p_{t_1}(\htau\sigma)=\htau\rho,
\end{align}
where $\htau=\text{diag}\{\htau_0 I_{2\times2}, \htau_2 I_{2\times 2},\cdots\}$.
\end{proposition}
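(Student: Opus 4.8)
The plan is to prove the identity entry by entry and to collapse everything onto a single Gram‑type Pfaffian derivative law. Since $\htau$ is block diagonal with $n$‑th block $\htau_{2n}I_{2\times 2}$, the assertion is the pair of scalar relations $\p_{t_1}(\htau_{2n}\sigma_{2n})=\htau_{2n}\rho_{2n}$ and $\p_{t_1}(\htau_{2n}\sigma_{2n+1})=\htau_{2n}\rho_{2n+1}$ for all $n\ge 0$. The first step is to rewrite the four quantities appearing here as single Pfaffians. Because $\sigma=S\alpha$ and $\rho=S\Lambda\alpha$, while the rows of $\htau S$ supply exactly the coefficients of the skew orthogonal polynomials $\htau_{2n}P_{2n}(x)=\Pf(0,\dots,2n,x)$ and $\htau_{2n}P_{2n+1}(x)=\Pf(0,\dots,2n-1,2n+1,x)$ from \eqref{sops} (recall $\Pf(i,x)=x^i$), one expands these Pfaffians along the $x$‑column and pairs the resulting coefficient vector with $\alpha$, respectively with $\Lambda\alpha$. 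Writing $\Pf(d_1,i):=\alpha_{i+1}=\p_{t_1}\alpha_i$ for the shifted label accompanying $d_0$ (so $\Pf(d_0,i)=\alpha_i$ as in \eqref{psops}), this yields
\begin{align*}
\htau_{2n}\sigma_{2n}=\Pf(d_0,0,\dots,2n)=\htau_{2n+1},&\qquad \htau_{2n}\sigma_{2n+1}=\Pf(d_0,0,\dots,2n-1,2n+1)=\p_{t_1}\htau_{2n+1},\\
\htau_{2n}\rho_{2n}=\Pf(d_1,0,\dots,2n),&\qquad \htau_{2n}\rho_{2n+1}=\Pf(d_1,0,\dots,2n-1,2n+1),
\end{align*}
where in each line the $\rho$‑expression differs from the $\sigma$‑expression only in the replacement $d_0\mapsto d_1$ (because $(\Lambda\alpha)_j=\alpha_{j+1}$), and the two $\sigma$‑identifications are consistent with the Pfaffian expansion of $\htau_{2n+1}$ used in the proof of Proposition \ref{sopwave}.

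The second step is the derivative law itself. Specialising the commuting flows to $\p_{t_1}m_\infty=\Lambda m_\infty+m_\infty\Lambda^\top$ and $\p_{t_1}\alpha=\Lambda\alpha$, the rank two shift condition \eqref{ranktwo} becomes, entrywise,
\begin{align*}
\p_{t_1}m_{i,j}=m_{i+1,j}+m_{i,j+1}=\alpha_{i+1}\alpha_j-\alpha_i\alpha_{j+1}=\Pf(d_0,d_1,i,j),\qquad \Pf(d_0,d_1)=0 ;
\end{align*}
that is, the Pfaffian entries $m_{i,j}$ are precisely of the Gram type \eqref{bkp} built from the functions $\alpha_i$, which themselves obey $\p_{t_1}\Pf(d_0,i)=\Pf(d_1,i)$. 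For such Gram‑type Pfaffians one has the standard derivative formula $\p_{t_1}\Pf(d_0,\mathcal I)=\Pf(d_1,\mathcal I)$ for every index list $\mathcal I$ of odd cardinality: differentiating $\Pf(d_0,\mathcal I)$ produces $\Pf(d_1,\mathcal I)$ from the $d_0$‑row, together with a sum $\sum\Pf(d_0,d_1,i,j)\times(\text{cofactor})$ coming from the $m_{i,j}$‑entries, and the latter vanishes by a Pl\"ucker‑type Pfaffian identity (as one checks directly for $|\mathcal I|=1,3$ and as is recorded in Hirota's discussion of BKP Pfaffians). Combining the two steps closes the argument: $\p_{t_1}(\htau_{2n}\sigma_{2n})=\p_{t_1}\Pf(d_0,0,\dots,2n)=\Pf(d_1,0,\dots,2n)=\htau_{2n}\rho_{2n}$, and likewise $\p_{t_1}(\htau_{2n}\sigma_{2n+1})=\p_{t_1}\Pf(d_0,0,\dots,2n-1,2n+1)=\Pf(d_1,0,\dots,2n-1,2n+1)=\htau_{2n}\rho_{2n+1}$, which is exactly $\p_{t_1}(\htau\sigma)=\htau\rho$.

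I expect the main obstacle to be the interplay between the Pfaffian bookkeeping and the precise role of the rank two condition. Concretely, one must keep careful track of the position of the label $d_0$ (equivalently $x$) and of the attendant signs when extracting the coefficients of $S\chi(x)$, so that the coefficient vector really does assemble into the Pfaffians listed above — a check at $n=0,1$ fixes the conventions. More importantly, it must be noted that it is \eqref{ranktwo}, and not merely the commuting flows, that is used: the rank two condition is exactly what recasts $\p_{t_1}m_{i,j}$ in the rank‑two form $\Pf(d_0,d_1,i,j)$ and so triggers the Pl\"ucker cancellation; with only the commuting flows one gets $\p_{t_1}\Pf(d_0,0,\dots,2n)=\Pf(d_0,0,\dots,2n-1,2n+1)$ instead, and the equality $\Pf(d_0,0,\dots,2n-1,2n+1)=\Pf(d_1,0,\dots,2n)$ that closes the proof is itself a consequence of \eqref{ranktwo}. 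An alternative, more operator‑theoretic route would start from $\rho=S\Lambda\alpha=S\p_{t_1}\alpha$, so that the claim is equivalent to $\p_{t_1}(\htau S)\,\alpha=0$, and then feed in the $S$‑dressed rank two condition $Lh+hL^\top=\rho\sigma^\top-\sigma\rho^\top$ together with the orthogonality $\sigma^\top h^{-1}\sigma=0$; but this forces one to control the strictly block‑upper part of the relevant operator acting on $\sigma$, which is messier than the Pfaffian computation, so I would keep the Pfaffian argument as the main line.
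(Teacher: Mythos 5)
Your proof is correct, and it reaches the same Pfaffian mechanism as the paper's, but it is packaged differently enough to be worth contrasting. The paper first reduces the claim to $[\p_{t_1}(\htau S)]\alpha=0$ (using $\p_{t_1}\alpha=\Lambda\alpha$), then represents the entries of that vector as $\int_{\mathbb{R}_+}\p_{t_1}(\htau_{2n}P_{n}(x;t))\,d\mu(x;t)$; the rank two shift condition gives $\p_{t_1}(\htau_{2n}P_{2n}(x;t))=\Pf(d_0,d_1,0,\dots,2n,x)$ (and similarly for the odd index), and the integration over $x$ converts the label $x$ into a second $d_0$, so each entry is a Pfaffian with a repeated row and hence vanishes. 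You instead write down closed Pfaffian forms for \emph{both} sides, $\htau_{2n}\sigma_{2n}=\Pf(d_0,0,\dots,2n)$, $\htau_{2n}\sigma_{2n+1}=\Pf(d_0,0,\dots,2n-1,2n+1)$, $\htau_{2n}\rho_{2n}=\Pf(d_1,0,\dots,2n)$, $\htau_{2n}\rho_{2n+1}=\Pf(d_1,0,\dots,2n-1,2n+1)$ (the $\rho$-identifications do not appear explicitly in the paper), and then invoke the Gram-type derivative law $\p_{t_1}\Pf(d_0,\mathcal I)=\Pf(d_1,\mathcal I)$, whose proof requires the Pl\"ucker-type cancellation of $\sum\Pf(d_0,d_1,i,j)\times(\text{cofactor})$. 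Both arguments hinge on exactly the same use of the rank two shift condition --- namely that it upgrades $\p_{t_1}m_{i,j}=m_{i+1,j}+m_{i,j+1}$ to the Gram form $\Pf(d_0,d_1,i,j)$ with $\Pf(d_0,d_1)=0$ --- and you are right to flag that this, not the commuting flow alone, is the essential input. The paper's repeated-$d_0$ trick makes the cancellation manifest without appealing to the Pl\"ucker identity, which is a small economy; your version buys the explicit formulas for $\htau\rho$ as single Pfaffians, which is a useful byproduct. The operator-theoretic route you sketch and set aside at the end is in fact the opening move of the paper's proof, though the paper then resolves $\p_{t_1}(\htau S)\alpha=0$ by the integral representation rather than by manipulating the dressed condition $Lh+hL^\top=\rho\sigma^\top-\sigma\rho^\top$.
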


\begin{proof}
If one substitutes $\sigma=S\alpha$ and $\rho=S\Lambda\alpha$ into equation \eqref{sr2}, this equation equals to show that
\begin{align}\label{sla}
\p_{t_1}(\htau S\alpha)=\htau S\Lambda\alpha.
\end{align}
By noting that $\alpha$ obeys the derivation rule $\p_{t_n}\alpha=\Lambda^n\alpha$, the equation \eqref{sla} is equivalent to prove
$
[\p_{t_1}(\htau S)]\alpha=0.
$
Observing that $S$ is the dressing operator of the skew orthogonal polynomials, we can rewrite this equation as
\begin{align*}
\int_{\mathbb{R}_+}\p_{t_1}(\htau P(x;t))d\mu(x;t)=0,
\end{align*}
where $P(x;t)$ is the vector of skew orthogonal polynomials related to the skew-Borel decomposition $m_\infty=S^{-1}hS^{-\top}$.
Since the moment matrix satisfies the rank two shift condition, one can get 
\begin{align}\label{zero}
\begin{aligned}
\p_{t_1}(\htau_{2n}P_{2n}(x;t))&=\Pf(d_0,d_1,0,\cdots,2n,x),\\
\p_{t_1}(\htau_{2n}P_{2n+1}(x;t))&=\Pf(d_0,d_1,0,\cdots,2n-1,2n+1,x),
\end{aligned}
\end{align}
and the integration over $x$ contributes to another label $d_0$. From the definition of Pfaffian, one knows these results are equal to zero, which completes the proof.
\end{proof}

\begin{corollary}
$\alpha$ and $\Lambda\alpha$ are the null vectors of $\p_{t_1}(\htau S)$. Equivalently, $\sigma$ and $\rho$ are the null vectors of $\p_{t_1}(\htau S)S^{-1}$.
\end{corollary}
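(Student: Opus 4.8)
The plan is to reduce the two assertions to a single statement about the dressing operator $S$, and then to run — almost verbatim — the Pfaffian computation that proved the preceding proposition, with the constant weight replaced by the weight $x$. First I would note that the two formulations are equivalent: since $\sigma=S\alpha$, $\rho=S\Lambda\alpha$ and $S$ is invertible, $\p_{t_1}(\htau S)S^{-1}\sigma=\p_{t_1}(\htau S)\alpha$ and $\p_{t_1}(\htau S)S^{-1}\rho=\p_{t_1}(\htau S)\Lambda\alpha$, so $\sigma$ and $\rho$ are annihilated by $\p_{t_1}(\htau S)S^{-1}$ exactly when $\alpha$ and $\Lambda\alpha$ are annihilated by $\p_{t_1}(\htau S)$. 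Hence it suffices to prove
\[
\p_{t_1}(\htau S)\,\alpha=0,\qquad \p_{t_1}(\htau S)\,\Lambda\alpha=0 .
\]
The first of these is equivalent to \eqref{sla} and was already established in the proof of the preceding proposition (via the Pfaffian identity \eqref{zero}), so only the second requires work.

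For $\p_{t_1}(\htau S)\Lambda\alpha=0$ I would use $(\Lambda\alpha)_k=\alpha_{k+1}=\int_{\mathbb{R}_+}x\cdot x^k\,d\mu(x;t)$, together with $P(x;t)=S(t)\chi(x)$ and the $t$-independence of $\chi(x)$, to rewrite the $n$-th component as
\[
\big[\p_{t_1}(\htau S)\,\Lambda\alpha\big]_n=\int_{\mathbb{R}_+}x\,\p_{t_1}\!\big[(\htau P(x;t))_n\big]\,d\mu(x;t),
\]
where the interior $t_1$-derivative acts only on the matrix $\htau S$ (equivalently on the polynomial vector $\htau P$), with $x$ held fixed. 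Substituting \eqref{zero}, the bracket is $\Pf(d_0,d_1,0,\dots,2n,x)$ when $n$ is even and $\Pf(d_0,d_1,0,\dots,2n-1,2n+1,x)$ when $n$ is odd. Now a Pfaffian is linear in the entries of the slot labelled $x$, with $\Pf(i,x)=x^i$ and $\Pf(d_0,x)=\Pf(d_1,x)=0$, whereas $\Pf(d_1,i)=\p_{t_1}\alpha_i=\alpha_{i+1}=\int_{\mathbb{R}_+}x\cdot x^i\,d\mu(x)$ and $\Pf(d_0,d_1)=\Pf(d_1,d_1)=0$; therefore integrating the $x$-slot against $x\,d\mu(x)$ amounts, up to an overall sign, to replacing the label $x$ by the label $d_1$, so
\[
\int_{\mathbb{R}_+}x\,\Pf(d_0,d_1,0,\dots,2n,x)\,d\mu(x)=\pm\,\Pf(d_0,d_1,0,\dots,2n,d_1)=0 ,
\]
the last equality because the index $d_1$ is repeated, and similarly in the odd case. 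This yields $\p_{t_1}(\htau S)\Lambda\alpha=0$ and hence the corollary.

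The step I expect to need the most care is the label bookkeeping in that last display: one must check that integrating the $x$-entry against $x\,d\mu(x)$ reproduces precisely the pairings carried by a $d_1$-label — namely $\Pf(d_1,i)=\alpha_{i+1}$, $\Pf(d_0,d_1)=0$ and, crucially, $\Pf(d_1,d_1)=0$ — so that the outcome really is a Pfaffian with a repeated index, and hence zero. This is where the rank two shift condition \eqref{ranktwo} enters: it is that condition, through the Pfaffian derivative formula behind \eqref{zero}, which makes $\p_{t_1}(\htau P(x;t))$ carry the pair $d_0,d_1$, and it is the $d_1$ already present that collides with the one produced by integrating against $x\,d\mu(x)$. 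The same remark explains why only the weights $1$ and $x$ work, i.e.\ why the statement is about $\alpha$ and $\Lambda\alpha$ and not higher shifts: a weight $x^2$ would insert a label $d_2$, which is not repeated.
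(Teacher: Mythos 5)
Your proposal is correct and follows essentially the same route as the paper: the $\alpha$ case is inherited from the preceding proposition, and for $\Lambda\alpha$ you integrate the $x$-slot of the Pfaffians in \eqref{zero} against $x\,d\mu(x)$, which replaces the label $x$ by $d_1$ and produces a Pfaffian with a repeated index, hence zero. Your extra care with the label bookkeeping ($\Pf(d_1,i)=\alpha_{i+1}$, $\Pf(d_0,d_1)=0$) only makes explicit what the paper's one-line argument asserts.
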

\begin{proof}
In the last proposition, we have proved that $\alpha$ is the null vector of $\p_{t_1}(\htau S)$.
Furthermore, if we consider the integration of $x\p_{t_1}(\htau P(x;t))$ over $x$,  the label $d_1$ would appear in expressions \eqref{zero} by taking the place of label $x$, which is also equal to zero. This property demonstrates that
\begin{align*}
\int_{\mathbb{R}_+}x\p_{t_1}(\htau P(x;t))d\mu(x;t)=0,
\end{align*}
and one can get $\p_{t_1}(\htau S)\Lambda\alpha=0$. 
\end{proof}

Therefore, the B-Toda lattice \eqref{bt1} and \eqref{bt2} could be written as
\begin{subequations}
\begin{align}
D_t\htau_{2n}\cdot\htau_{2n-2}&=D_t\hat{\sigma}_{2n-1}\cdot\hat{\sigma}_{2n-2}\label{bbt1}\\
D_t^2\htau_{2n}\cdot\htau_{2n}&=2D_t\hat{\sigma}_{2n-2}\cdot\hat{\sigma}_{2n}.\label{bbt2}
\end{align}
\end{subequations}
Also, the odd-indexed $\hat{\sigma}_{2n+1}$ is related to $\hat{\sigma}_{2n}$.
Since $P(x)=S\chi(x)$, one could write $\sigma$ as
\begin{align*}
\sigma=S\alpha=\int_{\mathbb{R}_+}P(x;t)d\mu(x;t),
\end{align*}
and
\begin{align*}
\hat{\sigma}_{2n}=\int_{\mathbb{R}_+}\htau_{2n}P_{2n}(x;t)d\mu(x;t),\quad \hat{\sigma}_{2n+1}=\int_{\mathbb{R}_+}\htau_{2n}P_{2n+1}(x;t)d\mu(x;t).
\end{align*}
From the definition, one can find
\begin{align*}
\hat{\sigma}_{2n}=\Pf(d_0,0,\cdots,2n):=\htau_{2n+1},\quad
\hat{\sigma}_{2n+1}={\Pf(d_0,0,\cdots,2n-1,2n+1)}
\end{align*}
if we substitute the explicit expressions of skew orthogonal polynomials $\{P_{n}(x)\}_{n\in\mathbb{N}}$ into \eqref{sops}. From the relationship in Proposition \ref{sopwave}, we know $\hat{\sigma}_{2n+1}=\p_{t_1}\htau_{2n+1}$. Therefore, the first member of the B-Toda hierarchy could be written as
\begin{subequations}
\begin{align}
&D_{t_1}^2\htau_{2n}\cdot\htau_{2n}=2D_{t_1}\htau_{2n-1}\cdot\htau_{2n+1},\label{b1}\\
&2D_{t_1}\htau_{2n}\cdot\htau_{2n+2}=D_{t_1}^2\htau_{2n+1}\cdot\htau_{2n+1}.\label{b2}
\end{align}
\end{subequations}
if one takes $\{\hat{\sigma}_n\}_{n\in\mathbb{N}}$ into \eqref{bbt1} and \eqref{bbt2}.
Interestingly, these two equations could be expressed in a unified way as $D_{t_1}^2\htau_n\cdot\htau_n=2D_{t_1}\htau_{n-1}\cdot\htau_{n+1}$, which is called as the B-Toda lattice in the known literature \cite{hirota01}. This fact demonstrates that $\{\htau_{2n+1}\}_{n\geq0}$ (or $\{\sigma_n,\,\rho_n\}_{\geq0}$) in this case are no longer auxiliary variables but another family of tau functions of BKP hierarchy \cite{hu17,loris99}, generated by $\phi_0|\text{vac}\rangle$ which we mentioned at the beginning of this section.

As mentioned, the rank two shift condition is also motivated by the moments of Bures ensemble, that is,  if we consider a skew symmetric bilinear inner product $\langle\cdot,\cdot\rangle:\mathbb{R}[x]\times\mathbb{R}[y]\to \mathbb{R}$ such that
\begin{align}\label{bmoment}
\langle x^i,y^j\rangle:=m_{i,j}=\int_{\mathbb{R}_+^2}\frac{x-y}{x+y}x^iy^j\omega(x)\omega(y)dxdy.
\end{align}
In this case, the partition function of Bures ensemble is defined as the Pfaffian of the moment matrix and therefore
\begin{align*}
\htau_{2n}=\Pf(m_{i,j})_{i,j=0}^{2n-1}=(-1)^n\int_{\xi^{2n}}\prod_{1\leq i<j\leq 2n}\frac{(x_j-x_i)^2}{x_j+x_i}\prod_{i=1}^{2n}\omega(x_i)dx_i
\end{align*}
by making applications of the de Bruijn formula (See \cite[Equation A.5]{chang182}) and Schur's Pfaffian identity (See \cite[Equation A.8]{chang182}) and $\xi^{2n}$ is the subspace of $\mathbb{R}^{2n}_+$ defined by $\xi^{2n}=\{(x_1,\cdots,x_{2n})\,|\, 0<x_1<\cdots<x_{2n}\}$. It is also noted that the odd-indexed tau functions could be also written as the partition functions of Bures ensemble for the odd-number particles \cite{chang182,forrester16}
\begin{align*}
\htau_{2n+1}&=\Pf\left(\begin{array}{cc}
0&\alpha_j\\
-\alpha_i&m_{i,j}
\end{array}\right)_{i,j=0}^{2n}\\
&=(-1)^n\int_{\xi^{2n+1}}\prod_{1\leq i<j\leq 2n+1}\frac{(x_j-x_i)^2}{x_j+x_i}\prod_{i=1}^{2n+1}\omega(x_i)dx_i,
\end{align*}
where $\xi^{2n+1}$ is the configuration space of $\mathbb{R}_+^{2n+1}$ with the ordered set $\xi^{2n+1}=\{(x_1,\cdots,x_{2n+1})\,|\,0<x_1<\cdots<x_{2n+1}\}$.

\subsection{B-Toda Hierarchy}
In the last subsection, we embed the B-Toda hierarchy into 2d-Toda theory and obtain the equations from the rank two shift condition. It is shown that these equations \eqref{b1} and \eqref{b2} are not included in the large BKP hierarchy, thus motivating us to consider what is the sub-hierarchy of the B-Toda lattice.
To this end, firstly we restate the derivatives of partial skew orthogonal polynomials under the rank two shift condition \cite[Corollary 3.16]{chang182}.

\begin{lemma}\label{tderivative}
Under the rank two shift condition, there is a derivative formula for the partial skew orthogonal polynomials
\begin{align}\label{derpsop}
\begin{aligned}
\htau_{n+1}\cdot&\p_{t_1}(\htau_{n-1}\tp_{n-1}(x;t))\\
&=\p_{t_1}\htau_{n+1}\cdot\htau_{n-1}\tp_{n-1}(x;t)-\htau_{n}\cdot\p_{t_1}(\htau_{n}\tp_n(x;t))+\p_{t_1}\htau_n\cdot\htau_n\tp_n(x;t).
\end{aligned}
\end{align}
\end{lemma}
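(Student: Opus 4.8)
The plan is to follow the Pfaffian computation of \cite[Corollary 3.16]{chang182}: rewrite every factor appearing in \eqref{derpsop} as a Pfaffian in the labels $\{d_0,d_1,0,1,2,\dots\}\cup\{x\}$, differentiate these Pfaffians with respect to $t_1$, and recognise the resulting polynomial identity in the $m_{i,j}$ and $\alpha_i$ as a single Plücker-type relation for Pfaffians.

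First I would record the representations. By \eqref{psops} and Proposition \ref{sopwave},
\begin{align*}
&\htau_{2k}=\Pf(0,\dots,2k-1),\qquad \htau_{2k+1}=\Pf(d_0,0,\dots,2k),\\
&\htau_{2k}\tp_{2k}(x)=\Pf(0,\dots,2k,x),\qquad \htau_{2k+1}\tp_{2k+1}(x)=\Pf(d_0,0,\dots,2k+1,x),
\end{align*}
with $\Pf(i,j)=m_{i,j}$, $\Pf(d_0,i)=\alpha_i$, $\Pf(i,x)=x^i$ and $\Pf(d_\mu,x)=0$. Under the rank two shift condition \eqref{ranktwo} the commuting flow forces $\p_{t_1}m_{i,j}=\alpha_{i+1}\alpha_j-\alpha_i\alpha_{j+1}$, so that, on putting $\Pf(d_1,i)=\alpha_{i+1}$ and $\Pf(d_0,d_1)=0$, one has $\p_{t_1}\Pf(i,j)=\Pf(d_0,d_1,i,j)$ --- this is precisely the Gram structure already exploited in \eqref{zero}. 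Combined with $\p_{t_1}\alpha_i=\alpha_{i+1}$ it gives two differentiation rules: for a list $L$ of numerical labels possibly together with $x$ one has $\p_{t_1}\Pf(L)=\Pf(d_0,d_1,L)$, whereas on a list already containing $d_0$ only the $d_0\mapsto d_1$ term survives (prepending $d_0$ again would repeat it), so $\p_{t_1}\Pf(d_0,L)=\Pf(d_1,L)$. Applying these, all four derivatives in \eqref{derpsop} collapse to single Pfaffians; e.g. for $n=2m$,
\begin{align*}
&\p_{t_1}\htau_{2m}=\Pf(d_0,d_1,0,\dots,2m-1),\qquad \p_{t_1}(\htau_{2m}\tp_{2m}(x))=\Pf(d_0,d_1,0,\dots,2m,x),\\
&\p_{t_1}\htau_{2m+1}=\Pf(d_1,0,\dots,2m),\qquad \p_{t_1}(\htau_{2m-1}\tp_{2m-1}(x))=\Pf(d_1,0,\dots,2m-1,x).
\end{align*}

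Substituting into \eqref{derpsop} and normalising the order of the arguments (moving a single label past the block $\{0,\dots,2m-1\}$ costs $(-1)^{2m}=+1$), the case $n=2m$ of \eqref{derpsop} becomes, with $S=\{0,\dots,2m-1\}$,
\begin{align*}
\Pf(S)\,\Pf(S,d_0,d_1,2m,x)=\Pf(S,d_0,d_1)\Pf(S,2m,x)-\Pf(S,d_0,2m)\Pf(S,d_1,x)+\Pf(S,d_0,x)\Pf(S,d_1,2m),
\end{align*}
which is exactly the Pfaffian Plücker identity $\Pf(S)\Pf(S,a,b,c,d)=\Pf(S,a,b)\Pf(S,c,d)-\Pf(S,a,c)\Pf(S,b,d)+\Pf(S,a,d)\Pf(S,b,c)$ with $(a,b,c,d)=(d_0,d_1,2m,x)$. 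The case $n=2m+1$ runs in exactly the same way, except that now $S=\{0,\dots,2m\}$ has odd cardinality, so each reordering of a label past $S$ contributes a sign, and \eqref{derpsop} reduces instead to the odd Plücker relation $\sum_{\ell=0}^{3}(-1)^{\ell}\Pf(S,a_\ell)\,\Pf(S,a_0,\dots,\widehat{a_\ell},\dots,a_3)=0$ with $(a_0,a_1,a_2,a_3)=(2m+1,d_0,d_1,x)$. Both Plücker relations are standard, so the lemma follows once the two parities are written out.

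The main difficulty is the bookkeeping around the label $d_0$. The identifications $\htau_{2k+1}=\Pf(d_0,0,\dots,2k)$ and the rule $\p_{t_1}\Pf(d_0,L)=\Pf(d_1,L)$ are only legitimate once \eqref{ranktwo} is imposed, for it is this condition that lets $d_0,d_1$ act as the ``$f,f'$'' labels of a Gram Pfaffian with $\Pf(d_\mu,d_\nu)=0$; without it \eqref{derpsop} simply fails (contrast the different recurrence \eqref{substitution}, which holds with no shift condition). One must also verify that no label is repeated when $(d_0,d_1)$ is prepended and keep track of the sign from reordering each Pfaffian --- it is precisely this sign that makes the parity of $n$ select the even versus the odd Plücker relation.
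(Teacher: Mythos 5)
Your proof is correct. The paper itself gives no proof of this lemma --- it is quoted directly from \cite[Corollary 3.16]{chang182} --- and your argument is precisely the standard Gram--Pfaffian computation that the cited reference relies on: the rank two shift condition \eqref{ranktwo} together with $\p_{t_1}m_{i,j}=m_{i+1,j}+m_{i,j+1}$ yields $\p_{t_1}\Pf(i,j)=\Pf(d_0,d_1,i,j)$ with $\Pf(d_0,d_1)=0$, the two differentiation rules $\p_{t_1}\Pf(L)=\Pf(d_0,d_1,L)$ and $\p_{t_1}\Pf(d_0,L)=\Pf(d_1,L)$ follow (the second because the would-be extra terms assemble into the expansion of a Pfaffian with a repeated $d_0$ label), and \eqref{derpsop} then reduces, after the sign bookkeeping you describe, to the even and odd Pl\"ucker relations for Pfaffians according to the parity of $n$. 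I checked both parities and the reordering signs work out exactly as you claim.
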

In particular, if we take \eqref{psop2} into the above equation, it leads to a novel particular Fay identity
\begin{align}\label{btodafay1}
\begin{aligned}
\htau_{n+1}\cdot\p_{t_1}\htau_{n-1}(t-[x^{-1}])&+x\htau_n\cdot\p_{t_1}\htau_n(t-[x^{-1}])\\
&=\p_{t_1}\htau_{n+1}\cdot\htau_{n-1}(t-[x^{-1}])+x\p_{t_1}\htau_n\cdot\htau_{n}(t-[x^{-1}]).
\end{aligned}
\end{align}
By the use of $\htau_n(t-[x^{-1}])=\sum_{k\geq0}p_k(-\tilde{\p}_t)\htau_n x^{-k}$, then
\begin{align*}
\htau_{n+1}\cdot\p_{t_1}p_k(-\tilde{\p}_t)\htau_{n-1}-\p_{t_1}\htau_{n+1}\cdot p_k(-\tilde{\p}_t)\tau_{n-1}+\htau_n\cdot\p_{t_1}p_{k+1}(-\tilde{\p}_t)\htau_n-\p_{t_1}\htau_n\cdot p_{k+1}(-\tilde{\p}_t)\htau_n=0
\end{align*}
is valid for all $k\in\mathbb{N}$. The first nontrivial example is given by $k=0$ and it can be explicitly expressed as 
\begin{align*}
D_t^2\htau_n\cdot\htau_n=2D_t\htau_{n-1}\cdot\htau_{n+1},
\end{align*}
which is the B-Toda lattice in \eqref{b1} and \eqref{b2}.
Moreover, we'd like to obtain the general bilinear expressions for this hierarchy. It requires detailed analysis of the wave functions.

From the particular Fay identity \eqref{btodafay1}, one can find
\begin{align}\label{relation1}
\begin{aligned}
\frac{\p_{t_1}\htau_{n}(t-[x^{-1}])}{\htau_{n+1}}x^n&=\p_{t_1}\log\htau_n\frac{\htau_n(t-[x^{-1}])}{\htau_{n+1}}x^n\\
&+\sum_{i=1}^n(-1)^{n-i}\p_{t_1}\log\frac{\htau_{i+1}}{\htau_{i-1}}\frac{\htau_{i-1}(t-[x^{-1}])}{\htau_i}x^{i-1}.
\end{aligned}
\end{align}
With the shift $t-[x^{-1}]\mapsto t$ and transform $x^{-1}\mapsto x$, the Fay identity \eqref{btodafay1} is changed into
\begin{align*}
x\htau_{n+1}(t+[x])\p_{t_1}\htau_{n-1}+\htau_n(t+[x])\p_{t_1}\htau_n=x\p_{t_1}\htau_{n+1}(t+[x])\htau_{n-1}+\htau_n\p_{t_1}\htau_n(t+[x]),
\end{align*}
which implies
\begin{align}\label{relation2}
\frac{\p_{t_1}\htau_{n+1}(t+[x])}{\htau_n}x^n&=\p_{t_1}\log\htau_{n-1}\frac{\htau_{n+1}(t+[x])}{\htau_n}x^n+\sum_{i=2}^n(-1)^{n-i}\left(
\p_{t_1}\log\frac{\htau_{i}}{\htau_{i-2}}
\right)\frac{\htau_i(t+[x])}{\htau_{i-1}}x^{i-1}\nonumber\\
&+(-1)^{n-1}\left(
\p_{t_1}\log\htau_1\cdot\htau_1(t+[x])-\p_{t_1}\htau_1(t+[x])
\right),
\end{align}
where the tail term comes from the assumption $\htau_{-1}=0$ and $\htau_0=1$. Since we have obtained the derivatives of $\htau_n(t\pm[x])$, we can state the following proposition.

\begin{proposition}\label{btodah}
Under the rank two shift condition, the bilinear identity for the wave functions \eqref{pfaffwave} could be written as
\begin{align}\label{btodahierarchy}
\begin{aligned}
(-1)^{n+m}&\left(\p_{t_1}\htau_n(t)\cdot\htau_m(t')-\htau_n(t)\cdot\p_{t'_1}\htau_m(t')\right)\\&=\oint_{C_0}e^{\xi(t'-t,z^{-1})}z^{n-m-1}\htau_{n+1}(t+[z])\htau_{m-1}(t'-[z])dz\\&-\oint_{C_\infty}e^{\xi(t-t',z)}z^{n-m-1}\htau_{n-1}(t-[z^{-1}])\htau_{m+1}(t'+[z^{-1}])dz.
\end{aligned}
\end{align}
\end{proposition}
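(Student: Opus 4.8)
The plan is to feed the explicit tau‑function forms of the Pfaff‑lattice wave functions into the wave‑function bilinear identity \eqref{pfaffwave}, which holds unconditionally, and to let the rank two shift condition act only through the particular Fay identities \eqref{relation1} and \eqref{relation2}. Concretely, I would read off the $(n,m)$ component of \eqref{pfaffwave}, unfolding the block‑antidiagonal matrix $h$ of \eqref{sop}: $h^{-1}$ swaps each index with its partner inside the relevant $2\times2$ block, divides by the corresponding $h_{\lfloor\cdot/2\rfloor}$, and contributes one sign per index — these are the signs that accumulate into the factor $(-1)^{n+m}$ in \eqref{btodahierarchy}. For even‑indexed components I would use \eqref{1e}, \eqref{2e}; for odd‑indexed ones the Fay‑simplified expressions \eqref{phi1o}, \eqref{phi2o} of Proposition \ref{waveal}, whose coefficients $\p_{t_1}\htau_{2k+1}/(\htau_{2k}\htau_{2k+1})$ are precisely what will produce the derivative combination $\p_{t_1}\htau_n(t)\cdot\htau_m(t')-\htau_n(t)\cdot\p_{t'_1}\htau_m(t')$ on the left of \eqref{btodahierarchy}. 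I would probably organise the computation via the following equivalent and more parity‑uniform route: start from the large BKP identity \eqref{lbkp}, already extracted from \eqref{pfaffwave}, and apply $\p_{t'_1}$ to it — the derivative of the exponentials lowers the powers $z^{n-m-2}$ and $z^{n-m}$ to $z^{n-m-1}$ and flips the sign of the $C_\infty$ integral, so the two integrals on the right of \eqref{btodahierarchy} appear at once, while $\p_{t'_1}$ acting on the Miwa‑shifted factors $\htau_{m+1}(t'+[z^{-1}])$, $\htau_{m-1}(t'-[z])$ leaves a pair of ``leftover'' contour integrals still carrying the old powers $z^{n-m-2}$, $z^{n-m}$.

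The rank two shift condition then serves to evaluate the leftover. By \eqref{relation2} the term $\p_{t'_1}\htau_{m+1}(t'+[z^{-1}])$, and by \eqref{relation1} the term $\p_{t'_1}\htau_{m-1}(t'-[z])$, each expand into an alternating telescoping sum of undifferentiated Miwa‑shifted factors $\htau_{i-1}(t'\pm[\cdot])$ with logarithmic‑derivative coefficients, plus boundary pieces forced by $\htau_{-1}=0$, $\htau_0=1$. Substituting these expansions back and applying \eqref{lbkp} once more — now with the shifted index pair — to each summand collapses the contour integrals against one another across $C_\infty$ and $C_0$; the telescoping cancels the interior terms and what remains is a finite sum of purely local products, which, together with $\tfrac12(1-(-1)^{n+m})\htau_n(t)\,\p_{t'_1}\htau_m(t')$ from the differentiated left‑hand side of \eqref{lbkp}, has to reorganise into $(-1)^{n+m}\bigl(\p_{t_1}\htau_n(t)\cdot\htau_m(t')-\htau_n(t)\cdot\p_{t'_1}\htau_m(t')\bigr)$. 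For this last reorganisation — in particular for the appearance of the $\p_{t_1}\htau_n(t)$ factor, which is not manifestly present after only $t'_1$‑differentiation — I would invoke the identifications of Subsection \ref{4.3}, namely $\hat\sigma_{2k}=\htau_{2k+1}$, $\hat\sigma_{2k+1}=\p_{t_1}\htau_{2k+1}$, together with $\p_{t_1}(\htau\sigma)=\htau\rho$ from \eqref{sr2} and the first B‑Toda relations \eqref{b1}, \eqref{b2} (equivalently \eqref{btodafay1} at $k=0$). Finally one lets $n,m$ range over $\mathbb{N}$ to obtain the full hierarchy.

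The step I expect to be the main obstacle is precisely this telescoping‑and‑reassembly: keeping the alternating signs, the index shifts and, above all, the boundary contributions ($\htau_{-1}=0$, $\htau_0=1$) aligned so that after re‑applying \eqref{lbkp} to the telescoped summands exactly the intended local term survives, uniformly in the parity of $n+m$, and so that it collapses to the compact derivative combination rather than to a longer expression. If the $\p_{t'_1}$‑differentiation route proves unwieldy, the alternative of substituting \eqref{phi1o}, \eqref{phi2o} directly into the odd‑odd and mixed components of \eqref{pfaffwave} and collapsing against \eqref{pfafflattice} and \eqref{oddeven} is available as a cross‑check; it splits into parity cases and is heavier on bookkeeping, but makes the origin of the $\p_{t_1}\htau_n(t)$ term transparent.
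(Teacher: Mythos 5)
Your proposal is correct and follows essentially the same route as the paper: the paper likewise differentiates the large BKP identity \eqref{lbkp} with respect to a single time variable ($t_1$ rather than your $t'_1$, a mirror-image choice), substitutes the rank-two Fay expansions \eqref{relation1}--\eqref{relation2} for the leftover differentiated Miwa-shifted factors, telescopes by re-applying \eqref{lbkp}, and isolates the remaining derivative term from the boundary forced by $\htau_{-1}=0$, $\htau_0=1$. The one point where your sketch drifts is the final reassembly: the paper does not need the $\hat\sigma$/$\hat\rho$ identifications or \eqref{b1}--\eqref{b2}, but instead a dedicated boundary lemma (Lemma \ref{lem1}) evaluating $\oint_{C_0}e^{\xi(t'-t,z^{-1})}z^{-m}\p_{t_1}\htau_1(t+[z])\htau_{m-1}(t'-[z])dz=(-1)^{m-1}\p_{t'_1}\htau_m(t')$; in your mirrored setup you would need the analogous lemma producing the $\p_{t_1}\htau_n(t)$ term.
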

The proof of this proposition needs the following lemma.
\begin{lemma}\label{lem1}
Under the assumption that $\htau_{-1}=0$, we have
\begin{align}
\oint_{C_0}e^{\xitt}z^{-m}\p_{t_1}\htau_1(t+[z])\htau_{m-1}(t'-[z])dz=(-1)^{m-1}\p_{t'}\htau_m(t').
\end{align}
\end{lemma}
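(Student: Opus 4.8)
The strategy is, first, to eliminate all dependence on the variable $t$ and thereby turn the contour integral into a single‑time statement, and, second, to recognise the resulting quantity as a Pfaffian that is evaluated by the rank two shift condition.

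\emph{Removing the $t$-dependence.} Since $\htau_1=\Pf(d_0,0)=\alpha_0$ and $\alpha$ obeys $\p_{t_n}\alpha=\Lambda^n\alpha$, one has $\p_{t_1}\htau_1=\alpha_1$, and the single moments admit the plane‑wave representation $\alpha_i(t)=\int_{\mathbb{R}_+}x^i\omega(x;t)\,dx$ with $\omega(x;t)=\omega(x;0)e^{\xi(t,x)}$; hence $\p_{t_1}\htau_1(t+[z])=\alpha_1(t+[z])=\int_{\mathbb{R}_+}\frac{x\,\omega(x;t)}{1-zx}\,dx$ by the standard Miwa‑shift computation $\omega(x;t+[z])=\omega(x;t)/(1-zx)$. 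I would insert this, exchange the $x$‑integration with the $z$‑contour, and expand $\frac{1}{1-zx}$, $e^{\xitt}$ and $\htau_{m-1}(t'-[z])=\sum_{l\ge0}p_l(-\tilde{\p}_{t'})\htau_{m-1}z^l$ (a polynomial of degree $\le m-1$, by the formula for $p_l(-\tilde{\p})\htau_{m-1}$ recorded in the proof of Proposition \ref{sopwave}). The residue at $z=0$ then collapses through the elementary identity $\sum_{a\ge0}p_a(t'-t)x^a=e^{\xi(t'-t,x)}$, and, using $x^{m-1}\htau_{m-1}(t'-[x^{-1}])=\htau_{m-1}(t')\tp_{m-1}(x;t')$ from \eqref{psop2}, the remaining inner $z$‑integral equals $e^{\xi(t'-t,x)}\htau_{m-1}(t')\tp_{m-1}(x;t')$. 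Since $\omega(x;t)e^{\xi(t'-t,x)}=\omega(x;t')$ every trace of $t$ disappears and
\[
\oint_{C_0}e^{\xitt}z^{-m}\p_{t_1}\htau_1(t+[z])\htau_{m-1}(t'-[z])\,dz=\htau_{m-1}(t')\int_{\mathbb{R}_+}x\,\tp_{m-1}(x;t')\,\omega(x;t')\,dx .
\]

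\emph{Rewriting as a Pfaffian.} By \eqref{psops}, $\htau_{m-1}(t')\tp_{m-1}(x;t')=\Pf(\ast,0,1,\dots,m-1,x)$, with $\ast$ absent when $m$ is odd and $\ast=d_0$ when $m$ is even, and with $\Pf(i,x)=x^i$, $\Pf(d_0,x)=0$. Integrating against $x\,\omega(x;t')\,dx$ turns the entry $x$ into a new label $d_1$ characterised by $\Pf(i,d_1)=\int_{\mathbb{R}_+}x^{i+1}\omega(x;t')\,dx=\alpha_{i+1}(t')$ and $\Pf(d_0,d_1)=0$ — which is exactly the moment form of the rank two shift condition \eqref{ranktwo}. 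Thus the left‑hand side of the lemma equals $\Pf(\ast,0,1,\dots,m-1,d_1)$, and it remains to prove the purely algebraic identity
\[
\Pf(\ast,0,1,\dots,m-1,d_1)=(-1)^{m-1}\p_{t'_1}\htau_m(t').
\]

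\emph{The Pfaffian identity.} On the right, the derivative rule for the Pfaffian tau functions used in the proof of Proposition \ref{sopwave} gives $\p_{t'_1}\htau_m=\Pf(\ast',0,1,\dots,m-2,m)$, where $\ast'=d_0$ for $m$ odd and $\ast'$ is absent for $m$ even. On the left I would expand $\Pf(\ast,0,\dots,m-1,d_1)$, drop the vanishing term created by $\Pf(d_0,d_1)=0$, and use the contraction $\Pf(d_0,d_1,i,j)=m_{i+1,j}+m_{i,j+1}=\Pf(i+1,j)+\Pf(i,j+1)$ — again the rank two shift condition — to reassemble the remaining alternating sum into $(-1)^{m-1}\Pf(\ast',0,\dots,m-2,m)$; the cases $m$ even and $m$ odd proceed in the same way, differing only in the placement of the $d_0$‑label. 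Equivalently, one may observe that both Pfaffians are computed by the same de Bruijn integral $\pm\int_{\xi^m}\big(\sum_i x_i\big)\prod_{1\le i<j\le m}\frac{(x_j-x_i)^2}{x_j+x_i}\prod_i\omega(x_i;t')\,dx_i$, using Schur's Pfaffian identity for the Cauchy‑type kernel $\frac{x-y}{x+y}$. I expect this last step to be the real work: it is where the rank two shift condition is essential, and tracking the sign $(-1)^{m-1}$ together with the two parities of $m$ requires some care. The only other point needing a word is the legitimacy of interchanging the $x$‑integration with the $z$‑contour and the series expansions in the first step, which is routine given the assumed fast decay of $\omega$.
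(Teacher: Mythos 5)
Your route is genuinely different from the paper's. The paper stays entirely inside the bilinear formalism: it sets $n=0$ in the large BKP identity \eqref{lbkp} (so the $C_\infty$ term drops out because $\htau_{-1}=0$), applies $\p_{t_1}$ to trade $\p_{t_1}\htau_1(t+[z])$ for $z^{-1}\htau_1(t+[z])$, applies $\p_{t'_1}$ to the same identity to rewrite the result as $\tfrac12(1-(-1)^m)\p_{t'_1}\htau_m-\oint_{C_0}e^{\xitt}z^{-m}\htau_1(t+[z])\p_{t'_1}\htau_{m-1}(t'-[z])dz$, and evaluates the remaining integral by substituting the Fay-type expansion \eqref{relation1} and invoking \eqref{lbkp} once more; the two contributions combine to $-(-1)^m\p_{t'_1}\htau_m$. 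You instead pass to the measure realization of the rank two shift condition and collapse the residue to $\htau_{m-1}(t')\int_{\mathbb{R}_+}x\,\tp_{m-1}(x;t')\omega(x;t')dx$. Your first two stages are correct (the coefficient of $z^{-1}$ does reassemble into $x^{m}e^{\xi(t'-t,x)}\htau_{m-1}(t'-[x^{-1}])$, and \eqref{psop2} converts this as you say), and they buy something the paper's proof hides: a concrete identification of the left-hand side as a single moment of $\tp_{m-1}$. The price is that you need the integral realization of \eqref{ranktwo}, which the paper uses only in a remark, whereas the paper's argument needs nothing beyond \eqref{lbkp} and \eqref{relation1}.

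The gap is in your third step, which you yourself flag as ``the real work'': the identity $\Pf(\ast,0,\dots,m-1,d_1)=(-1)^{m-1}\p_{t'_1}\htau_m(t')$ is asserted, not proved. It is true --- I checked $m=1,2,3$ directly and the mechanism is right --- but neither of your two suggested routes is carried out, and your sketch contains a sign inconsistency that has to be resolved before either closes. You define the new label by $\Pf(i,d_1)=\alpha_{i+1}$, which is indeed what integrating against $x\,\omega\,dx$ in the \emph{last} Pfaffian slot produces; but the contraction you then quote, $\Pf(d_0,d_1,i,j)=m_{i+1,j}+m_{i,j+1}$, holds only for the opposite orientation $\Pf(d_1,i)=\alpha_{i+1}$, and with your convention one gets $\Pf(d_0,d_1,i,j)=-(m_{i+1,j}+m_{i,j+1})$. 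That relative sign is exactly where $(-1)^{m-1}$ must come from. Similarly, $\p_{t'_1}\htau_m=\Pf(\ast,0,\dots,m-2,m)$ only after discarding the repeated-label terms in the derivative rule, so what actually remains to prove is the Pfaffian identity $\Pf(\ast,0,\dots,m-1,d_1)=(-1)^{m-1}\Pf(\ast,0,\dots,m-2,m)$ under \eqref{ranktwo}, by induction on $m$ with the two parities treated separately; the de Bruijn alternative is not obviously easier, since it is not immediate that the left-hand Pfaffian equals the symmetrized integral with the factor $\sum_i x_i$. Until one of these is done the proof is incomplete --- the paper's detour through \eqref{lbkp} and \eqref{relation1} is precisely what lets it avoid this combinatorial identity. (A smaller point: interchanging $\int_{\mathbb{R}_+}dx$ with the expansion of $1/(1-zx)$ is not analytically ``routine'', since $|zx|>1$ on part of the support; it is legitimate only as a formal-series manipulation, which is admittedly the level at which the whole paper operates.)
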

\begin{proof}
Taking $n=0$ in the large BKP hierarchy \eqref{lbkp} and taking the $t_1$-derivative on the both sides, one knows
\begin{align*}
\oint_{C_0}e^{\xitt}z^{-m}\p_{t_1}\htau_1(t+[z])\htau_{m-1}(t'-[z])dz=\oint_{C_0}e^{\xitt}z^{-m-1}\htau_1(t+[z])\htau_{m-1}(t'-[z])dz
\end{align*}
due to $\htau_{-1}=0$ and $\htau_0=1$.
Noting that the right hand side in the above equation can be expressed as
\begin{align*}
\frac{1}{2}(1-(-1)^m)\p_{t'_1}\htau_m(t')-\oint_{C_0}e^{\xitt}z^{-m}\htau_1(t+[z])\p_{t'_1}\htau_{m-1}(t'-[z])dz
\end{align*}
if we take the $t'_1$-derivative of \eqref{lbkp} when $n=0$. By the use of equation \eqref{relation1}, one could find the latter term is equal to 
\begin{align*}
\frac{1}{2}(1-(-1)^m)\htau_m\p_{t'_1}\log\htau_{m-1}+(-1)^{m-1}\htau_m\sum_{i=1}^{m-1}(-1)^i(1-(-1)^i)\p_{t'_1}\log\frac{\htau_{i+1}}{\htau_{i-1}}=\frac{1}{2}(1+(-1)^m)\p_{t'_1}\htau_m,
\end{align*}
and thus complete the proof.
\end{proof}
Now we could state the proof of Proposition \ref{btodah}.
\begin{proof}
If one takes the $t_1$-derivative on the both sides of large BKP hierarchy, then one can find
\begin{align*}
\frac{1}{2}&(1-(-1)^{n+m})\p_{t_1}\htau_n(t)\cdot\htau_m(t')\\
&=\oint_{C_\infty}e^{\xit}\htau_{n-1}(t-[z^{-1}])\htau_{m+1}(t'+[z^{-1}])z^{n-m-1}dz\\
&-\oint_{C_0}e^{\xitt}\htau_{n+1}(t+[z])\htau_{m-1}(t'-[z])z^{n-m-1}dz\\
&+\oint_{C_\infty}e^{\xit}\p_{t_1}\htau_{n-1}(t-[z^{-1}])\htau_{m+1}(t'+[z^{-1}])z^{n-m-2}dz\\
&+\oint_{C_0}e^{\xitt}\p_{t_1}\htau_{n+1}(t+[z])\htau_{m-1}(t'-[z])z^{n-m}dz.
\end{align*}
Taking the expressions of $\p_{t_1}\htau_{n-1}(t-[z^{-1}])$ and $\p_{t_1}\htau_{n+1}(t+[z])$ from \eqref{relation1} and \eqref{relation2}, and by the use of the large BKP hierarchy \eqref{lbkp} again, we know the last two terms are equal to
\begin{align*}
\frac{1}{2}(1+(-1)^{n+m})\htau_m(t')\p_{t_1}\htau_n(t)+(-1)^n\htau_n\oint_{C_0}e^{\xitt}\p_{t_1}\htau_1(t+[z])\htau_{m-1}(t'-[z])z^{-m}dz.
\end{align*}
By using the result of lemma \ref{lem1}, the proof is completed.
\end{proof}

It should be mentioned that the B-Toda hierarchy's Fay identity \eqref{relation1} could give us the following relation
\begin{align}\label{tpsop}
\p_{t_1}\tp_{n}(x;t)=\frac{\htau_{n+1}}{\htau_n}\sum_{i=1}^n(-1)^{n-i}\left(\p_{t_1}\log\frac{\htau_{i+1}}{\htau_{i-1}}\right)\frac{\htau_{i-1}}{\htau_i}\tp_{i-1}(x;t).
\end{align}
Making a combination of the results \eqref{substitution} and \eqref{tpsop}, one could find the spectral problem
\begin{align}\label{spectral}
\begin{aligned}
x\tp_n(x;t)&=\tp_{n+1}(x;t)+\left(
\p_{t_1}\log\frac{\htau_{n+1}}{\htau_n}
\right)\tp_n(x;t)+(\p_{t_1}^2\log\htau_n-\frac{\htau_{n+2}\htau_{n-1}}{\htau_{n+1}\htau_n})\tp_{n-1}(x;t)\\
&-\frac{\htau_{n+1}}{\htau_n}\sum_{i=1}^{n-1}(-1)^{n-i}\left(\p_{t_1}\log\frac{\htau_{i+1}}{\htau_{i-1}}\right)\frac{\htau_{i-1}}{\htau_i}\tp_{i-1}(x;t)
\end{aligned}
\end{align}
with regard to the partial skew orthogonal polynomials.
The equation \eqref{spectral} also implies the following four term recurrence relation
\begin{align}\label{ftrr}
\begin{aligned}
&x(\tilde{P}_n(x)+u_n\tp_{n-1}(x))\\
&=\tp_{n+1}(x)+(v_{n+1}-v_n+u_n)\tp_n(x)-u_n(v_{n+1}-v_n+u_{n+1})\tp_{n-1}(x)-u_n^2u_{n-1}\tp_{n-2}(x)
\end{aligned}
\end{align}
for all $n\in\mathbb{N}$,
where $u_n$ and $v_n$ have the expressions
\begin{align*}
u_n=\frac{\htau_{n+1}\htau_{n-1}}{\htau_n^2},\quad v_n=\p_{t_1}\log\htau_n
\end{align*}
if the $t_1$-parameter is introduced. 
Therefore, one can denote the spectral operator $\tilde{L}$ as
\begin{align*}
\tilde{L}=(\Lambda^0+\Lambda^{-1}a)^{-1}(\Lambda^{1}+\Lambda^0b+\Lambda^{-1}c+\Lambda^{-2}d),
\end{align*}
such that $x\tp=\tilde{L}\tp$ and $a=\di(a_1,a_2,\cdots)$, $b=\di(b_0,b_1,\cdots)$, $c=\di(c_1,c_2,\cdots)$, $d=\di(d_2,d_3,\cdots)$ with elements $a_n=u_n$, $b_n=v_{n+1}-v_n+u_n$, $c_n=-u_n(v_{n+1}-v_n+u_{n+1})$ and $d_n=-u_n^2u_{n-1}$. 

Till now, we show that the tau function of the B-Toda hierarchy is obtained by a skew symmetric reduction from the 2d-Toda hierarchy as well as a rank two shift condition. The first condition tells us the B-Toda lattice could be viewed as a special case of the Pfaff lattice, and therefore, B-Toda lattice inherits the merits of the Pfaff lattice. Moreover, the rank two shift condition suggests us another basis of polynomials, which satisfy a four term recurrence relation and provide us an explicit spectral problem. Therefore, we would utilise these two important features and give the Lax representation with explicit Lax matrix for the B-Toda hierarchy.

\begin{proposition}
Given a skew symmetric moment matrix $m_\infty$ with the rank two shift condition \eqref{ranktwo} and its related skew-Borel decomposition $m_\infty=S^{-1}hS^{-\top}$, one can define a family of skew orthogonal polynomials $P(x)=S\chi(x)$ satisfying the spectral problem $LP(x)=xP(x)$ with $L=S\Lambda S^{-1}$, and $L$ could be expressed as 
\begin{align*}
L=(\Lambda^0+\Lambda^{-1}\alpha+\Lambda^{-2}\beta)(\Lambda^0+\Lambda^{-1}a)^{-1}(\Lambda^{1}+\Lambda^0b+\Lambda^{-1}c+\Lambda^{-2}d)(\Lambda^0+\Lambda^{-1}\alpha+\Lambda^{-2}\beta)^{-1}
\end{align*}
with $\alpha=\di(v_1,0,v_3,0,v_5,\cdots)$ and $\beta=\di(0,-u_1u_2,0,-u_3u_4,\cdots)$. Moreover, one can get the Lax integrability of the B-Toda hierarchy from the skew-Borel decomposition
\begin{align*}
\p_{t_n} L=[-(L^n+hL^{\top n}h^{-1})_{<0},L]:=[-\pi_i(L^n),L],
\end{align*}
where $\pi_i$ is the projection to the strict lower triangular matrix part in $2\times 2$ block matrix sense.
\end{proposition}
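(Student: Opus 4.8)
The assertion has two essentially independent parts: the banded closed form for the Lax operator $L$, and the Lax (zero-curvature) equation governing its $t_n$-flow. The spectral relation $LP(x)=xP(x)$ is immediate, since $LP(x)=S\Lambda S^{-1}S\chi(x)=S\Lambda\chi(x)=xS\chi(x)=xP(x)$ by $\Lambda\chi(x)=x\chi(x)$, so the content is the two displayed equations.

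\textbf{Closed form of $L$.} The plan is to transport the known band form of the spectral operator $\tilde{L}=\ts\Lambda\ts^{-1}$ of the partial skew orthogonal polynomials through the conjugation $L=\mathcal{A}\tilde{L}\mathcal{A}^{-1}$ of \eqref{laxrelation}. First I would write the four-term recurrence \eqref{ftrr} in operator form, $(\Lambda^0+\Lambda^{-1}a)\,x\tp(x)=(\Lambda^1+\Lambda^0b+\Lambda^{-1}c+\Lambda^{-2}d)\,\tp(x)$; since $\tp(x)=\ts\chi(x)$ gives $x\tp(x)=\tilde{L}\tp(x)$ and the monomials $\{x^n\}$ are linearly independent, one cancels $\tp(x)$ (equivalently $\chi(x)$, then the invertible $\ts$) and inverts the lower-triangular factor $\Lambda^0+\Lambda^{-1}a$ to get $\tilde{L}=(\Lambda^0+\Lambda^{-1}a)^{-1}(\Lambda^1+\Lambda^0b+\Lambda^{-1}c+\Lambda^{-2}d)$. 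Next, Proposition \ref{ss} exhibits $\mathcal{A}=S\ts^{-1}$ as the band matrix whose first sub-diagonal is $\alpha_0,0,\alpha_1,0,\dots$ and whose second sub-diagonal is $0,\beta_1,0,\beta_2,\dots$, with entries the Pfaffian ratios of \eqref{abn}. Under the rank two shift condition and the time flows $\p_{t_n}m_\infty=\Lambda^nm_\infty+m_\infty\Lambda^{\top n}$, $\p_{t_n}\alpha=\Lambda^n\alpha$, Proposition \ref{sopwave} yields $\alpha_n=\p_{t_1}\log\htau_{2n+1}=v_{2n+1}$, while the ratio defining $\beta_n$ reduces to (minus) a product of two consecutive $u$'s; this identifies $\mathcal{A}$ with $\Lambda^0+\Lambda^{-1}\alpha+\Lambda^{-2}\beta$ for the interleaved diagonal matrices $\alpha,\beta$ of the statement. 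Substituting both expressions into $L=\mathcal{A}\tilde{L}\mathcal{A}^{-1}$ gives the announced factorization.

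\textbf{Lax equation.} Here I would run the Adler--van Moerbeke Pfaff-lattice computation adapted to the decomposition $m_\infty=S^{-1}hS^{-\top}$, i.e.\ to $Sm_\infty S^\top=h$. Differentiating this identity in $t_n$, inserting the commuting vector field \eqref{tes}, and using $S\Lambda^nS^{-1}=L^n$ and $m_\infty=S^{-1}hS^{-\top}$ produces
\begin{align*}
Nh+hN^\top+L^nh+hL^{\top n}=\p_{t_n}h,\qquad N:=(\p_{t_n}S)S^{-1}.
\end{align*}
Right-multiplying by $h^{-1}$ turns this into $N+hN^\top h^{-1}+L^n+hL^{\top n}h^{-1}=(\p_{t_n}h)h^{-1}$. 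Now $N$ has vanishing $2\times2$ diagonal blocks (as $S$ has identity diagonal blocks), so it is strictly lower triangular in the $2\times2$-block sense; consequently $hN^\top h^{-1}$ is strictly block-upper triangular, and $(\p_{t_n}h)h^{-1}$ is block-diagonal, since $h$ is block-diagonal and hence so are $\p_{t_n}h$, $h^{-1}$ and their product. Projecting onto the strictly-block-lower part --- well-defined, since every matrix decomposes uniquely into strict-block-lower, block-diagonal and strict-block-upper parts --- kills every term except $N$ and $(L^n+hL^{\top n}h^{-1})_{<0}$, whence $N=-(L^n+hL^{\top n}h^{-1})_{<0}=:-\pi_i(L^n)$. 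Finally $\p_{t_n}L=\p_{t_n}(S\Lambda S^{-1})=[(\p_{t_n}S)S^{-1},L]=[-\pi_i(L^n),L]$.

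\textbf{Expected obstacle.} The Lax-equation part will be routine once one notices that only the plain strict-block-lower projection is needed: the relation $h^\top=-h$ is exactly what pushes the $hN^\top h^{-1}$ term into the complementary subspace, so no genuine $sp(\infty)$-projection is ever invoked. The real difficulty will lie in the first part --- reconciling the interleaved even/odd band pattern of $\mathcal{A}=S\ts^{-1}$ with the diagonal matrices $\alpha,\beta$, and collapsing the Pfaffian ratios \eqref{abn} into products of the $u_n$, which demands careful tracking of the index offsets among the $\htau$-normalizations, the variables $u_n,v_n$, and the recurrence coefficients $a,b,c,d$.
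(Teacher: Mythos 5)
Your proposal is correct and follows essentially the same route the paper sets up: the paper states this proposition without a separate proof, but the intended argument is precisely your combination of $L=\mathcal{A}\tilde{L}\mathcal{A}^{-1}$ from \eqref{laxrelation} with $\mathcal{A}=S\ts^{-1}$ read off from Proposition \ref{ss} and \eqref{abn}, the operator form of the four-term recurrence \eqref{ftrr} for $\tilde{L}$, and the standard block Lie-algebra splitting applied to $\p_{t_n}(Sm_\infty S^\top)=\p_{t_n}h$. One small quibble: in the Lax-equation step it is the block-diagonality of $h$ (not its skew-symmetry) that keeps $hN^\top h^{-1}$ strictly block-upper triangular, and your identification $\beta_n=-u_{2n}u_{2n+1}$ is the correct reading of \eqref{abn} even though the proposition's displayed indexing of $\beta$ appears to be off by one.
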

In general, the Lax matrix $L$ for the skew-orthogonal polynomials could hardly be written in an explicit form except the first few diagonals, see \cite[Equation 0.17]{adler02} and \cite[Equation 1.5]{kodama10}. However, this proposition tells us, under the rank two shift condition, the Lax matrix $L$ could be written explicitly and therefore one can get more information about the Lax integrability of the B-Toda hierarchy.


\section{Concluding remarks}
This article is about two new sub-hierarchies by considering the symmetric and skew symmetric reductions on the general moment matrix and some rank shift conditions, motivated by the Cauchy two-matrix model and Bures ensemble. Although the first members of these hierarchies have been achieved by the average characteristic polynomials approach, the moment matrix method provides us an insight to the integrable structures as well as algebraic structures for the whole discrete hierarchy.  Moreover, the rank shift condition has never appeared before in the studies of integrable systems, and we hope this kind of concepts would bring some new features to the studies, not only in the classical integrable system theory but in random matrix theory.

However, some problems are left for future studies. One is regarding the partial skew orthogonal relation. Notice that the matrix $\tilde{h}$ in \eqref{tildeh} could be written as
\begin{align*}
\tilde{h}=\htau(\Lambda-\Lambda^\top)^{-1}\htau,\quad \htau=\text{diag}\left(
\frac{\htau_1}{\htau_0},\frac{\htau_2}{\htau_1},\cdots
\right),
\end{align*}
which is simply structured. It is unclear what is the role they will play in the random matrix theory and it is also difficult to see how do these odd-indexed tau functions come from the matrix decomposition theory and how do they connect with 2d-Toda's tau functions.
In addition, similar with the relationship between Toda and KP hierarchy, these two hierarchies could be viewed as the discrete versions of BKP and CKP hierarchies. The first members of these discrete flows, in fact, have been partially demonstrated in \cite[Equation (27)]{loris99} and \cite[Equation (31)]{loris992} in the studies of symmetry reductions from the continuous hierarchies and  Darboux transformations. Being lack of the discrete variables, equations in the above-mentioned references are not closed in the $t_1$-flow. However, They did provide some hints to reveal the relations between these discrete and continuous hierarchies. This kind of link may help us to consider the Virasoro constraints for these discrete hierarchies, and further to the gap probabilities for the Bures ensemble and Cauchy two-matrix model, regarding the continuous hierarchies have already been known, e.g. the Adler-Shiota-van Moerbeke formula and Virasoro symmetries for the BKP hierarchy have been shown in \cite{vandeleur95} and non-linearizable Virasoro symmetries for the CKP hierarchy has been given in \cite{chang13}. It is also interesting to consider the general spectral problem implied in \cite{zuo19} with Lax matrix
\begin{align*}
L=(\Lambda-a_{l+2})^{-1}(\Lambda^{k+1}+a_1\Lambda^k+\cdots+a_{l+1}\Lambda^{k-l}),\, 1\leq k<l.
\end{align*} In fact, in the recent work \cite{forrester192}, Forrester and one of the author proposed a $\theta$-deformed Cauchy two-matrix model with the joint probability density function
\begin{align*}
\frac{\prod_{1\leq i<j\leq N}(x_j-x_i)(y_j-y_i)}{\prod_{i,j=1}^N(x_i+y_j)}\prod_{1\leq i<j\leq N}(x_j^\theta-x_i^\theta)(y_j^\theta-y_i^\theta)\prod_{i=1}^N \omega_1(x_i)\omega_2(y_i)dx_idy_i,
\end{align*}
which probably relates to the above spectral problem since the generalised $\theta$-deformed Cauchy bi-orthogonal polynomials will provide a longer recurrence relation and we will show it in a following work.

\section*{Acknowledgement}
S. Li  is supported by the ARC Centre of Excellence for Mathematical and Statistical frontiers (ACEMS) and G. Yu is supported by National Natural Science Foundation of China (Grant no. 11871336).

\small
\bibliographystyle{abbrv}

\begin{thebibliography}{10}

\bibitem{adler1999}
M.~Adler, E.~Horozov, and P.~van Moerbeke.
\newblock {The Pfaff lattice and skew-orthogonal polynomials}.
\newblock {\em Int. Math. Res. Notices}, 11: 569--588, 1999.

\bibitem{adler97}
M. Adler and P. van Moerbeke.
\newblock{Group factorization, moment matrices, and Toda lattices}.
\newblock {\em Int. Math. Res. Notices}, 12: 555-572, 1997.


\bibitem{adler19992}
M.~Adler and P. van Moerbeke.
\newblock{The spectrum of coupled random matrices}.
\newblock {\em Ann. of Math (2)}, 149: 921-976, 1999.

\bibitem{adler00}
M. Adler and P. van Moerbeke.
\newblock{Hermitian, symmetric and symplectic random ensembles: PDEs for the distribution of the spectrum}.
\newblock{\em Ann. of Math (2)}, 153: 149-189, 2001.

\bibitem{adler02}
M. Adler and P. van Moerbeke.
\newblock Toda versus Pfaff lattice and related polynomials.
\newblock {\em Duke Math J.}, 112: 1-58, 2002.



\bibitem{adler03}
M. Adler and P. van Moerbeke.
\newblock{Recursion relations for unitary integrals, combinatorics and the Toeplitz lattice}.
\newblock {\em Comm. Math. Phys.}, 237: 397-440, 2003.

\bibitem{adler2002}
M.~Adler, T.~Shiota, and P.~van Moerbeke.
\newblock Pfaff $\tau$-functions.
\newblock {\em Math. Ann.}, 322(3): 423--476, 2002.



\bibitem{adler18}
M. Adler and P. van Moerbeke.
\newblock The AKS theorem, A.C.I. systems and random matrix theory.
\newblock {\em J. Phys. A}, 51: 423001, 2018



\bibitem{bertola2009}
M.~Bertola, M.~Gekhtman, and J.~Szmigielski.
\newblock The {C}auchy two-matrix model.
\newblock {\em Commun. Math. Phys.}, 287(3): 983--1014, 2009.

\bibitem{bertola2010} 
M. Bertola, M. Gekhtman and J. Szmigielski.
\newblock{Cauchy biorthogonal polynomials}.
\newblock {\em J. Approx. Theory}, 162: 832-867, 2010. 

\bibitem{bertola2014} 
M. Bertola, M. Gekhtman and J. Szmigielski.
\newblock{Cauchy-Laguerre two-matrix model and the Meijer-G random point field}.
\newblock {\em Commum. Math. Phys.}, 326: 111-144, 2014. 

\bibitem{brini17}
A. Brini, G. Carlet, S. Romano and P. Rossi.
\newblock{Rational reductions of the 2D-Toda hierarchy and mirror symmetry}.
\newblock {\em J. Eur. Math. Soc.}, 19: 835-880, 2017.

\bibitem{bures69}
D. Bures.
\newblock {An extension of Kakutani's thoerem on infinite product measures to the tensor product of semifinite w*-algebras}.
\newblock {\em Trans. of the AMS}, 135: 199-212, 1969.



\bibitem{chang13}
L. Chang and C. Wu.
\newblock Tau function of the CKP hierarchy and non-linearizable Virasoro symmetries.
\newblock {\em Nonlinearity}, 26: 2577, 2013.


\bibitem{chang182}
X. Chang, Y. He, X. Hu and S. Li.
\newblock{Partial-skew-orthogonal polynomials and related integrable lattices with Pfaffian tau-functions.}
\newblock {\em Comm. Math. Phys.}, 364: 1069-1119, 2018.

\bibitem{chang18}
X. Chang, X. Hu and S. Li.
\newblock{Degasperis-Procesi peakon dynamical system and finite Toda lattice of CKP type}.
\newblock {\em Nonlinearity}, 31: 4746, 2018.

\bibitem{deift00}
P. Deift.
\newblock{Orthogonal Polynomials and Random Matrices: A Riemann-Hilbert Approach}.
\newblock Courant Lecture Notes 3, American Mathematical Society, 2000.

\bibitem{faybusovich99}
L. Faybusovich and M. Gekhtman.
\newblock{On Schur flows.}
\newblock {\em J. Phys. A}, 32: 4671, 1999.

\bibitem{forrester16}
P. Forrester and M. Kieburg.
\newblock{Relating the Bures measure to the Cauchy two-matrix model}.
\newblock{\em Commun. Math. Phys.}, 342: 151-187, 2016.

\bibitem{forrester192}
P. Forrester and S. Li.
\newblock{Fox H-kernel and $\theta$-deformation of the Cauchy two-matrix model and Bures ensemble}.
\newblock {\em Int. Mat. Res. Not.}, rnz028, 2019.


\bibitem{forrester19}
P. Forrester and S. Li.
\newblock{Classical discrete symplectic ensembles on the linear and exponential lattice: skew orthogonal polynomials and correlation functions}.
\newblock{ arXiv: 1902.09042}, accepted by Trans. of AMS,  2019.

\bibitem{forrester06}
P. Forrester and N. Witte.
\newblock{Bi-orthogonal polynomials on the unit circle, regular semi-classical weights and integrable systems}.
\newblock {Constr. Approx.}, 24: 201-237, 2006.


\bibitem{gerasimov90}
A. Gerasimov, A. Marshakov, A. Mironov, A. Morozov and A. Orlov.
\newblock{Matrix models of two-dimensional gravity and Toda theory}.
\newblock{\em Nucl. Phys. B}, 357: 565-618, 1990.

\bibitem{hirota04}
R. Hirota (Translated by A. Nagai, J. Nimmo and C. Gilson).
\newblock The direct method in soliton theory.
\newblock Cambridge University Press, 2004.

\bibitem{hirota01}
R. Hirota, M. Iwao and S. Tsujimoto.
\newblock{Soliton equations exhibiting Pfaffian solutions}.
\newblock{\em Glasgow Math. J.}, 43A: 33-41, 2001.


\bibitem{hu06}
X. Hu, J. Zhao and C. Li.
\newblock {Matrix integrals and several integrable differential-difference systems}.
\newblock {\em J. Phys. Soc. Jpn.}, 75: 054003, 2006.


\bibitem{hu17}
X. Hu and S. Li.
\newblock{The partition function of Bures ensemble as the $\tau$-function of BKP and DKP hierarchies: continuous and discrete}.
\newblock {\em J. Phys. A}, 50: 285201, 2017.



\bibitem{jimbo1983}
M. Jimbo and T. Miwa.
\newblock{Solitons and infinite dimensional Lie algebras}.
\newblock{\em Publ. RIMS, Kyoto Univ.}, 19: 943-1001, 1983.

\bibitem{kac97}
V. Kac and J. van de Leur.
\newblock The geometry of spinors and the multicomponent BKP and DKP hierarchies.
\newblock{\em CRM Proc. Lecture notes 14}, AMS, Providence, 159-202, 1998.

\bibitem{kodama10}
Y. Kodama and V. Pierce.
\newblock{The Pfaff lattice on symplectic matrices}.
\newblock{\em J. Phys. A}, 43: 055206, 2010.


\bibitem{li2019}
C. Li and S. Li.
\newblock {The Cauchy Two-Matrix Model, C-Toda Lattice and CKP Hierarchy}.
\newblock{\em J. Nonlinear Sci.}, 29: 3-27, 2019.


\bibitem{loris992}
I. Loris.
\newblock On reduced CKP equations.
\newblock {\em Inverse Problems}, 15: 1099-1109, 1999.

\bibitem{loris99}
I. Loris and R. Willox.
\newblock{Symmetry reductions of the BKP hierarchy}.
\newblock{\em J. Math. Phys.}, 40: 1420-1431, 1999.


\bibitem{lundmark2005}
H.~Lundmark and J.~Szmigielski.
\newblock Degasperis-{P}rocesi peakons and the discrete cubic string.
\newblock {\em Int. Math. Res. Pap.}, 2005(2): 53--116, 2005.

\bibitem{manas19}
M. Ma$\tilde{n}$as.
\newblock Revisiting biorthogonal polynomials. An LU factorization discussion.
\newblock arXiv: 1907.04280v1, 2019.

\bibitem{mukaihira02}
A. Mukaihira and Y. Nakamura.
\newblock {Schur flow for orthogonal polynomials on the unit circle and its integrable discretization.}
\newblock {J. Comp. Appl. Math.}, 139: 75-94, 2002.


\bibitem{orlov16}
A. Orlov, T. Shiota and K. Takasaki.
\newblock Pfaffian structures and certain solutions to BKP hierarchies II. Multiple integrals.
\newblock arXiv: 1611.02244, 2016.


\bibitem{sommers03}
H. Sommers and K. Zyczkowski.
\newblock Bures volume of the set of mixed quantum states.
\newblock {\em J. Phys. A}, 36: 10083-10100, 2003.

\bibitem{takasaki18}
K. Takasaki.
\newblock Toda hierarchies and their applications.
\newblock {\em J. Phys. A}, 51: 203001, 2018.


\bibitem{ueno84}
K. Ueno and K. Takasaki.
\newblock Toda lattice hierarchy.
\newblock {\em Adv. Studies Pure Math.}, 4: 1-95, 1984.


\bibitem{vandeleur95}
J. van de Leur.
\newblock{The Adler-Shiota-van Moerbeke formula for the BKP hierarchy.}
\newblock {\em J. Math. Phys.}, 36: 4940, 1995.

\bibitem{vandeleur01}
J. van de Leur.
\newblock{Matrix integrals and the geometry of spinors}.
\newblock{\em J. Nonlinear Math. Phys.}, 8: 288-310, 2001.

\bibitem{vandeleur15}
J. van de Leur and A. Orlov.
\newblock{Pfaffian and determinantal tau functions}.
\newblock {\em Lett. Math. Phys.}, 105: 1499-1531, 2015.



\bibitem{wang19}
Z. Wang and S. Li.
\newblock{BKP hierarchy and Pfaffian point process}.
\newblock{\em Nucl. Phys. B}, 939: 447-464, 2019.


\bibitem{zuo19}
D. Zuo.
\newblock Frobenius manifolds and a new class of extended affine Weyl groups $\tilde{W}^{(k,k+1)}(A_l)$.
\newblock{ arXiv: 1905.09470, 2019.}

\end{thebibliography}

\def\cydot{\leavevmode\raise.4ex\hbox{.}}
  \def\cydot{\leavevmode\raise.4ex\hbox{.}} \def\cprime{$'$}

\end{document}